\documentclass[nonacm]{acmart}

\copyrightyear{2025}
\acmYear{2025}
\setcopyright{cc}
\setcctype{by}
\acmConference[KDD '25]{Proceedings of the 31st ACM SIGKDD Conference on Knowledge Discovery and Data Mining V.2}{August 3--7, 2025}{Toronto, ON, Canada}
\acmBooktitle{Proceedings of the 31st ACM SIGKDD Conference on Knowledge Discovery and Data Mining V.2 (KDD '25), August 3--7, 2025, Toronto, ON, Canada}
\acmDOI{10.1145/3711896.3736940}
\acmISBN{979-8-4007-1454-2/2025/08}

\settopmatter{printacmref=false}

\renewcommand\footnotetextcopyrightpermission[1]{} 
\pagestyle{plain} 

\usepackage{times}  
\usepackage{helvet}  
\usepackage{courier}  
\usepackage{graphicx} 
\usepackage{xcolor,ifthen}
\usepackage{multirow}
\usepackage{balance}
\usepackage{amsthm, thmtools}
\usepackage{amsfonts}
\usepackage{paralist}
\usepackage{nameref}

\usepackage{amsmath}
\urlstyle{rm} 
\usepackage{natbib}  
\usepackage{caption} 

\def\notationmismatch{true} 
\ifdefined\notationmismatch

\else

\fi

\usepackage{url}
\usepackage{amsfonts}
\usepackage{tikz}
\usetikzlibrary{decorations.text, arrows.meta}
\usepackage{pgfplots}
\pgfplotsset{compat=newest}
\usepgfplotslibrary{fillbetween}
\usepgfplotslibrary{statistics}
\usetikzlibrary{pgfplots.statistics, pgfplots.colorbrewer} 
\usepackage{pgfplotstable}

\usepackage{bm,mathtools}
\usepackage{algorithmic}
\usepackage{subcaption}
\definecolor{ForestGreen}{rgb}{0.1333,0.5451,0.1333}
\definecolor{DarkRed}{rgb}{0.8,0,0}
\definecolor{Red}{rgb}{1,0,0}
\usepackage{cleveref}
\usepackage{xspace}
\usepackage[normalem]{ulem}
\usepackage{comment}
\usetikzlibrary{intersections}
\usepgfplotslibrary{fillbetween}
\usepackage{cancel}



\newcommand{\brac}[1]{\left[ #1 \right]}
\newcommand{\curly}[1]{\left\{ #1 \right\}}
\newcommand{\paren}[1]{\left( #1 \right)}





\usepackage{footnote}
\makesavenoteenv{tabular}
\makesavenoteenv{table}

\def\E{\mathbb{E}}


\def\ShowComment{true} 

\ifdefined\ShowComment

\def\antonio#1{\marginpar{$\leftarrow$\fbox{af}}\footnote{$\Rightarrow$~{\sffamily #1 --Antonio}}}
\def\jaya#1{\marginpar{$\leftarrow$\fbox{jp}}\footnote{$\Rightarrow$~{\sffamily #1 --Jaya}}}
\def\jose#1{\marginpar{$\leftarrow$\fbox{ja}}\footnote{$\Rightarrow$~{\sffamily #1 --Jose}}}
\def\juan#1{\marginpar{$\leftarrow$\fbox{jr}}\footnote{$\Rightarrow$~{\sffamily #1 --Juan}}}
\def\sergio#1{\marginpar{$\leftarrow$\fbox{sd}}\footnote{$\Rightarrow$~{\sffamily #1 --Sergio}}}
\def\mohit#1{\marginpar{$\leftarrow$\fbox{md}}\footnote{$\Rightarrow$~{\sffamily #1 --Mohit}}}
\def\rosa#1{\marginpar{$\leftarrow$\fbox{rl}}\footnote{$\Rightarrow$~{\sffamily #1 --Rosa}}}

\else

\def\antonio#1{}
\def\jaya#1{}
\def\jose#1{}
\def\juan#1{}
\def\sergio#1{}
\def\mohit#1{}
\def\rosa#1{}

\fi



\makeatletter
\def\thmt@refnamewithcomma #1#2#3,#4,#5\@nil{%
	\@xa\def\csname\thmt@envname #1utorefname\endcsname{#3}%
	\ifcsname #2refname\endcsname
	\csname #2refname\expandafter\endcsname\expandafter{\thmt@envname}{#3}{#4}%
	\fi
}
\makeatother

\declaretheorem[refname={Theorem,Theorems},Refname={Theorem,Theorems},name={Theorem}]{theorem}

\declaretheorem[refname={Lemma,Lemmas},Refname={Lemma,Lemmas},name={Lemma}]{lemma}

\declaretheorem[refname={Corollary,Corollaries},Refname={Corollary,Corollaries},name={Corollary}]{corollary}

\declaretheorem[refname={Assumption,Assumptions},Refname={Assumption,Assumptions}]{assumption}

\declaretheorem[refname={Definition,Definitions},Refname={Definition,Definitions},name={Definition}]{defn}

\declaretheorem[numberlike=theorem,refname={Claim,Claims},Refname={Claim,Claims}]{claim}

\def\Paren#1{\left( #1 \right)}		
\def\Brack#1{\left[ #1 \right]}	
\def\E{\mathbb{E}}			

\def\cases#1{C_{#1}}
\def\reach#1{R_{#1}}
\def\Cases#1{C_{#1}}
\def\Reach#1{R_{#1}}

\def\method{\mathcal{M}}
\def\MoR{\ensuremath{\mathrm{MoR}}\xspace}
\def\RoS{\ensuremath{\mathrm{RoS}}\xspace}

\def\pdeg#1{P_\text{deg}\Paren{#1}}


\newcommand{\prevalence}{{{prevalence}}\xspace}
\newcommand{\ratio}[1][]{\ensuremath{\rho #1}\xspace}
\newcommand{\estimate}[2]{\ensuremath{\hat{\rho}_{#1} #2}\xspace}
\newcommand{\error}[3]{\ensuremath{\mathcal{E}_{#1}^{#3} #2}\xspace}

\usepackage{enumitem}
\setlist[itemize]{noitemsep, topsep=0pt, leftmargin=3.5mm}

\usepackage{thm-restate}
\newcommand{\proofinappendix}{\em (Proof in \Cref{sec:omitted-proofs}.)}

\newtheorem{definition}{Definition}


\newif\ifannote
\annotetrue  

\ifannote
    
    \newcommand{\anncomment}[3]{{\color{#1}[#2: #3]}}
\else
    
    \newcommand{\anncomment}[3]{}
\fi

\author{Sergio Díaz-Aranda}
\orcid{0000-0002-6919-9236}
\affiliation{%
  \institution{IMDEA Networks Institute\\
  Universidad Carlos III de Madrid}
  \state{Madrid}
  \country{Spain}
}
\email{sergio.diaz@imdea.org}

\author{Juan Marcos Ramirez}
\orcid{0000-0003-0000-1073}
\affiliation{%
  \institution{IMDEA Networks Institute}
  \streetaddress{Av. Mar Mediterr\'aneo 22}
  \city{Legan\'es}
  \state{Madrid}
  \country{Spain}
}
\email{juan.ramirez@imdea.org}

\author{Mohit Daga}
\orcid{0000-0002-7579-156X}
\affiliation{%
  \institution{KTH Royal Institute of Technology}
  \department{Division of Theoretical Computer Science, EECS, KTH}
  \city{Stockholm}
  \country{Sweden}
}
\email{mdaga@kth.se}

\author{Jaya Prakash Champati}
\orcid{0000-0002-5127-8497}
\affiliation{%
  \institution{University of Victoria}
  \city{Victoria}
  \state{BC}
  \country{Canada}
}
\email{jpchampati@uvic.ca}

\author{Jose Aguilar}
\orcid{0000-0003-4194-6882}
\affiliation{%
  \institution{IMDEA Networks Institute}
  \streetaddress{Av. Mar Mediterr\'aneo 22}
  \city{Legan\'es}
  \state{Madrid}
  \country{Spain}
}
\affiliation{%
  \institution{Universidad de Los Andes}
  \city{M\'erida}
  \country{Venezuela}
}
\email{aguilar@ula.ve}

\author{Rosa Lillo}
\orcid{0000-0003-0802-4691}
\affiliation{%
  \institution{Universidad Carlos III de Madrid}
  \department{uc3m-Santander Big Data Institute}
  \city{Getafe}
  \state{Madrid}
  \country{Spain}
}
\email{rosaelvira.lillo@uc3m.es}

\author{Antonio {Fern\'andez Anta}}
\orcid{0000-0001-6501-2377}
\affiliation{%
  \institution{IMDEA Software Institute\\
  IMDEA Networks Institute}
  \state{Madrid}
  \country{Spain}
}
\email{antonio.fernandez@imdea.org}

\title{Error Bounds for the Network Scale-Up Method}

\date{}

\begin{document}

\begin{abstract}
Epidemiologists and social scientists have used the Network Scale-Up Method (NSUM) for over thirty years to estimate the size of a hidden sub-population within a social network. This method involves querying a subset of network nodes about the number of their neighbors belonging to the hidden sub-population. In general, NSUM assumes that the social network topology and the hidden sub-population distribution are well-behaved; hence, the NSUM estimate is close to the actual value. However, bounds on NSUM estimation errors have not been analytically proven. This paper provides analytical bounds on the error incurred by the two most popular NSUM estimators. These bounds assume that the queried nodes accurately provide their degree and the number of neighbors belonging to the hidden sub-population. Our key findings are twofold. First, we show that when an adversary designs the network and places the hidden sub-population, then the estimate can be a factor of $\Omega(\sqrt{n})$ off from the real value (in a network with $n$ nodes). Second, we also prove error bounds when the underlying network is randomly generated, showing that a small constant factor can be achieved with high probability using samples of logarithmic size $O(\log n)$. We present improved analytical bounds for Erdős–Rényi and Scale-Free networks. Our theoretical analysis is supported by an extensive set of numerical experiments designed to determine the effect of the sample size on the accuracy of the estimates in both synthetic and real networks.
\end{abstract}

\keywords{Aggregated Relational Data; Network Scale-Up Method; Ratio of Sums; Mean of Ratios; Error Bounds; Hidden Population}



\maketitle

\newcommand\kddavailabilityurl{https://doi.org/10.5281/zenodo.15575415}

\ifdefempty{\kddavailabilityurl}{}{
\begingroup\small\noindent\raggedright\textbf{KDD Availability Link:}\\
The source code of this paper has been made publicly available at \url{\kddavailabilityurl}.
\endgroup
}

\section{Introduction}

In a society, estimating the size of hidden or hard-to-reach sub-populations (such as individuals affected by disaster, disease, or clandestine networks) is critical for developing targeted strategies to address the challenges or threats associated with these hidden groups. Direct enumeration methods for estimating the hidden sub-population size are often impractical due to these groups' dispersed, unreachable, or secretive nature. The Network Scale-Up Method (NSUM) \cite{bernard1988many,bernard1991estimating} is an innovative solution that addresses this challenge by leveraging the social network to estimate the size.
Instead of directly querying individuals whether they belong to the hidden population, NSUM uses \textit{indirect reporting}.
NSUM collects data through queries of the type ``How many neighbors do you know?'' and ``How many of those neighbors belong to the hidden sub-population?" The answers to these questions form the Aggregated Relational Data (ARD), which is used to estimate the size of the hidden sub-population. Using indirect reporting increases the privacy and the proportion of the population a survey reaches, which, combined with the simplicity of the NSUM estimator, makes it very powerful.
Thus, it has been widely used in applications including estimating the number of affected people during disasters \cite{bernard1988many, bernard2001estimating}, social networks audience \cite{DBLP:conf/chi/BernsteinBBK13}, security profiling \cite{crawford2016graphical}, and epidemic prevalence and evolution (e.g., AIDS \cite{unaids2010guidelines}, COVID-19 \cite{garcia2021estimating}). 

There is a suite of NSUM methods that have been used in the literature \cite{zheng2006many,mccormick2010many,mccormick2012latent,maltiel2015estimating}, and they differ in the way they use the ARD to provide the estimates \cite{laga2021thirty-2021-jasa}. The most popular NSUM estimators are the Mean of Ratios (\MoR) \cite{killworth1998social} and Ratio of Means (\RoS) \cite{killworth1998estimation} estimators. While \MoR was the first estimator proposed for \prevalence \cite{killworth1998social,laga2021thirty-2021-jasa}, relying on averaging the naive estimates of each individual, \RoS became the more widely used estimator in the NSUM literature \cite{ahmadi2019twelve,ezoe2012population,killworth1998estimation,snidero2012scale,wang2015application}.

Although NSUM has been widely used, to the best of our knowledge, no existing work provides analytical guarantees on the quality of the aforementioned NSUM estimators. The main challenge in analyzing the NSUM estimate is due to the indirect reporting, which may result in multiplicities in the reported hidden nodes,
thereby introducing correlations in the ARD. Recently, the authors in \cite{neurips-chen-2016,srivastava2024nowcasting} have provided analytical bounds from indirect reporting.
Chen et al.  \cite{neurips-chen-2016} estimated the size of the network and the hidden communities, providing analytical guarantees for Erdős-Rényi and Stochastic Block Model topologies. However, they assume that the number of hidden nodes in the sample is known, which is not the case in ARD.
Srivastava et al.~\cite{srivastava2024nowcasting} estimated the trend (not the value) of the hidden sub-population \prevalence over time. They show that, if the degree of the social network has a constant mean over time and its variance is small,
the trend of the hidden sub-population \prevalence can be estimated using the ARD, and the error is smaller than with direct reporting.


\noindent
\textbf{Our Contributions}
This paper partially fills the void of NSUM analytical bounds by providing a theoretical performance analysis for NSUM estimators in general networks.
Let $\ratio{}$ denote the fraction of the population that belongs to the hidden sub-population and $\estimate{}$ is its NSUM estimate.
The \textit{error} of the estimator $\estimate{}$ is how far it is from $\ratio{}$ (as the ratio between them, or equivalently, the factor that transforms one into the other).
These are the main results.
\begin{itemize}
    \item A surprising result we prove is that, in adversarial settings, any NSUM estimator has an error that grows with $\sqrt{n}$, where $n$ is the size of the network, even when the ARD is collected from all the nodes in the network. 
    \item For general random networks with a fixed hidden population, we provide analytically probabilistic upper bounds for $\MoR$ and $\RoS$ estimators. We address the challenge of the duplicates by proving \textit{negative correlation} between the underlying random variables. The results upper bound the probability that the error is above a fixed threshold $(1+\epsilon)$. Then, we use these bounds to show that, with both estimators, it is enough to collect ARD from a logarithmic number of nodes in a random network to have a small error of $(1+\epsilon)$ with high probability. We also show that $\RoS$ has smaller error bounds than $\MoR$.
    \item As special cases of these general random networks, we study two popular classes of random networks, Scale-Free \cite{clauset2009power} and Erdős–Rényi networks \cite{erdosrenyi}, and compute probabilistic error bounds for \MoR and \RoS estimates. 
    \item
    We evaluate the goodness of the bounds using simulation. We observe that the analytical bounds are close to the actual empirical measures. We also observe that, as hinted by the analytical bounds, the estimates obtained with $\RoS$ have lower errors than those obtained with $\MoR$, which is especially evident in the Scale-Free networks.
    \item
    Finally, we apply the bounds to real networks. We observe that the bounds are still close to the actual empirical measures, even though the networks may not exactly be the type of random networks assumed in the analysis.
\end{itemize}

\section{Model and Definitions\label{s-contribs}}


\noindent
\textbf{Network model.} 
We model a social network as a directed graph $G=(V,E)$\footnote{We will indistinctly use graph and network, vertex and node, and edge and link.},
in which $V$ is the population under study.
A subset $H \subseteq V$ is the \emph{hidden sub-population} of the network, which is the set of network members $G$ with a given property. 
A directed edge $(u,v) \in E$ implies that node $v$ knows node $u$ (in-neighbor) and whether it belongs to the hidden sub-population $H$. 
A \textit{bidirectional network} is a directed graph, where all edges are bidirectional, i.e., $\forall u,v \in V, (u,v) \in E$ iff $(v,u) \in E$.

\noindent
\textbf{Sampling.} 
A subset $S \subseteq V$ of the population will be selected uniformly at random from $V$ and asked to report information about their in-neighbors in $G$. In the special case of \emph{full sampling,} we have that $S = V$. Each sampled vertex $v$ reports:
\begin{itemize}
    \item the number of in-neighbors $v$ has in $G$, denoted by $\reach{v}$, 
    \item the number of in-neighbors that belong to the hidden population $H$, denoted by $\cases{v}$.
\end{itemize} 
To preserve its privacy, a sampled vertex $v$ \emph{does not count itself in $\reach{v}$ and $\cases{v}$}. 
\begin{defn}
We define the following random variables. We assume the in-neighbors of $v$ numbered from 1 to $\reach{v}$ for easy reference.
\begin{itemize}
    \item
    For each vertex $v \in S$,  if the $j$-th in-neighbor of $v$, for $j \in \{1,2,\ldots,\reach{v}\}$, belongs to $H$, then the indicator random variable $X_{v_j} = 1$; otherwise, $X_{v_j} = 0$. Observe that $\cases{v}=\sum_{j=1}^{\reach{v}} X_{v_j}$.
    
    \item
    For each vertex $v \in S$, we define $Y_v \triangleq \cases{v} / \reach{v}$. 
\end{itemize}
\end{defn}
\noindent
\textbf{Problem definition.} An instance $I=(G,H)$ is a pair of a network and a hidden sub-population. The \emph{\prevalence} of instance $I$ is defined as $\ratio{(I)} \triangleq |H|/|V|$. $I[S]$ represents the set of pairs $\{ (C_v,R_v) :\ v\in S \}$ of the instance I obtained from the sample $S$.
\begin{defn}[Problem]
    Given an \emph{unknown} instance $I=(G,H)$ with $n=|V|$ and $h = |H|$, and a \emph{sampled} vertex set $S$, we are interested in finding an accurate 
    estimate $\estimate{}{(I[S])}$ of the \prevalence $\ratio{(I)}=h/n$.
\end{defn}

\noindent
\textbf{NSUM Estimators.}
The estimators studied in this work for estimating the \prevalence $\ratio{(I)}$ of instance $I=(G,H)$ are:
\begin{itemize}
    \item The \textit{Mean of Ratios} (\MoR) estimate is the average of the individual ratios of the sampled vertices, defined as 
    $$
    \estimate{\MoR}{(I[S])} \triangleq \frac{1}{|S|} \sum_{v \in S} \frac{\cases{v}}{\reach{v}} =  \frac{1}{|S|} \sum_{v \in S} Y_v.
    $$
    \item The \textit{Ratio of Sums} (\RoS) estimate is the ratio of the sum of the number of hidden neighbors over the sum of in-degrees of the sampled vertices, defined as 
    $$
    \estimate{\RoS}{(I[S])} \triangleq  \frac{\sum_{v \in S} \cases{v}}{\sum_{v \in S} \reach{v}} = \frac{\sum_{v \in S} \sum_{j=1}^{\reach{v}} X_{v_j}}{\sum_{v \in S} \reach{v}}.$$
\end{itemize}

\noindent
\textbf{Errors.} 
We define two different types of errors: upper error and lower error. Let $\method$ be any estimation method and $I=(G,H)$ an instance. Let $\estimate{\method}{(I[S])}$ be the estimate of $\ratio{(I)}$ obtained by $\method$.  We define  $\error{\method}{(I,S)}{+} \triangleq \max \paren{1,\frac{\estimate{\method}{(I[S])}}{\ratio{(I)}}}$ as the \textit{upper error} and $\error{\method}{(I,S)}{-} \triangleq \max \paren{ 1,\frac{\ratio{(I)}}{\estimate{\method}{(I[S])}}}$ as the \textit{lower error.} 
Note that $\error{\method}{(I,S)}{+}$ quantifies the factor by which $\estimate{\method}{(I[S])}$ is higher than $\ratio{(I)}$, and $\error{\method}{(I,S)}{-}$ quantifies the factor by which $\estimate{\method}{(I[S])}$ is lower than $\ratio{(I)}$.
Finally, we define the \textit{error of method $\method$ applied to instance $I$ and sample $S$} as 
\begin{eqnarray*}
\error{\method}{(I,S)}{} &\triangleq& \max \paren{\error{\method}{(I,S)}{+}, \error{\method}{(I,S)}{-}} \\ 
&=& \max \paren{\frac{\ratio{(I)}}{\estimate{\method}{(I[S])}},\frac{\estimate{\method}{(I[S])}}{\ratio{(I)}}}.
\end{eqnarray*}

We will denote $\error{\method}{(I)}{+}$, $\error{\method}{(I)}{-}$, and $\error{\method}{(I)}{}$ to the errors $\error{\method}{(I,V)}{+}$, $\error{\method}{(I,V)}{-}$, and $\error{\method}{(I,V)}{}$ corresponding to the full sampling case.

We say that $f(n,h)$ is an upper bound of the error of $\method$, i.e., $\error{\method}{}{} \leq f(n,h)$, if for all $I=(G=(V,E),H)$ such that $|V| = n$ and $|H| = h$, it holds that $\error{\method}{(I,S)}{} \leq f(n,h)$. Conversely, we say that the error of $\method$ is lower bounded by $f(n,h)$, i.e., $\error{\method}{}{} \geq f(n,h)$, if there exists $I=(G,H)$ and sample S such that $\error{\method}{(I,S)}{} \geq f(n,h)$. 

\noindent
\textbf{Lower Bound for Adversarial Instances.}
Let us consider any NSUM estimation method $\method$. We allow $\method$ to use all pairs $(\reach{v},\cases{v})$, for all $v\in V$, in any manner the method deems fit. We show that there are instances for which the error is $\error{\method}{}{} =\Omega(\sqrt{n})$. 

\begin{restatable}{theorem}{lowerany} \label{t-lower-any}
Any deterministic estimation method $\method$ that only uses the set $\{(\reach{v},\cases{v}), \forall v \in V\}$ to estimate the \prevalence $\ratio{}$ has error $\error{\method}{}{} \geq \sqrt{(n-1)/2}$.
\begin{proof}
We construct a bidirectional network $G=(V,E)$ and pair it with two different hidden population sets $H_1 \subset V$ and $H_2 \subset V$ resulting in instances $I_1=(G,H_1)$ and $I_2=(G,H_2)$. These two instances have two different prevalences $\ratio{(I_1)} = |H_1|/n$ and $\ratio{(I_2)} = |H_2|/n$. Our construction ensures that $\{(\reach{v},\cases{v}), \forall v \in V\}$ is the same for both choices of $H_1$ and $H_2$. Hence, the \textit{deterministic} method $\method$ should give the same estimate for both $I_1$ and $I_2$, i.e., $\estimate{\method}{(I_1[V])}=\estimate{\method}{(I_2[V])}$. We show that for all possible deterministic methods $\method$, this value is either far from $\ratio{(I_1)}$ or far from $\ratio{(I_2)}$.

The network $G$ is the union of a set of $k$ nodes (set $V_c$) forming a clique, a set of $k$ additional nodes (set $V_a$), each connected to a different clique node (and only to this node), and one single node $s$ connected to all the clique nodes $V_c$; see Fig. \ref{fig:thm1}. Observe that $n=2k+1$, and hence $k=\Theta(n)$. We consider the instances $I_1=(G,\{s\})$ and $I_2=(G,V_a)$. In $I_1$, the set of hidden nodes contains only node $s$, and hence the prevalence is $\ratio{(I_1)}=1/n=1/(2k+1)$. In $I_2$, all the $k$ nodes in $V_a$ belong to the hidden population, and hence the prevalence is $\ratio{(I_2)}=k/(2k+1)$.

\begin{figure} [ht] 
    \begin{center}
  \includegraphics[scale = 0.42]{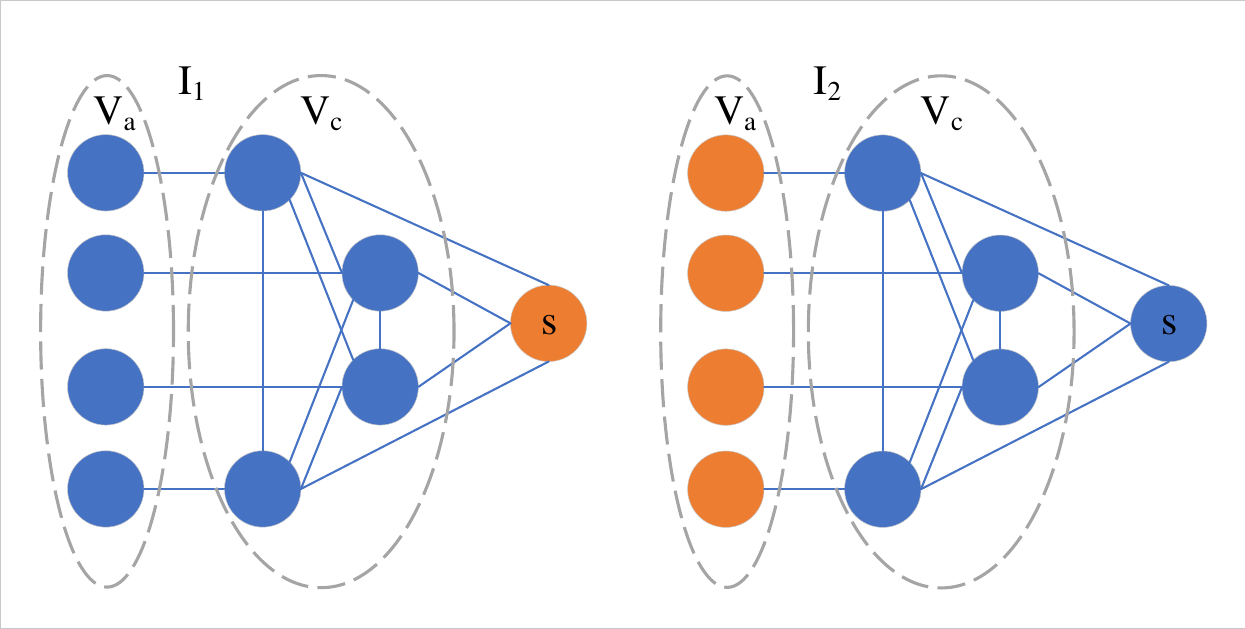}
    \end{center} \caption{Example instances $I_1$ and $I_2$ with $k=4$ and $n = 9$. All the edges are bidirectional, and the hidden nodes are in orange.} \label{fig:thm1}
    \Description{}
\end{figure}

However, observe that the pairs $(\reach{v},\cases{v})$ reported by all nodes are exactly the same in both $I_1$ and $I_2$: the graph $G$ is the same, $s$ and the nodes in $V_a$ have no hidden neighbor (and hence return $\cases{v}=0$), and in both cases, the nodes in $V_c$ have exactly one hidden neighbor (and hence return $\cases{v}=1$). Hence, it holds that $\estimate{\method}{(I_1)}=\estimate{\method}{(I_2)}$. Let us denote this estimate as $\estimate{\method}{}$ and its error as $\error{\method}{}{}$.

To bound the error, let us first consider $\estimate{\method}{} \notin [\ratio{(I_1)}),\ratio{(I_2)}]$. There are two cases:
if $\estimate{\method}{} < \ratio{(I_1)}$ then $\error{\method}{(I_2)}{} = \ratio{(I_2)}/\estimate{\method}{} > \ratio{(I_2)}/\ratio{(I_1)}=k$; if $\estimate{\method}{} > \ratio{(I_2)}$ then $\error{\method}{(I_1)}{} = \estimate{\method}{}/\ratio{(I_1)} > \ratio{(I_2)}/\ratio{(I_1)}=k$.
On the other hand, when $\estimate{\method}{} \in [\ratio{(I_1)}),\ratio{(I_2)}]$,
$$\error{\method}{(I_1)}{+} = \max\left(1,\frac{\estimate{\method}{}}{\ratio{(I_1)}}\right)=(2k+1)\ \estimate{\method}{},$$ 
and 
$$
\error{\method}{(I_2)}{-} = \max\left(1,\frac{\ratio{(I_2)}}{\estimate{\method}{}}\right) =\frac{k}{(2k+1)\estimate{\method}{}}.
$$
The error incurred by method $\method$ is lower bounded by the maximum of $\error{\method}{(I_1)}{+}$ and $\error{\method}{(I_2)}{-}$ for both $I_1$ and $I_2$. This maximum value is minimized when $\estimate{\method}{}=\sqrt{k}/(2k+1)$, which implies that $\error{\method}{}{} \geq \max(\error{\method}{(I_1)}{+}, \error{\method}{(I_2)}{-}) \geq \sqrt{k}$. From our construction $k = (n-1)/2$, and the error is at least $\error{\method}{}{} \geq \sqrt{k}=\sqrt{(n-1)/2}$.
\end{proof}
\end{restatable}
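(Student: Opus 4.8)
The plan is to run an indistinguishability (adversarial) argument: exhibit a single graph $G$ together with two hidden populations $H_1$ and $H_2$ whose prevalences differ by a $\Theta(n)$ factor, yet which induce exactly the same aggregated relational data $\{(\reach{v},\cases{v}) : v \in V\}$. Since $\method$ is deterministic and sees only this data, it must return the same estimate on both instances, so its error must be large on at least one of them. Concretely, I would take $G$ to consist of a clique $V_c$ on $k$ vertices, $k$ further ``pendant'' vertices $V_a$, each joined to a distinct clique vertex (and to nothing else), and one extra vertex $s$ joined to every clique vertex; all edges bidirectional, so $n = 2k+1$ and $k = \Theta(n)$. Set $I_1 = (G,\{s\})$ with prevalence $\ratio{(I_1)} = 1/(2k+1)$ and $I_2 = (G, V_a)$ with prevalence $\ratio{(I_2)} = k/(2k+1)$.

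The key verification is that the two instances report identical data. Degrees coincide because the graph is literally the same. For the hidden-neighbor counts: in $I_1$ every clique vertex has exactly one hidden in-neighbor, namely $s$, while $s$ itself and every pendant vertex have none; in $I_2$ every clique vertex again has exactly one hidden in-neighbor, namely its attached pendant vertex, while $s$ (whose in-neighbors are all in $V_c$) and every pendant vertex (whose only in-neighbor is a clique vertex) have none. So in both instances clique vertices report $(k{+}1,1)$, pendant vertices report $(1,0)$, and $s$ reports $(k,0)$. Hence $\estimate{\method}{(I_1[V])} = \estimate{\method}{(I_2[V])}$; call this common value $\estimate{\method}{}$, and note its error is bounded below by $\max\paren{\error{\method}{(I_1)}{}, \error{\method}{(I_2)}{}}$.

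It then remains to show no single $\estimate{\method}{}$ can be simultaneously close to $\ratio{(I_1)}$ and $\ratio{(I_2)}$. If $\estimate{\method}{}$ lies outside $[\ratio{(I_1)},\ratio{(I_2)}]$, then one of the two ratio-errors already exceeds $\ratio{(I_2)}/\ratio{(I_1)} = k$. If it lies inside, I would balance the upper error on $I_1$, namely $(2k{+}1)\estimate{\method}{}$, against the lower error on $I_2$, namely $k/\paren{(2k{+}1)\estimate{\method}{}}$; the maximum of these two is minimized by equalizing them, which gives $\estimate{\method}{} = \sqrt{k}/(2k{+}1)$ and a common value of $\sqrt{k}$. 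Either way $\error{\method}{}{} \ge \sqrt{k} = \sqrt{(n-1)/2}$. The step demanding the most care is the indistinguishability check --- confirming that both the degree sequence and the per-node hidden-neighbor counts genuinely match across $I_1$ and $I_2$, including the boundary cases of $s$ and the pendant vertices --- since the whole argument collapses if $\method$ could distinguish the instances from the ARD alone; the subsequent min--max optimization over $\estimate{\method}{}$ is elementary.
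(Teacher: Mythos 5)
Your proposal is correct and follows essentially the same route as the paper's proof: the identical clique-plus-pendants-plus-$s$ construction, the same indistinguishable instances $I_1=(G,\{s\})$ and $I_2=(G,V_a)$, and the same inside/outside case analysis with the balancing value $\estimate{\method}{}=\sqrt{k}/(2k+1)$ giving $\error{\method}{}{}\geq\sqrt{k}=\sqrt{(n-1)/2}$. Your explicit listing of the reported pairs $(k{+}1,1)$, $(1,0)$, $(k,0)$ is a slightly more detailed indistinguishability check than the paper's, but the argument is the same.
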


The lower bound in Thm.~\ref{t-lower-any} is surprising because it is valid for the full sampling scenario where $S = V$. This raises the question regarding the goodness of the estimates provided by $\MoR$ and $\RoS$, which only use $\{(\reach{v},\cases{v}), \forall v \in S\}$. We address this question and show that the error can be bounded for these methods for random networks, with a small number of samples. We include two popular classes of random networks, Scale-Free \cite{clauset2009power} and Erdős–Rényi networks \cite{erdosrenyi}.

\noindent
\textbf{Random networks.} 
In the sequel, we study instances $I=(G,H)$ in which $G$ is a random network over a fixed population $V$ and a given hidden sub-population $H \subseteq V$. To be precise, the in-neighbors of each vertex $v \in V$ of the network $G$ are selected independently as follows:
\begin{itemize}
    \item Vertex $v$ selects a degree $\reach{v}$ from a given degree distribution $P_\text{deg}$.
    \item Given $\reach{v}$, vertex $v$ selects $\reach{v}$ vertices uniformly at random from $V \setminus \{v\}$ to be its set of in-neighbors. 
\end{itemize}
This construction is general in that we build different graph families by choosing different degree distributions. As special cases of these general random networks, we study two popular classes of random networks, Scale-Free \cite{clauset2009power} and Erdős–Rényi networks \cite{erdosrenyi}.
A directed Scale-Free network is a random graph in which the fraction of nodes with in-degree $k$ is proportional to $k^{-\gamma}$, where $\gamma$ is a small constant (typically, $\gamma \in (2,3)$).
A directed random graph generated with the Erdős–Rényi model $G(n, p)$ has $n$ nodes, and for any two nodes $u,v \in V$, the edge $(u,v)$ is included in the edge set $E$ with probability $p$, independently of any other edge. 

Note that in random networks, the random variables $Y_v$, for all $v \in S$, are dependent. Likewise, $X_{j_v}$, for all $v \in S$ and $j \in \{1,2,\ldots,\reach{v}\}$, are also dependent. Thus, the direct application of concentration bounds to obtain upper bounds for the errors is not possible. We address this challenge by using conditional independence in the random networks and showing a negative correlation for the above variables. 




\section{Upper Bounds on the Error of \MoR and \RoS in Random Networks}\label{sec:mainresults}

For ease in exposition and conciseness in the expressions, in this section, we prove results for random networks in which the support of the degree distribution $P_\text{deg}$ is positive integers, i.e., the degree of any node is non-zero. We state it in the following assumption:
\begin{assumption}\label{non-zero-degree-assumption}
    $R_v \geq 1$ for all $v \in V$.
\end{assumption}
For random networks without the above assumption, but with bounded probability of zero degree, one can modify the error bounds derived in this section to accommodate those scenarios. The omitted proofs can be found in the Appendix.


\subsection{Upper Bound on the Error of the \MoR Estimator}
Recall that $\estimate{\MoR}{(I[S])} = \frac{1}{|S|} \sum_{v \in S} \frac{\cases{v}}{\reach{v}} = \frac{1}{|S|} \sum_{v \in S} Y_v$. We first show that, for random networks, $\estimate{\MoR}{(I[S])}$ is an \textit{unbiased estimate} of $\ratio{(I)}$, i.e., $\E[\estimate{\MoR}{(I[S])}] = \E[Y_v]=\ratio{(I)}$.
Since a vertex $v$ cannot select itself as an in-neighbor (i.e., there is no self-reporting), the distribution of the random variables $X_{vj}$ depends on whether $v$ is in $H$. Using combinatorial arguments, we show that $\E[X_{v_j} \mid v \in H] = \frac{h-1}{n-1}$, and $\E[X_{v_j} \mid v \notin H] = \frac{h}{n-1}$, for all $j \in \{1,2,\ldots R_v\}$.
%
%
%
From this result, we obtain that the expectation of $X_{vj}$ and $Y_{v}$ are equal to $\ratio{(I)}$.



\begin{restatable}{lemma}{lemexpXYv} \label{lem:exp-Y-v}
For all $v \in S$ and $j \in \{1,2,\ldots,R_v\}$, $\E[X_{v_j}]=\E[Y_{v}] = \frac{h}{n} = \ratio{(I)}$.
\end{restatable}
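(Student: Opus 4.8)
The plan is to compute $\E[X_{v_j}]$ by conditioning on whether $v \in H$, using the two conditional expectations stated just above the lemma, and then deduce $\E[Y_v]$ by linearity together with the definition $\cases{v} = \sum_{j=1}^{R_v} X_{v_j}$. Concretely, I would write $\E[X_{v_j}] = \Pr[v \in H]\,\E[X_{v_j} \mid v \in H] + \Pr[v \notin H]\,\E[X_{v_j} \mid v \notin H]$. Since $H$ is a fixed subset of the fixed vertex set $V$ and $v$ is a uniformly random sample from $V$, we have $\Pr[v \in H] = h/n$ and $\Pr[v \notin H] = (n-h)/n$. Substituting the given values $\E[X_{v_j}\mid v\in H] = \frac{h-1}{n-1}$ and $\E[X_{v_j}\mid v\notin H] = \frac{h}{n-1}$ gives
\[
\E[X_{v_j}] = \frac{h}{n}\cdot\frac{h-1}{n-1} + \frac{n-h}{n}\cdot\frac{h}{n-1} = \frac{h}{n(n-1)}\bigl((h-1) + (n-h)\bigr) = \frac{h}{n(n-1)}(n-1) = \frac{h}{n},
\]
which is exactly $\ratio{(I)}$. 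Note this does not even require $v_j$ to be distributed identically across $j$; the conditional expectations already hold for each fixed $j \le R_v$.

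For $\E[Y_v]$, recall $Y_v = \cases{v}/\reach{v} = \frac{1}{R_v}\sum_{j=1}^{R_v} X_{v_j}$. Here $R_v$ is itself random (drawn from $P_{\text{deg}}$), so I would condition on the value of $R_v$: for a fixed $r \ge 1$ (which is where Assumption~\ref{non-zero-degree-assumption} is used, to avoid division by zero),
\[
\E[Y_v \mid R_v = r] = \frac{1}{r}\sum_{j=1}^{r} \E[X_{v_j} \mid R_v = r] = \frac{1}{r}\sum_{j=1}^{r} \frac{h}{n} = \frac{h}{n},
\]
where the middle equality uses that the conditional expectation of each indicator $X_{v_j}$, given the degree and given membership of $v$ in $H$, is unchanged by the value of $r$ (the in-neighbors are chosen uniformly at random from $V\setminus\{v\}$ regardless of how many are chosen, so conditioning on $R_v = r$ does not bias whether the $j$-th chosen neighbor lies in $H$). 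Since $\E[Y_v \mid R_v = r] = h/n$ for every $r$ in the support, the tower property gives $\E[Y_v] = h/n = \ratio{(I)}$.

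The only genuinely delicate point — and the step I would be most careful about — is justifying that $\E[X_{v_j} \mid R_v = r, \, v \in H]$ still equals $\frac{h-1}{n-1}$ (and likewise for $v \notin H$) for each $j \le r$, i.e., that conditioning on the realized degree does not skew the per-slot membership probability. This follows from the generative model: once $R_v = r$ is fixed, the set of in-neighbors is a uniformly random $r$-subset of $V \setminus \{v\}$, so the $j$-th listed in-neighbor is a uniformly random element of $V \setminus \{v\}$, and the probability it lies in $H$ is $\frac{h-1}{n-1}$ if $v \in H$ and $\frac{h}{n-1}$ if $v \notin H$ — precisely the unconditional-on-$r$ values. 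Everything else is the elementary arithmetic above, and the same computation shows $\E[X_{v_j}] = h/n$ directly, so the two equalities in the lemma statement follow together.
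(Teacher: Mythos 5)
Your proposal is correct and follows essentially the same route as the paper: the law of total expectation over whether $v \in H$ (using the conditional values $\frac{h-1}{n-1}$ and $\frac{h}{n-1}$) for $\E[X_{v_j}]$, and then conditioning on $\reach{v}=r$ together with linearity and the tower property for $\E[Y_v]$. If anything, you are more explicit than the paper in justifying that conditioning on the realized degree does not alter $\E[X_{v_j}]$, a step the paper uses implicitly when it writes $\E[\cases{v} \mid \reach{v}=r] = \sum_{j=1}^{r}\E[X_{v_j}]$.
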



We note that the variables $Y_v$ are not independent if we do not know the membership of $v$ in $H$. Consider, for example, the case where $|H|=1$ and two vertices $v_1, v_2 \in S$. Then,
$$
\Pr[Y_{v_2}>0 \mid Y_{v_1} >0] < \Pr[Y_{v_2}>0],
$$
because if $Y_{v_1} >0$, it is sure that $v_1 \notin H$. This increases the probability of $v_2 \in H$, reducing the probability that $Y_{v_2}>0$.
%
However, we show that variables $Y_v$, for all $v \in S$, are negatively correlated.
%
%
Using this and the fact that $\E[Y_v]=\ratio{(I)}$, we apply a concentration bound for negatively correlated random variables to obtain an upper bound for the error stated in the following theorem.
%
Let us define the following function
$$
F(x,y) = \paren{\frac{e^{x-1}}{x^{x}}}^{y} + \paren{\frac{e^{\frac{1}{x}-1}}{x^{-1/x}}}^{y}.
$$
\begin{restatable}{theorem}{thmMoRboundmain} \label{thm:MoR-bound-main}
Consider an instance $I=(G,H)$ where $G$ is a random network. If $S \subseteq V$ is sampled uniformly at random, with $m=|S|$, then the \MoR estimator has the following error bound for any $\beta=1+\epsilon > 1$.
\begin{align*}
 &\Pr\brac{\error{\MoR}{(I,S)}{} > \beta } \nonumber 
  \leq 
 F(\beta,m \ratio{(I)}).
\end{align*}
\end{restatable}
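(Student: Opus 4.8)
The plan is to view $\estimate{\MoR}{(I[S])} = \tfrac{1}{m}\sum_{v\in S} Y_v$ as a normalized sum of $[0,1]$-valued random variables and run a two-sided multiplicative Chernoff argument, then glue the two tail estimates together with a union bound. The three structural ingredients are already in place: under \Cref{non-zero-degree-assumption} each $Y_v = \cases{v}/\reach{v}$ is well defined and lies in $[0,1]$ (since $\cases{v}\le\reach{v}$); by \Cref{lem:exp-Y-v} and linearity, $X \triangleq \sum_{v\in S} Y_v$ has mean $\mu \triangleq m\,\ratio{(I)}$; and the variables $\{Y_v\}_{v\in S}$ carry the negative-dependence property established in the preceding discussion, which is exactly what licenses the textbook multiplicative Chernoff bounds for $X$ even though the $Y_v$ are not independent.

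First I would rewrite the error event in terms of $X$. Because $\estimate{\MoR}{(I[S])} = X/m$, we have $\error{\MoR}{(I,S)}{+} > \beta$ iff $X > \beta\mu$, and $\error{\MoR}{(I,S)}{-} > \beta$ iff $X < \mu/\beta$, so a union bound gives
\[
\Pr\brac{\error{\MoR}{(I,S)}{} > \beta} \;\le\; \Pr\brac{X > \beta\mu} \;+\; \Pr\brac{X < \tfrac{1}{\beta}\,\mu}.
\]
For the first summand I would invoke the upper multiplicative Chernoff bound $\Pr[X\ge(1+\delta)\mu]\le(e^{\delta}/(1+\delta)^{1+\delta})^{\mu}$ with $1+\delta=\beta$ (valid for any $\delta>0$), yielding $(e^{\beta-1}/\beta^{\beta})^{m\ratio{(I)}}$. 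For the second I would use the lower-tail bound $\Pr[X\le(1-\delta')\mu]\le(e^{-\delta'}/(1-\delta')^{1-\delta'})^{\mu}$ with $1-\delta'=1/\beta$, i.e. $\delta'=(\beta-1)/\beta\in(0,1)$ since $\beta>1$, which simplifies to $(e^{1/\beta-1}/\beta^{-1/\beta})^{m\ratio{(I)}}$. Adding the two expressions gives exactly $F(\beta, m\ratio{(I)})$, completing the proof; the arithmetic here is just the substitutions $\delta=\beta-1$ and $\delta'=(\beta-1)/\beta$ matched against the two terms of $F$.

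The step I expect to require the most care is justifying that the Chernoff inequalities really do apply to $\{Y_v\}$ on the strength of negative dependence rather than independence. The Chernoff derivation ultimately rests on the moment-generating-function domination $\E\brac{\prod_{v\in S} e^{t Y_v}} \le \prod_{v\in S}\E\brac{e^{tY_v}}$ for $t>0$ (needed for the upper tail) and for $t<0$ (needed for the lower tail). I would make sure the negative-dependence notion proven earlier for the $Y_v$ is strong enough to deliver this in both directions — negative association suffices, because it is preserved under the coordinatewise monotone maps $y\mapsto e^{ty}$ — and flag that as the single non-routine input. Everything else is mechanical.
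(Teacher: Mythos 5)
Your overall plan is the paper's plan: write the error event as the two tails of $Y_S=\sum_{v\in S}Y_v$, whose mean is $m\ratio{(I)}$ by Lemma~\ref{lem:exp-Y-v}, and apply a two-sided multiplicative Chernoff bound; your substitutions $\delta=\beta-1$ and $\delta'=1-1/\beta$ do reproduce the two terms of $F(\beta,m\ratio{(I)})$, which is exactly how the paper concludes via its Theorem~\ref{thm:multiplicative-error-conc-bound}.

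The one step that would not go through as you justify it is precisely the step you flagged. What is actually established for the $Y_v$ (Lemmas~\ref{lem:Y-negativeCorr} and~\ref{lem:Y-negcyl}) is only the product-form inequality $\E\Brack{\prod_{v\in S_0}Y_v}\le\prod_{v\in S_0}\E\Brack{Y_v}$ together with the same inequality for the variables $1-Y_v$ (negative cylinder dependence, Definition~\ref{def:negcyl}) --- not negative association. This weaker property does not by itself give the MGF factorization $\E\Brack{\prod_{v} e^{tY_v}}\le\prod_{v}\E\Brack{e^{tY_v}}$: expanding the exponentials produces mixed moments $\E\Brack{\prod_v Y_v^{\alpha_v}}$ with exponents $\alpha_v\ge 2$, which the subset-product inequality does not control, and your appeal to closure of negative association under coordinatewise monotone maps presupposes a property that has not been proven for these variables and would need its own argument. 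The paper closes this gap differently in Theorem~\ref{thm:multiplicative-error-conc-bound}: since $0\le Y_v\le 1$, one has $Y_v^{\alpha_v}\le Y_v$, so every moment $\E\Brack{Y_S^k}$ is dominated by the corresponding moment of a sum of \emph{independent} Bernoulli variables with the same means, and the standard Chernoff computation is run on that independent surrogate; the lower tail is obtained by applying the same argument to $1-Y_v$, which is exactly why negative cylinder dependence (and not merely negative correlation of the $Y_v$) is proven in Lemma~\ref{lem:Y-negcyl}. If you replace your ``negative association suffices'' justification by this moment-domination argument (or supply a separate proof of negative association), the remainder of your proof is correct and coincides with the paper's.
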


\subsection{Upper Bound on the Error with the \RoS Estimator}
For the \RoS estimator, we define the random variables $\Reach{S}=\sum_{v \in S} \reach{v}$ and $\Cases{S}=\sum_{v \in S} \cases{v}=\sum_{v \in S} \sum_{j=1}^{\reach{v}} X_{vj}$. By definition, the estimator is 
$\estimate{\RoS}{(I[S])}=\Cases{S} / \Reach{S}$.
It was already shown in \Cref{lem:exp-Y-v} that the expectation of $X_{vj}$ is equal to $\ratio{(I)}$. We can also show that the variables $X_{vj}$ are negatively correlated.
%
%
%
Then, we can use a concentration bound to prove the following theorem.
%
\begin{restatable}{theorem}{thmRoSboundmain} \label{thm:RoS-bound-main}
Consider an instance $I=(G,H)$ where $G$ is a random network. If $S \subseteq V$ is sampled uniformly at random, with $m=|S|$, then for $\Reach{S}=\sum_{v \in S} \reach{v}$ and any $\beta=1+\epsilon > 1$, the \RoS estimator has the following error bound.
$$
\Pr\brac{\error{\RoS}{(I,S)}{} > \beta } 
  \leq \\ \sum_{\Reach{}} 
   F(\beta,\Reach{} \ratio{(I)})
  \Pr\Brack{\Reach{S}=\Reach{}}.
$$
\end{restatable}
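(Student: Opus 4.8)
The plan is to condition on the random total in-degree $\Reach{S}=\sum_{v\in S}\reach{v}$ of the sample: inside each event $\{\Reach{S}=\Reach{}\}$ the quantity $\Cases{S}$ becomes a sum of a \emph{fixed} number $\Reach{}$ of negatively correlated $\{0,1\}$ variables with a known mean, so a two‑sided multiplicative Chernoff bound applies and yields exactly the two summands of $F(\beta,\ratio{(I)}\Reach{})$; the outer law of total probability over $\Reach{S}$ then produces the claimed sum. This mirrors the argument for \Cref{thm:MoR-bound-main}, the only new feature being that for $\RoS$ the number of summands is itself random, which is why the conditioning and the outer sum over $\Reach{}$ show up.

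First I would rewrite the error event. Since $\beta>1$, $\error{\RoS}{(I,S)}{}>\beta$ holds exactly when $\Cases{S}/\Reach{S}>\beta\,\ratio{(I)}$ or $\Cases{S}/\Reach{S}<\ratio{(I)}/\beta$, i.e. when $\Cases{S}>\beta\,\ratio{(I)}\,\Reach{S}$ or $\Cases{S}<\ratio{(I)}\,\Reach{S}/\beta$. Fixing a value $\Reach{}$ of $\Reach{S}$, both thresholds become constants and $\Cases{S}=\sum_{v\in S}\sum_{j=1}^{\reach{v}}X_{vj}$ is a sum of exactly $\Reach{}$ indicators.

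Next I would pin down the two ingredients the Chernoff argument needs for this conditional sum. \emph{(i) Mean.} By \Cref{lem:exp-Y-v} each $X_{vj}$ has unconditional mean $\ratio{(I)}$; I would check this survives the conditioning, i.e. $\E\brac{\Cases{S}\mid\Reach{S}=\Reach{}}=\ratio{(I)}\,\Reach{}$. Since the in-neighbor choices are independent of the degrees and of $S$, $\E\brac{\Cases{S}\mid S,(\reach{v})_{v\in V}}=\sum_{v\in S}\reach{v}\,p_v$ with $p_v=\tfrac{h-1}{n-1}$ if $v\in H$ and $p_v=\tfrac{h}{n-1}$ otherwise; the variables $\mathbf 1[v\in S]\reach{v}$ are exchangeable over $v\in V$ and sum to $\Reach{S}$, hence each has conditional mean $\Reach{}/n$ given $\Reach{S}=\Reach{}$, and $\sum_{v}p_v=h\tfrac{h-1}{n-1}+(n-h)\tfrac{h}{n-1}=h$ gives $\E\brac{\Cases{S}\mid\Reach{S}=\Reach{}}=\tfrac{\Reach{}}{n}\sum_v p_v=\tfrac{h\Reach{}}{n}=\ratio{(I)}\Reach{}$. \emph{(ii) Negative correlation.} I would invoke the negative correlation of the $X_{vj}$ established above, in the form needed for the MGF computation ($\E[\prod_{(v,j)\in T}X_{vj}]\le\prod_{(v,j)\in T}\E[X_{vj}]$ over any index set $T$, and the analogous inequality for the complements $1-X_{vj}$), noting that it persists after conditioning on $\Reach{S}=\Reach{}$ because $\Reach{S}$ is a function of the degrees and $S$ only and is independent of the in-neighbor selections.

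With (i) and (ii) in hand I would apply the multiplicative Chernoff bound for negatively correlated $\{0,1\}$ variables twice: the upper tail at $(1+\delta)$ times the mean with $1+\delta=\beta$ gives $\Pr\brac{\Cases{S}\ge\beta\,\ratio{(I)}\Reach{}\mid\Reach{S}=\Reach{}}\le\bigl(e^{\beta-1}/\beta^{\beta}\bigr)^{\ratio{(I)}\Reach{}}$, and the lower tail at $(1-\delta')$ times the mean with $1-\delta'=1/\beta$ gives $\Pr\brac{\Cases{S}\le\ratio{(I)}\Reach{}/\beta\mid\Reach{S}=\Reach{}}\le\bigl(e^{1/\beta-1}/\beta^{-1/\beta}\bigr)^{\ratio{(I)}\Reach{}}$. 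These are precisely the two terms of $F(\beta,\ratio{(I)}\Reach{})$, so a union bound gives $\Pr\brac{\error{\RoS}{(I,S)}{}>\beta\mid\Reach{S}=\Reach{}}\le F(\beta,\ratio{(I)}\Reach{})$, and then $\Pr\brac{\error{\RoS}{(I,S)}{}>\beta}=\sum_{\Reach{}}\Pr\brac{\error{\RoS}{(I,S)}{}>\beta\mid\Reach{S}=\Reach{}}\Pr\brac{\Reach{S}=\Reach{}}\le\sum_{\Reach{}}F(\beta,\ratio{(I)}\Reach{})\Pr\brac{\Reach{S}=\Reach{}}$, which is the statement. The main obstacle I anticipate is ingredient (ii): the dependence among the $X_{vj}$ is exactly where the ``duplicate reports'' difficulty of indirect reporting lives, and one must argue that the helpful (negative) direction of that dependence is not destroyed by conditioning on $\Reach{S}$ — I expect this to rely on the independence of the in-neighbor selections across distinct vertices together with the without-replacement (hence negatively correlated) selection within each vertex.
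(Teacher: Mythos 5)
Your proposal is correct and follows essentially the same route as the paper's proof: condition on $\Reach{S}=\Reach{}$ via the law of total probability, note $\E[\Cases{S}\mid \Reach{S}=\Reach{}]=\ratio{(I)}\Reach{}$, and apply the two-sided Chernoff bound for negatively cylinder-dependent indicators (the paper's Lemma~\ref{lem:X-negcyl} together with Theorem~\ref{thm:multiplicative-error-conc-bound}) to obtain $F(\beta,\Reach{}\ratio{(I)})$ in each conditional slice. The extra checks you spell out (the conditional mean via exchangeability and the persistence of negative correlation under conditioning on $\Reach{S}$, which depends only on $S$ and the degrees) are exactly the details the paper leaves implicit, so no new ideas are needed.
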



\begin{corollary}
\label{Cor:RoS-bound-main}
Consider an instance $I=(G,H)$ where $G$ is a random network $G$. If $S \subseteq V$ is sampled uniformly at random, with $m=|S|$, then the \RoS estimator has the following error bound, for any $\beta=1+\epsilon > 1$.
\begin{align*}
 &\Pr\brac{\error{\RoS}{(I,S)}{} > \beta } \leq 
 F(\beta, m\ratio{(I)}).
\end{align*}
\end{corollary}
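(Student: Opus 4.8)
The plan is to derive the corollary from Theorem~\ref{thm:RoS-bound-main} by showing that the function $\Reach{} \mapsto F(\beta, \Reach{}\ratio{(I)})$ is monotonically non-increasing in $\Reach{}$ for every fixed $\beta = 1+\epsilon > 1$, and then using the fact that $\Reach{S} = \sum_{v \in S} \reach{v} \geq m$ under Assumption~\ref{non-zero-degree-assumption} (since each $\reach{v} \geq 1$). Granting monotonicity, every term $F(\beta, \Reach{}\ratio{(I)})$ appearing in the sum of Theorem~\ref{thm:RoS-bound-main} with $\Pr[\Reach{S} = \Reach{}] > 0$ satisfies $\Reach{} \geq m$, hence $F(\beta, \Reach{}\ratio{(I)}) \leq F(\beta, m\ratio{(I)})$; the weighted average over the distribution of $\Reach{S}$ is then bounded by $F(\beta, m\ratio{(I)}) \sum_{\Reach{}} \Pr[\Reach{S} = \Reach{}] = F(\beta, m\ratio{(I)})$, which is exactly the claimed bound.

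The core step is therefore the monotonicity of $y \mapsto F(\beta, y)$ for $y > 0$ and $\beta > 1$. Writing $F(\beta, y) = a(\beta)^y + b(\beta)^y$ with $a(\beta) = e^{\beta-1}/\beta^{\beta}$ and $b(\beta) = e^{1/\beta - 1}/\beta^{-1/\beta} = \beta^{1/\beta} e^{1/\beta - 1}$, it suffices to check that $0 < a(\beta) < 1$ and $0 < b(\beta) < 1$ for all $\beta > 1$: then each term $a(\beta)^y$, $b(\beta)^y$ is strictly decreasing in $y$, so their sum is too. For $a(\beta)$: taking logarithms, $\ln a(\beta) = (\beta - 1) - \beta \ln \beta$; this is $0$ at $\beta = 1$, and its derivative $-\ln\beta$ is negative for $\beta > 1$, so $\ln a(\beta) < 0$, i.e. $a(\beta) < 1$. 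For $b(\beta)$: substitute $t = 1/\beta \in (0,1)$, so $\ln b(\beta) = (t - 1) - t\ln t$ (using $\ln(\beta^{1/\beta}) = -t \ln t$); this equals $0$ at $t = 1$ and has derivative $-\ln t > 0$ for $t \in (0,1)$, so it is negative on $(0,1)$, giving $b(\beta) < 1$. Positivity of both quantities is immediate since they are ratios/products of positive exponentials and powers. (These are precisely the standard Chernoff-bound factors $(e^{\delta}/(1+\delta)^{1+\delta})$ and its lower-tail counterpart, which are well known to lie in $(0,1)$.)

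The main obstacle — really the only thing that needs care — is making sure the reduction is valid: one must invoke Assumption~\ref{non-zero-degree-assumption} to guarantee $\Reach{S} \geq m$ with probability $1$, so that the sum in Theorem~\ref{thm:RoS-bound-main} genuinely ranges only over $\Reach{} \geq m$; without this the argument would need a separate treatment of the small-$\Reach{}$ terms. Everything else is routine: the monotonicity computation above, and the observation that bounding each summand by its value at $\Reach{} = m$ and pulling the constant $F(\beta, m\ratio{(I)})$ out of the probability-weighted sum leaves $\sum_{\Reach{}}\Pr[\Reach{S} = \Reach{}] = 1$. I would also remark in passing that this explains why \RoS enjoys at least as good an error bound as \MoR (Theorem~\ref{thm:MoR-bound-main}): the \MoR bound is exactly $F(\beta, m\ratio{(I)})$, while the \RoS bound is an average of $F(\beta, \Reach{}\ratio{(I)})$ over $\Reach{} \geq m$, each term of which is no larger.
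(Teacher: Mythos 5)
Your proposal is correct and follows exactly the route the paper intends (the paper states the corollary without an explicit proof, but its surrounding remarks make clear it is obtained from Theorem~\ref{thm:RoS-bound-main} by noting $\Reach{S}\geq m$ under Assumption~\ref{non-zero-degree-assumption} and that $F(\beta,\cdot)$ is non-increasing because both base factors lie in $(0,1)$ for $\beta>1$). Your monotonicity computation and the averaging step over the distribution of $\Reach{S}$ are both sound, so nothing is missing.
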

This bound matches the bound for \MoR in Thm.~\ref{thm:MoR-bound-main}. Since, in general, we expect $\Reach{S} > |S|$, the bound for \RoS in Thm.~\ref{thm:RoS-bound-main} is usually better than the bound in Thm.~\ref{thm:MoR-bound-main} for \MoR.

\subsection{Sample Size}
From Thm.~\ref{thm:MoR-bound-main} and Cor.~\ref{Cor:RoS-bound-main}, we can show that it is possible to achieve an error of at most $\beta=1+\epsilon>1$ with high probability using a logarithmic sample set.

\begin{restatable}{theorem}{thmsamplesize} \label{thm:sample-size}
Given an instance $I=(G,H)$ where $G$ is a random network. If $G$ is sampled uniformly at random, 
with 
$$m=|S|\geq \frac{\ln{2} + \alpha \ln{n}}{\ratio{(I)} (1 -\frac{1}{\beta} (\ln{\beta} + 1))},$$ 
then the error of \MoR and \RoS estimators is no larger than $\beta=1+\epsilon > 1$ with
probability at least $1-1/n^\alpha$, for any constant $\alpha>0$.
\end{restatable}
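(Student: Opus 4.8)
The plan is to start from the two probabilistic error bounds already established—Theorem~\ref{thm:MoR-bound-main} for \MoR and Corollary~\ref{Cor:RoS-bound-main} for \RoS—both of which give a tail bound of the form $\Pr[\error{\method}{(I,S)}{} > \beta] \le F(\beta, m\ratio{(I)})$, and simply solve for the value of $m$ that makes $F(\beta, m\ratio{(I)}) \le 1/n^\alpha$. Recall
$$
F(x,y) = \paren{\frac{e^{x-1}}{x^{x}}}^{y} + \paren{\frac{e^{\frac{1}{x}-1}}{x^{-1/x}}}^{y}.
$$
The first step is to observe that both summands have the form $g^y$ for a base $g \in (0,1)$ (when $x = \beta > 1$), so $F(\beta, y) \le 2 \max(g_1, g_2)^y$. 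The second step is to identify which of the two bases dominates. Taking logarithms, the first base has $\ln g_1 = (\beta - 1) - \beta\ln\beta = -(\beta\ln\beta - \beta + 1)$, and the second has $\ln g_2 = (1/\beta - 1) + (1/\beta)\ln\beta = -\paren{1 - \tfrac{1}{\beta}(\ln\beta + 1)}$. A short convexity/monotonicity check shows that for $\beta > 1$ the second exponent is the smaller in magnitude, i.e. $g_2 \ge g_1$, so the upper-error term dominates the tail and $F(\beta, y) \le 2 g_2^{\,y} = 2\exp\paren{-y\paren{1 - \tfrac{1}{\beta}(\ln\beta+1)}}$.

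The third step is bookkeeping: set $y = m\ratio{(I)}$ and require $2\exp\paren{-m\ratio{(I)}\paren{1 - \tfrac1\beta(\ln\beta+1)}} \le n^{-\alpha}$. Taking logs gives $\ln 2 + \alpha\ln n \le m\,\ratio{(I)}\paren{1 - \tfrac1\beta(\ln\beta+1)}$, which rearranges to exactly the stated threshold
$$
m \ge \frac{\ln 2 + \alpha\ln n}{\ratio{(I)}\paren{1 - \tfrac1\beta(\ln\beta+1)}}.
$$
For the \RoS case one either invokes Corollary~\ref{Cor:RoS-bound-main} directly (same bound as \MoR), or works from Theorem~\ref{thm:RoS-bound-main} and notes that each term $F(\beta, R\,\ratio{(I)})$ in the sum over $\Reach{S}=R$ is bounded using the same inequality, and since $\Reach{S} \ge |S| = m$ under Assumption~\ref{non-zero-degree-assumption}, every term is at most $2g_2^{\,m\ratio{(I)}}$, so the weighted average over the distribution of $\Reach{S}$ is too.

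The main obstacle is the second step: verifying rigorously that the denominator $1 - \tfrac1\beta(\ln\beta+1)$ is (a) positive for all $\beta > 1$—so that the bound is meaningful—and (b) genuinely the dominant (smaller) of the two exponents, so that dropping the \MoR-style lower-error term is legitimate. Positivity follows because $h(\beta) := \beta - \ln\beta - 1$ satisfies $h(1) = 0$ and $h'(\beta) = 1 - 1/\beta > 0$, hence $h(\beta) > 0$, i.e. $\ln\beta + 1 < \beta$; dividing by $\beta$ gives the claim. The comparison of the two exponents is a similar one-variable calculus exercise comparing $\beta\ln\beta - \beta + 1$ against $1 - \tfrac1\beta\ln\beta - \tfrac1\beta$ for $\beta>1$, which I would dispatch by showing their difference is monotone and vanishes at $\beta=1$. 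Everything else is routine algebra, and the factor $2$ is exactly what produces the $\ln 2$ in the numerator.
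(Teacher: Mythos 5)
Your proposal is correct and follows essentially the same route as the paper: both start from Theorem~\ref{thm:MoR-bound-main} and Corollary~\ref{Cor:RoS-bound-main}, observe that $1-\tfrac{1}{\beta}(\ln\beta+1)$ is positive and is the smaller of the two exponents ($\beta\ln\beta-\beta+1 \geq 1-\tfrac{1}{\beta}(\ln\beta+1) > 0$ for $\beta>1$), and solve $2\,e^{-m\ratio{(I)}(1-\frac{1}{\beta}(\ln\beta+1))} \leq n^{-\alpha}$ for $m$, with the factor $2$ yielding the $\ln 2$. One cosmetic remark: the dominant base $g_2$ corresponds to the lower-tail (underestimation) term rather than the ``upper-error'' term as you label it, but this does not affect the argument.
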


\begin{proof}
We show that $\left(\frac{e^{\beta-1}}{\beta^{\beta}}\right)^{m\ratio{(I)}} \leq 1/(2n^\alpha)$ and $\left(\frac{e^{\frac{1}{\beta}-1}}{\beta^{-1/\beta}}\right)^{m\ratio{(I)}} \leq 1/(2n^\alpha)$, which implies the claim from \Cref{thm:MoR-bound-main} and \Cref{Cor:RoS-bound-main}.
Hence, $m$ must simultaneously satisfy
\begin{align*}
\left(\frac{e^{\beta-1}}{\beta^{\beta}}\right)^{m\ratio{(I)}} &\leq 1/(2n^\alpha) \\
m\ratio{(I)} \ln{\left(\frac{e^{\beta-1}}{\beta^{\beta}}\right)} &\leq - (\ln{2} + \alpha \ln{n})\\
m \geq \frac{\ln{2} + \alpha \ln{n}}{\ratio{(I)} \ln{\left(\frac{\beta^{\beta}}{e^{\beta-1}}\right)}} &= \frac{\ln{2} + \alpha \ln{n}}{\ratio{(I)} (\beta \ln{\beta} - \beta +1)},
\end{align*}
and 
\begin{align*}
\left(\frac{e^{\frac{1}{\beta}-1}}{\beta^{-1/\beta}}\right)^{m\ratio{(I)}} &\leq 1/(2n^\alpha) \\
m\ratio{(I)} \ln{\left(\frac{e^{\frac{1}{\beta}-1}}{\beta^{-1/\beta}}\right)} &\leq - (\ln{2} + \alpha \ln{n})\\
m \geq \frac{\ln{2} + \alpha \ln{n}}{\ratio{(I)} \ln{\left(\frac{\beta^{-1/\beta}}{e^{\frac{1}{\beta}-1} }\right)}} &= \frac{\ln{2} + \alpha \ln{n}}{\ratio{(I)} (-\frac{1}{\beta} \ln{\beta} - \frac{1}{\beta} +1)},
\end{align*}
It holds that $\beta \ln{\beta} - \beta +1 \geq -\frac{1}{\beta} \ln{\beta} - \frac{1}{\beta} +1 >0$ for $\beta>1$. Hence, the bound on $m$ satisfies that 
$\left(\frac{e^{\beta-1}}{\beta^{\beta}}\right)^{m\ratio{(I)}} \leq 1/(2n^\alpha)$ and $\left(\frac{e^{\frac{1}{\beta}-1}}{\beta^{-1/\beta}}\right)^{m\ratio{(I)}} \leq 1/(2n^\alpha)$, and hence $\Pr\left[\error{}{(I,S)}{} > \beta \right] \leq 1/n^\alpha$.
\end{proof}

The bound on the sample size provided by this theorem is $O(\log n)$ as long as $\alpha$, $\beta$, and $\ratio{(I)}$ are constant. Observe that $\alpha$ and $\beta$ can be freely chosen, but the \prevalence $\ratio{(I)}$ is not known. Instead, a lower bound on $\ratio{(I)}$ has to be assumed and used to obtain the size $m$ in practice.
\section{Upper Bounds for Erdős–Rényi and Scale-Free Networks}
\label{sec:random-graphs}


\subsection{Erdős–Rényi Network\label{erdos-renyi-sub-section}}

Consider an instance $I=(G,H)$ where the directed network $G = (V, E)$ is generated with the Erdős–Rényi model $G(n, p)$. In this network, a directed link $(u,v)$ between nodes $u,v \in V$ is created independently at random with probability $p$. This is equivalent to assigning to each vertex $v \in V$ an in-degree $\reach{v} \sim Binomial(n-1,p)$ and selecting the in-neighbors of $v$ uniformly at random from $V \setminus \{v\}$.
Unfortunately, Assumption \ref{non-zero-degree-assumption} does not hold for Erdős–Rényi networks since there may be vertices with in-degree zero. To prevent this, we assume that zero degree nodes are not allowed, and distribute the probability of that event among the rest of in-degrees proportionally.

\noindent
\textbf{The bound for the \MoR estimator.}
%
We can directly apply \Cref{thm:MoR-bound-main}.

\begin{restatable}{theorem}{thmMoRerdosrenyi} \label{thm:MoR-erdos-renyi}
Consider an instance $I=(G,H)$ where the directed random network $G = (V, E)$ is generated with the Erdős–Rényi model $G(n, p)$ without zero in-degree nodes. If $S \subseteq V$ is sampled uniformly at random, with $m=|S|$, then the \MoR estimator has the following error bound, for any $\beta=1+\epsilon > 1$.
$$
\Pr\brac{\error{\MoR}{(I,S)}{} > \beta }
  \leq 
  F(\beta, m\ratio{(I)})
$$
\end{restatable}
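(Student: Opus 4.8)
The plan is to apply Theorem~\ref{thm:MoR-bound-main} directly, since the only hypothesis of that theorem is that $G$ is a random network in the sense defined in Section~\ref{sec:mainresults} (each vertex picks a degree from a distribution $P_{\text{deg}}$ and then picks that many in-neighbors uniformly at random from $V\setminus\{v\}$), and the statement of Theorem~\ref{thm:MoR-erdos-renyi} is literally the specialization of that bound to the case where $P_{\text{deg}}$ is the Erdős–Rényi in-degree distribution. So the entire content of the proof is to verify that the modified $G(n,p)$ model — the one with zero-in-degree vertices forbidden and that probability mass redistributed proportionally over the positive in-degrees — is an instance of the general random-network model satisfying Assumption~\ref{non-zero-degree-assumption}.

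First I would recall that in $G(n,p)$ each vertex $v$ has $\reach{v}\sim \mathrm{Binomial}(n-1,p)$ independently, and conditioned on $\reach{v}=r$ the in-neighbor set is a uniformly random $r$-subset of $V\setminus\{v\}$ — this is exactly the two-step generative description of a general random network with $P_{\text{deg}} = \mathrm{Binomial}(n-1,p)$. Then I would define the conditioned distribution $P_{\text{deg}}'(k) = \binom{n-1}{k}p^k(1-p)^{n-1-k} / (1-(1-p)^{n-1})$ for $k\in\{1,\dots,n-1\}$, i.e. the binomial conditioned on being at least $1$. Since in-degrees are chosen independently across vertices, conditioning the whole graph on having no zero-in-degree vertex is the same as each vertex independently drawing its degree from $P_{\text{deg}}'$ and then picking a uniformly random in-neighbor set of that size; hence the conditioned model is again a general random network, now with degree distribution $P_{\text{deg}}'$ whose support is a subset of the positive integers, so Assumption~\ref{non-zero-degree-assumption} holds.

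With this identification, Theorem~\ref{thm:MoR-bound-main} applies verbatim: for a uniformly random sample $S$ with $m=|S|$ and any $\beta=1+\epsilon>1$, $\Pr[\error{\MoR}{(I,S)}{} > \beta] \le F(\beta, m\ratio{(I)})$. I would then simply note that $\ratio{(I)} = h/n$ is unchanged by the conditioning (it depends only on $H$ and $V$, not on the realized edges), so the bound is stated in terms of the same $\ratio{(I)}$, and conclude.

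The only subtlety — and the one place worth a sentence of justification rather than a one-line citation — is the claim that conditioning the joint law of the independent degrees on the global event ``every vertex has in-degree $\ge 1$'' factorizes as a product of the per-vertex conditioned laws $P_{\text{deg}}'$. This is immediate because the event is itself a product event $\bigcap_v \{\reach{v}\ge 1\}$ over independent coordinates, and the in-neighbor-selection step is a further independent randomization given the degrees, so nothing couples the vertices. I do not expect any genuine obstacle here; the work is purely in phrasing the reduction cleanly, and the numerical/closed-form content all lives inside $F$ and was already discharged in Theorem~\ref{thm:MoR-bound-main}.
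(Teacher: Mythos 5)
Your proposal is correct and matches the paper's own argument: the paper likewise treats the zero-in-degree-free Erdős–Rényi model as a general random network whose per-vertex degree distribution is the $\mathrm{Binomial}(n-1,p)$ law with the mass at $0$ redistributed proportionally (equivalently, conditioned on being $\geq 1$), so Assumption~\ref{non-zero-degree-assumption} holds and Theorem~\ref{thm:MoR-bound-main} applies verbatim. Your extra remark that the global conditioning factorizes across vertices (because degrees are independent and the event is a product event) is a harmless elaboration of the same reduction.
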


\noindent
\textbf{The bound for the \RoS estimator.} Let us now consider the \RoS estimator. Observe that by construction of the Erdős–Rényi model $G(n, p)$ without zero in-degree nodes the average in-degree is no less than $p(n-1)$.

\begin{restatable}{theorem}{thmRoSerdosrenyi} \label{thm:RoS-erdos-renyi}
Consider an instance $I=(G,H)$ where the directed random network $G = (V, E)$ is generated with the Erdős–Rényi model $G(n, p)$ without zero in-degree nodes, and a set $S \subseteq V$ is sampled uniformly at random, with $m=|S|$. 
Then, the \RoS estimator has the following error bound, for any $\delta \in (0,1)$ and any $\beta=1+\epsilon > 1$ and where $\mu = m p(n-1)$.
$$
\Pr\brac{\error{\RoS}{(I,S)}{} > \beta } 
  \leq \paren{\frac{e^{-\delta}}{(1-\delta)^{(1-\delta)}}}^{\mu} 
  + F(\beta, (1-\delta) \mu \ratio{(I)}).
$$
\end{restatable}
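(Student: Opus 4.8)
The plan is to combine the general \RoS bound of Theorem~\ref{thm:RoS-bound-main} with a lower-tail concentration bound on the total in-degree $\Reach{S}=\sum_{v\in S}\reach{v}$, using the fact that in the Erdős–Rényi model (even after redistributing the zero-degree probability mass upward) the in-degrees are i.i.d.\ with mean at least $p(n-1)$. First I would recall from Theorem~\ref{thm:RoS-bound-main} that
\[
\Pr\brac{\error{\RoS}{(I,S)}{} > \beta}
 \leq \sum_{\Reach{}} F(\beta,\Reach{}\,\ratio{(I)})\,\Pr\Brack{\Reach{S}=\Reach{}}.
\]
The key observation is that $F(\beta,y)$ is decreasing in $y$ for fixed $\beta>1$ (both terms $(e^{\beta-1}/\beta^\beta)^y$ and $(e^{1/\beta-1}/\beta^{-1/\beta})^y$ have base strictly less than $1$), so large values of $\Reach{S}$ only help, and the sum is dominated by the regime where $\Reach{S}$ is small. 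Hence I would split the sum at the threshold $(1-\delta)\mu$ where $\mu=\E[\Reach{S}]\ge m p(n-1)$: on the event $\Reach{S}\ge(1-\delta)\mu$ we bound each $F(\beta,\Reach{}\,\ratio{(I)})\le F(\beta,(1-\delta)\mu\,\ratio{(I)})$ and the probabilities sum to at most $1$; on the complementary event $\Reach{S}<(1-\delta)\mu$ we bound $F(\beta,\cdot)\le 1$ and pay the probability $\Pr[\Reach{S}<(1-\delta)\mu]$.

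The remaining step is to bound $\Pr[\Reach{S}<(1-\delta)\mu]$. Since $\Reach{S}$ is a sum of $m$ independent in-degrees, each stochastically dominating a $Binomial(n-1,p)$ variable (the upward redistribution of the zero-degree mass can only increase each degree), $\Reach{S}$ stochastically dominates a $Binomial(m(n-1),p)$ random variable with mean $mp(n-1)$. Taking $\mu=mp(n-1)$ and applying the multiplicative Chernoff lower-tail bound gives
\[
\Pr\brac{\Reach{S}<(1-\delta)\mu}\le\paren{\frac{e^{-\delta}}{(1-\delta)^{(1-\delta)}}}^{\mu},
\]
for any $\delta\in(0,1)$. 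Combining the two pieces yields exactly the claimed bound
\[
\Pr\brac{\error{\RoS}{(I,S)}{} > \beta}
 \le \paren{\frac{e^{-\delta}}{(1-\delta)^{(1-\delta)}}}^{\mu}+F(\beta,(1-\delta)\mu\,\ratio{(I)}).
\]

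The main obstacle I anticipate is handling the zero-degree redistribution cleanly: one must argue that modifying $P_\text{deg}$ to exclude degree $0$ (and reassigning that probability to positive degrees proportionally) does not decrease the mean $p(n-1)$ and preserves the independence needed for the Chernoff bound — hence the careful phrasing ``the average in-degree is no less than $p(n-1)$.'' A stochastic-domination argument makes this rigorous without recomputing the modified distribution explicitly. A minor secondary point is confirming monotonicity of $F(\beta,\cdot)$, which is immediate once one checks both bases lie in $(0,1)$ for $\beta>1$; this is the same elementary inequality used in the proof of Theorem~\ref{thm:sample-size}.
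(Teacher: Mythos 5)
Your proposal follows essentially the same route the paper intends: condition on $\Reach{S}$ via Thm.~\ref{thm:RoS-bound-main}, split at the threshold $(1-\delta)\mu$ with $\mu=mp(n-1)$, exploit monotonicity of $F(\beta,\cdot)$ on the high-degree part, and control $\Pr[\Reach{S}<(1-\delta)\mu]$ with the lower-tail Chernoff bound (Cor.~\ref{thm:lower-conc-bound}); your stochastic-domination handling of the zero-degree redistribution is a clean justification of the paper's remark that the average in-degree is no less than $p(n-1)$, and it also sidesteps the dependence among edge indicators created by conditioning on nonzero degree. One small repair: since $F(\beta,y)$ is a sum of two terms with bases in $(0,1)$ it can approach $2$, so on the low-degree event do not bound $F(\beta,\cdot)\le 1$; instead bound the conditional probability $\Pr\brac{\error{\RoS}{(I,S)}{}>\beta \mid \Reach{S}=\Reach{}}\le 1$, which gives exactly the claimed inequality without the spurious factor.
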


\subsection{Scale-Free Network\label{scale-free-sub-section}}

In a Scale-Free random network $G = (V, E)$, the probability that a node has in-degree $k \geq 1$, independent of other nodes, is given by a degree distribution $\pdeg{k} = \nu k^{-\gamma}$, where $\nu=1/\sum_{k=1}^{n-1} k^{-\gamma}$ is a constant that ensures that $\sum_{k=1}^{n-1} \pdeg{k} = 1$. Note that for this network, there are no nodes with in-degree zero, and hence the Assumption \ref{non-zero-degree-assumption} is true, and the results in Section \ref{sec:mainresults} can be directly applied.
We adopt the Scale-Free network model in which for any vertex $v$ with in-degree $k$, the $k$ in-neighbors of $v$ are picked uniformly at random from $|V| \setminus \{v\}$.
Let us consider an instance $I=(G,H)$ in which the random network $G$ is a scale-network as described above. 

\noindent
\textbf{The bounds for the \MoR estimator.} For the \MoR estimator, the error bound we obtain is the one provided by Thm.~\ref{thm:MoR-bound-main}.

\noindent
\textbf{The bound for the \RoS estimator.} 
For the \RoS estimator, the error bound provided by Cor.~\ref{Cor:RoS-bound-main} also applies. However, in this case, we obtain a better bound by using Thm.~\ref{thm:RoS-bound-main}. Recall that Scale-Free networks usually have $\gamma \in (2,3)$. The result below assumes $\gamma>2$ to avoid a negative value of $\mu$.

\begin{restatable}{theorem}{thmRoSbound} \label{thm:RoS-bound}
Consider an instance $I=(G,H)$ in which $G$ is a Scale-Free random network with $\gamma>2$. If $S \subseteq V$ is sampled uniformly at random, with $m=|S|$, 
then the \RoS estimator has the following error bound, for  any $\delta \in (0,1)$, and any $\beta=1+\epsilon > 1$ and where $\mu \approx m \frac{1-\gamma}{2-\gamma} \frac{1-(n-1)^{2-\gamma}}{1-(n-1)^{1-\gamma}}$.
$$
\Pr\brac{\error{\RoS}{(I,S)}{} > \beta }
  \leq \paren{\frac{e^{-\delta}}{(1-\delta)^{(1-\delta)}}}^{\mu} + 
  F(\beta, (1-\delta) \mu \ratio{(I)}).
$$
\end{restatable}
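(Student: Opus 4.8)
The plan is to mirror the structure of the proof of Theorem~\ref{thm:RoS-erdos-renyi}, replacing the binomial-degree concentration step with a concentration bound tailored to the power-law degree distribution. Recall that $\estimate{\RoS}{(I[S])} = \Cases{S}/\Reach{S}$, and that by Theorem~\ref{thm:RoS-bound-main} we already have
$$
\Pr\brac{\error{\RoS}{(I,S)}{} > \beta} \leq \sum_{\Reach{}} F(\beta, \Reach{}\,\ratio{(I)})\,\Pr\brac{\Reach{S} = \Reach{}}.
$$
Since $F(\beta, y)$ is decreasing in $y$ for $\beta > 1$ (both terms $(e^{\beta-1}/\beta^\beta)^y$ and $(e^{1/\beta-1}/\beta^{-1/\beta})^y$ have base strictly less than $1$), small values of $\Reach{S}$ are the ``bad'' events. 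So the first step is to split the sum at a threshold $(1-\delta)\mu$, where $\mu = \E[\Reach{S}] = m\,\E[R_v]$: on the event $\Reach{S} \geq (1-\delta)\mu$ we bound $F(\beta, \Reach{}\ratio{(I)}) \leq F(\beta, (1-\delta)\mu\,\ratio{(I)})$, and on the complementary event we bound $F \leq 1$ and pay the probability $\Pr[\Reach{S} < (1-\delta)\mu]$.

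The second step is to compute $\E[R_v]$ for the Scale-Free distribution and to justify the stated approximation $\mu \approx m\frac{1-\gamma}{2-\gamma}\frac{1-(n-1)^{2-\gamma}}{1-(n-1)^{1-\gamma}}$. Here $\E[R_v] = \nu \sum_{k=1}^{n-1} k^{1-\gamma}$ with $\nu = 1/\sum_{k=1}^{n-1} k^{-\gamma}$; approximating both sums by the corresponding integrals $\int_1^{n-1} x^{-\gamma}\,dx = \frac{1-(n-1)^{1-\gamma}}{\gamma-1}$ and $\int_1^{n-1} x^{1-\gamma}\,dx = \frac{1-(n-1)^{2-\gamma}}{\gamma-2}$ gives exactly the claimed ratio (the hypothesis $\gamma > 2$ is what makes the second integral finite and $\mu$ positive, matching the remark in the text). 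Then, because $\Reach{S} = \sum_{v\in S} R_v$ is a sum of $m$ i.i.d.\ nonnegative random variables with mean $\mu/m$, I would apply the standard multiplicative Chernoff lower-tail bound to get $\Pr[\Reach{S} < (1-\delta)\mu] \leq \left(e^{-\delta}/(1-\delta)^{1-\delta}\right)^{\mu}$. Combining with the first step yields exactly the stated inequality.

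The main obstacle I anticipate is the degree random variables $R_v$ are not bounded in $[0,1]$, so the textbook Chernoff form does not literally apply; one needs either a Chernoff bound for sums of i.i.d.\ variables bounded by $n-1$ (rescaling, at the cost of a factor $n-1$ in the exponent, which would weaken the bound) or — more likely what the authors intend — to observe that the lower-tail multiplicative Chernoff bound $\Pr[\sum Z_i \le (1-\delta)\E[\sum Z_i]] \le (e^{-\delta}/(1-\delta)^{1-\delta})^{\E[\sum Z_i]}$ holds for any independent nonnegative $Z_i$ via the moment-generating-function argument with $\E[e^{-tZ_i}] \le e^{\E[Z_i](e^{-t}-1)}$, which only uses nonnegativity, not boundedness. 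I would use that latter route. A secondary, purely bookkeeping obstacle is that the Scale-Free model as defined has no zero-degree nodes (Assumption~\ref{non-zero-degree-assumption} holds), so unlike the Erdős–Rényi case there is no extra additive isolated-node term — I should double-check that the approximation error in replacing sums by integrals for $\mu$ is benign, or state the bound with the exact $\mu = \nu\sum_k k^{1-\gamma}$ and note the integral form as an approximation, exactly as the theorem statement does with ``$\approx$''.
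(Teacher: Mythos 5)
Your overall architecture --- start from Theorem~\ref{thm:RoS-bound-main}, use that $F(\beta,y)$ is decreasing in $y$ for $\beta>1$, split the sum over $\Reach{}$ at the threshold $(1-\delta)\mu$ with $\mu=\E[\Reach{S}]=m\,\E[R_v]$, and obtain the stated $\mu$ from the integral approximation of $\nu\sum_k k^{1-\gamma}$ (with $\gamma>2$ ensuring positivity) --- is exactly the intended route, mirroring Theorem~\ref{thm:RoS-erdos-renyi}. (Minor point: on the bad event you should bound the conditional error probability by $1$, not ``$F\le 1$''; $F(\beta,y)$ can be as large as $2$.) The genuine gap is in the one step where the scale-free structure actually matters: the lower-tail bound $\Pr[\Reach{S}<(1-\delta)\mu]\le\paren{e^{-\delta}/(1-\delta)^{1-\delta}}^{\mu}$. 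Your justification --- that $\E[e^{-tZ}]\le\exp\paren{\E[Z](e^{-t}-1)}$ holds for any nonnegative $Z$ --- is false; that inequality is precisely where boundedness $Z\in[0,1]$ (or $Z\in\{0,1\}$) is used, via the chord bound $e^{-tz}\le 1-(1-e^{-t})z$ valid only for $z\in[0,1]$. Counterexample: $Z=2$ with probability $1/2$ and $0$ otherwise, so $\E[Z]=1$, and as $t\to\infty$ the left side tends to $1/2$ while the right side tends to $e^{-1}$. The failure is not an artifact of a contrived example: for the truncated power law itself (e.g.\ $\gamma=2.5$, $t=\ln 2$) one gets $\E[e^{-tR_v}]\approx 0.41$ versus $\exp\paren{\E[R_v](e^{-t}-1)}\approx 0.38$, and optimizing over $t$ (the Cram\'er exponent) for heavy-tailed summands can give a strictly weaker decay rate than $\mu\paren{-\delta-(1-\delta)\ln(1-\delta)}$, so the desired inequality cannot be rescued by a better choice of $t$ within your MGF argument. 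Rescaling by $n-1$, as you note yourself, only yields exponent $\mu/(n-1)$, not $\mu$.

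What makes the Erd\H{o}s--R\'enyi analogue go through is that there $\Reach{S}$ is a sum of $m(n-1)$ independent $0$--$1$ edge indicators, so the lower-tail bound of Corollary~\ref{thm:lower-conc-bound} applies directly with $\mu=mp(n-1)$. The analogous move here --- writing $\Reach{S}=\sum_{v\in S}\sum_{u\neq v}\mathbf{1}[(u,v)\in E]$ and invoking Corollary~\ref{thm:lower-conc-bound} --- requires negative cylinder dependence of these indicators, and that is not automatic in the scale-free model: two indicators attached to the same vertex $v$ have covariance $\E[R_v(R_v-1)]/\paren{(n-1)(n-2)}-\paren{\E[R_v]/(n-1)}^2$, which is positive whenever $\mathrm{Var}(R_v)$ exceeds roughly $\E[R_v]$ --- the typical situation for a power law with $\gamma<3$, where $\E[R_v^2]$ grows polynomially in $n$. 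So to close the argument you must either condition on the degree sequence (given the $R_v$'s the indicators are negatively correlated, but then the exponent involves the realized $\sum_{v\in S}R_v$, which is the very quantity you are trying to control) or supply a different lower-tail argument tailored to sums of i.i.d.\ truncated power-law variables. As written, the concentration step that produces the first term of the bound is not established.
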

   
\definecolor{mycolor1}{rgb}{0.00000,0.44700,0.74100}
\definecolor{mycolor2}{rgb}{0.85000,0.32500,0.09800}
\definecolor{mycolor3}{rgb}{0.92900,0.69400,0.12500}
\definecolor{mycolor4}{rgb}{0.4666,0.67451,0.18824}
\definecolor{mycolor5}{rgb}{0.49412,0.18431,0.30588}
\definecolor{mycolor6}{rgb}{0.30196,0.69020,0.93333}

\begin{figure}[t]
\begin{center}
    \definecolor{mycolor1}{rgb}{0.00000,0.44700,0.74100}
\definecolor{mycolor2}{rgb}{0.85000,0.32500,0.09800}
\definecolor{mycolor3}{rgb}{0.92900,0.69400,0.12500}
\definecolor{mycolor4}{rgb}{0.4666,0.67451,0.18824}
\definecolor{mycolor5}{rgb}{0.49412,0.18431,0.30588}
\definecolor{mycolor6}{rgb}{0.30196,0.69020,0.93333}

\begin{tabular}{c c c}
\hspace{-17pt}
\begin{minipage}{0.335\linewidth}
    \resizebox{1.00\textwidth}{!}{
    \begin{tikzpicture}
    \begin{axis}[
        scale = 0.7,
        grid=major,
	xlabel = \Large Sample Size $|S|$,
        title = \LARGE (a) $\mathcal{E}_{\mathcal{M}}$,
        boxplot/draw direction=y,
        boxplot={
        draw position={1/4 + floor(\plotnumofactualtype/2) + 1/2*mod(\plotnumofactualtype,2)},
        box extend=0.25,
        },
        legend style={
          cells={anchor=west}, 
          legend image code/.code={ 
            \draw[#1] (0cm,-0.1cm) -- (0.3cm,-0.1cm); 
          }
        },
  	ymin = 1,
  	ymax = 1.056,
        xtick={0,1,2,...,50},
        x tick label as interval,
        xticklabels={%
      {$10^2$},%
      {$5~10^2$},%
      {$10^3$},%
      {$5~10^3$},%
      {$10^4$},%
      {$5~10^4$},%
        },
  	ticklabel style = {font=\large},
  	ytick={1.00, 1.05,1.1, 1.2},
   	]

   \draw[dashed, line width=2pt, black] 
      (axis cs:0,1.02) -- 
      (axis cs:50,1.02) 
      node[pos=1, anchor=west, font=\large] {\textcolor{red}{Reference}};

   \draw[dashed, line width=2pt, black] 
      (axis cs:0,1.01) -- 
      (axis cs:50,1.01) 
      node[pos=1, anchor=west, font=\large] {\textcolor{red}{Reference}};

    \node[anchor=south west, font=\large] at (axis cs:4,1.02) {\textbf{\LARGE $\epsilon = 2\%$}};

    \node[anchor=south west, font=\large] at (axis cs:4,1.01) {\textbf{\LARGE $\epsilon = 1\%$}};

\addplot[line width = 2.00pt, mycolor1]
table[row sep=\\,y index=0] {
data\\
1.000000\\
1.000000\\
1.000000\\
1.054576\\
1.295237\\
};
\addlegendentry{\Large $\mathcal{E}_{MoR}$}

\addplot[line width = 2.00pt, mycolor2]
table[row sep=\\,y index=0] {
data\\
1.000000\\
1.000000\\
1.000000\\
1.054054\\
1.318681\\
};
\addlegendentry{\Large $\mathcal{E}_{RoS}$}

\addplot[line width = 2.00pt, mycolor1]
table[row sep=\\,y index=0] {
data\\
1.000000\\
1.000000\\
1.000000\\
1.024171\\
1.141020\\
};

\addplot[line width = 2.00pt, mycolor2]
table[row sep=\\,y index=0] {
data\\
1.000000\\
1.000000\\
1.000000\\
1.023848\\
1.140841\\
};
\addplot[line width = 2.00pt, mycolor1]
table[row sep=\\,y index=0] {
data\\
1.000000\\
1.000000\\
1.000142\\
1.017287\\
1.097095\\
};

\addplot[line width = 2.00pt, mycolor2]
table[row sep=\\,y index=0] {
data\\
1.000000\\
1.000000\\
1.000134\\
1.016986\\
1.091725\\
};

\addplot[line width = 2.00pt, mycolor1]
table[row sep=\\,y index=0] {
data\\
1.000000\\
1.000000\\
1.000000\\
1.007714\\
1.043535\\
};

\addplot[line width = 2.00pt, mycolor2]
table[row sep=\\,y index=0] {
data\\
1.000000\\
1.000000\\
1.000000\\
1.007558\\
1.041978\\
};
\addplot[line width = 2.00pt, mycolor1]
table[row sep=\\,y index=0] {
data\\
1.000000\\
1.000000\\
1.000040\\
1.005434\\
1.034558\\
};

\addplot[line width = 2.00pt, mycolor2]
table[row sep=\\,y index=0] {
data\\
1.000000\\
1.000000\\
1.000070\\
1.005386\\
1.031932\\
};

\addplot[line width = 2.00pt, mycolor1]
table[row sep=\\,y index=0] {
data\\
1.000000\\
1.000000\\
1.000000\\
1.002407\\
1.017379\\
};

\addplot[line width = 2.00pt, mycolor2]
table[row sep=\\,y index=0] {
data\\
1.000000\\
1.000000\\
1.000000\\
1.002381\\
1.017314\\
};

  	\end{axis}
  \end{tikzpicture}
 }
\end{minipage}

&
\hspace{-12pt}
\begin{minipage}{0.335\linewidth}
    \resizebox{1.00\textwidth}{!}{
    \begin{tikzpicture}
  	\begin{axis}[
    scale = 0.7,
  	legend pos = north east,
   	grid = major,
	xlabel = \LARGE Sample Size $|S|$,
    title = \LARGE (b) $\mathcal{E}_{MoR}$ (\textrm{$\rho=0.05$}),
  	ymin = 1,
  	ymax = 1.056,
  	xmin = 100,
 	xmax = 500000,
    xmode = log,
  	ticklabel style = {font=\Large},
  	ytick={1.00, 1.05,1.1, 1.2},
   	]
  	\addplot[solid, smooth, line width = 3.00pt, mycolor3] table[x ={sizes}, y = eMoR]{sub_parts/FigData/mean_eMoR_deg_15_rho_5.dat};
    \addlegendentry{\Large $p(n-1)=15$}
  	\addplot[solid, smooth, line width = 3.00pt, mycolor2] table[x ={sizes}, y = eMoR]{sub_parts/FigData/mean_eMoR_deg_30_rho_5.dat};
    \addlegendentry{\Large $p(n-1) = 30$}
  	\addplot[solid, smooth, line width = 3.00pt, mycolor1] table[x ={sizes}, y = eMoR]{sub_parts/FigData/mean_eMoR_deg_50_rho_5.dat};
    \addlegendentry{\Large $p(n-1)=50$}
    
   \node[] at (axis cs: 100000,1.11) {\textbf{$\pmb{\epsilon = 10\%}$}};
   \node[] at (axis cs: 100000,1.023) {\textbf{\LARGE $\epsilon = 2\%$}};
   \node[] at (axis cs: 100000,1.013) {\textbf{\LARGE $\epsilon = 1\%$}};
   \addplot[dashed, line width = 2.00pt] (100,1.02) -- (1000000,1.02);
   \addplot[dashed, line width = 2.00pt] (100,1.01) -- (1000000,1.01);
   \addplot[dashed, line width = 1.00pt] (100,1.10) -- (1000000,1.10);
  	\end{axis}
  \end{tikzpicture}
 }
\end{minipage}
&
\hspace{-12pt}
\begin{minipage}{0.335\linewidth}
    \resizebox{1.00\textwidth}{!}{
    \begin{tikzpicture}
  	\begin{axis}[
     scale = 0.70,
  	legend pos = north east,
   	grid = major,
	xlabel = \LARGE Sample Size $|S|$,
    title = \LARGE (c) $\mathcal{E}_{RoS}$ (\textrm{$\rho=0.05$}),
  	ymin = 1,
  	ymax = 1.056,
  	xmin = 100,
 	xmax = 500000,
    xmode = log,
  	ticklabel style = {font=\Large},
  	ytick={1.00, 1.05, 1.1, 1.2},
   	]
  	\addplot[solid, smooth, line width = 3.00pt, mycolor3] table[x ={sizes}, y = eRoS]{sub_parts/FigData/mean_eMoR_deg_15_rho_5.dat};
    \addlegendentry{\Large $p(n-1)=15$}
  	\addplot[solid, smooth, line width = 3.00pt, mycolor2] table[x ={sizes}, y = eRoS]{sub_parts/FigData/mean_eMoR_deg_30_rho_5.dat};
    \addlegendentry{\Large $p(n-1) = 30$}
  	\addplot[solid, smooth, line width = 3.00pt, mycolor1] table[x ={sizes}, y = eRoS]{sub_parts/FigData/mean_eMoR_deg_50_rho_5.dat};
    \addlegendentry{\Large $p(n-1)=50$}
   \node[] at (axis cs: 100000,1.11) {\textbf{$\pmb{\epsilon = 10\%}$}};
   \node[] at (axis cs: 100000,1.023) {\textbf{\LARGE $\epsilon = 2\%$}};
   \node[] at (axis cs: 100000,1.013) {\textbf{\LARGE $\epsilon = 1\%$}};
   \addplot[dashed, line width = 2.00pt] (100,1.02) -- (1000000,1.02);
   \addplot[dashed, line width = 2.00pt] (100,1.01) -- (1000000,1.01);
   \addplot[dashed, line width = 1.00pt] (100,1.10) -- (1000000,1.10);
  	\end{axis}
  \end{tikzpicture}
 }
\end{minipage}
\end{tabular}
\end{center}
    \caption{Erdős–Rényi ($n = 10^{6}$). (a) Boxplots $\error{\method}{}{}$ versus $|S|$ for MoR and RoS for $\ratio{}=0.05$ and $p(n-1)=30$. (b) Average $\error{\MoR}{}{}$ versus $|S|$ for $\ratio{}=0.05$ and different $p(n-1)$. (c) Average $\error{\RoS}{}{}$ versus $|S|$ for $\ratio{}=0.05$ and different $p(n-1)$.}
    \label{fig_ER_box}
    \Description{}
\end{figure}


\begin{figure*}[t]
\begin{center}
    \input{sub_parts/FigTex/ErdosRenyi_bounds}
\end{center}
    \caption{Erdős–Rényi ($n = 10^{6}$). Simulation curves and bounds for (\ref{line_ER_d15}) $p(n-1)=15$, (\ref{line_ER_d30}) $p(n-1)=30$, and (\ref{line_ER_d50}) $p(n-1)=50$. (a) $\Pr[\error{\MoR}{}{} > 1 + \epsilon]$ versus $|S|$ for $\epsilon = 0.05$ and $\ratio{}$. (b) Sample size versus $\epsilon$ for $\ratio{} = 0.05$. (c) Sample size versus $\ratio{}$ for $\epsilon = 0.05$. (d) $\Pr[\error{\RoS}{}{} > 1 + \epsilon]$ versus $|S|$ for $\epsilon = 0.05$ and $\ratio{}=0.05$. (e) Sample size versus $\epsilon$ for $\ratio{} = 0.05$. (f) Sample size versus $\ratio{}$ for $\epsilon = 0.05$.}
    \label{fig_ER_bounds}
    \Description{}
\end{figure*}


\section{Simulations}

We evaluate the behavior of \MoR and \RoS estimators under different conditions using two network models: Erdős–Rényi and Scale-Free. For a given $\ratio{}$, we have generated 100 instances $I=(G,H)$ with that \prevalence for each network model. Then, we obtained 200 sample sets with a particular size $|S|$ from each instance. For each graph instance and sample set, we estimated \prevalence with $\estimate{\MoR}{(I)}$ and $\estimate{RoS}{(I)}$, and computed the respective errors $\error{\MoR}{(I)}{}$ and $\error{\RoS}{(I)}{}$. Simulation codes were implemented in MATLAB and executed on a Dell Inspiron 14 7000 laptop with Intel Core i7 CPU 2.80 GHz, RAM 16 GB, and Ubuntu 22.04 OS.

\subsection{Erdős–Rényi} 

We assess the errors $\error{\method}{}{}$ using the Erdős–Rényi network model. Each generated graph has a size of $n=10^6$ and a probability parameter $p$, where the mean in-degree is given by $p(n-1)$. Specifically, we evaluate the performance of the estimators for different mean in-degrees.

\noindent
\textbf{Error versus sample size.} We analyze the behavior of $\error{\method}{}{}$ in relation to the sample size $|S|$.
Fig.~\ref{fig_ER_box}(a) displays the boxplots of $\error{\MoR}{}{}$ and $\error{\RoS}{}{}$ as function of the sample size $|S|$ for a mean in-degree of $p(n-1) = 30$ and $\ratio{} = 0.05~(5\%)$. These boxplots indicate that $\error{\MoR}{}{}$ and $\error{\RoS}{}{}$ decrease as $|S|$ increases. For $|S| > 500$, most estimated errors are below $1.05~(\epsilon=5\%)$. 
Fig.~\ref{fig_ER_box}(b) and \ref{fig_ER_box}(c) show the average $\error{\MoR}{}{}$ and $\error{\RoS}{}{}$, respectively, versus $|S|$ for different mean in-degrees $p(n-1) = 15$, $30$, and $50$, with $\ratio{} = 0.05$. These figures include the threshold levels for $\epsilon=1\%$ and $2\%$ for easy reference. Average errors decrease exponentially with increasing sample size. In addition, higher mean in-degrees lead to lower errors, showing that the estimation accuracy improves with the increasing number of neighbors. Notice that the average $\error{\MoR}{}{}$ and $\error{\RoS}{}{}$ perform similarly for the Erdős–Rényi graph. In summary, the findings indicate that larger sample size, higher hidden population \prevalence, and higher mean in-degree improve accuracy in estimating hidden population \prevalence in an Erdős–Rényi network. 

\noindent
\textbf{Analysis of $\error{\MoR}{}{}$.} Then, we assess the behavior of $\error{\MoR}{}{}$ using the Erdős–Rényi model. Fig.~\ref{fig_ER_bounds}(a) shows the probability $\Pr[\error{\MoR}{}{} > 1 + \epsilon]$ versus $|S|$ for $\epsilon =0.05$, $\ratio{} = 0.05$, and $p(n-1) = 15$, $30$, and $50$. This figure compares simulation results with the analytical bounds from Thm.~\ref{thm:sample-size} and \ref{thm:MoR-erdos-renyi}. The bound derived in Thm.~\ref{thm:sample-size} is obtained using $\alpha = 1/2$. Fig.~\ref{fig_ER_bounds}(b) illustrates the sample size required to achieve $\Pr[\error{\MoR}{}{} > 1 + \epsilon] = 0.05$ versus $\epsilon$ for $\ratio{}=0.05$ and different mean in-degrees. Fig.~\ref{fig_ER_bounds}(c) displays the sample size required to achieve $\Pr[\error{\MoR}{}{} > 1 + \epsilon] = 0.05$ versus $\ratio{}$ for $\epsilon=0.05$ and various mean in-degrees. The required sample size decreases as \ratio{} increases, indicating that higher \ratio{} leads to more accurate estimates. 
Figs~\ref{fig_ER_bounds}(a)-(c) indicate that estimator performance improves with increasing mean in-degree. A higher mean in-degree result in better coverage of the target population; therefore, more information is available to estimate \ratio{}. Finally, the bounds from theorems \ref{thm:sample-size} and \ref{thm:MoR-erdos-renyi} are conservative concerning the simulation results.

\noindent
\textbf{Analysis of $\error{\RoS}{}{}$.} We examine the behavior of $\error{\RoS}{}{}$ using the Erdős–Rényi model. To this end, we consider the analytical bound in Thm.~\ref{thm:RoS-erdos-renyi}. Fig.~\ref{fig_ER_bounds}(d)-(f) illustrate the behavior of $\error{\RoS}{}{}$ in various scenarios. The bound derived in Thm.~\ref{thm:RoS-erdos-renyi} depends on the mean in-degree, and the resulting bounds are closer to the simulation results than those obtained by $\error{\MoR}{}{}$.

\begin{figure}[t]
\begin{center}
    \definecolor{mycolor1}{rgb}{0.00000,0.44700,0.74100}
\definecolor{mycolor2}{rgb}{0.85000,0.32500,0.09800}
\definecolor{mycolor3}{rgb}{0.92900,0.69400,0.12500}
\definecolor{mycolor4}{rgb}{0.4666,0.67451,0.18824}
\definecolor{mycolor5}{rgb}{0.49412,0.18431,0.30588}
\definecolor{mycolor6}{rgb}{0.30196,0.69020,0.93333}

\begin{tabular}{c c c}
\hspace{-15pt}
\begin{minipage}{0.335\linewidth}
    \resizebox{1.00\textwidth}{!}{
    \begin{tikzpicture}
    \begin{axis}[
        scale = 0.7,
        grid=major,
	xlabel = \Large Sample Size $|S|$,
        title = \LARGE (a) $\mathcal{E}_{\mathcal{M}}$,
        boxplot/draw direction=y,
        boxplot={
        draw position={1/4 + floor(\plotnumofactualtype/2) + 1/2*mod(\plotnumofactualtype,2)},
        box extend=0.25,
        },
        legend style={
          cells={anchor=west}, 
          legend image code/.code={ 
            \draw[#1] (0cm,-0.1cm) -- (0.3cm,-0.1cm); 
          }
        },
  	ymin = 1,
  	ymax = 1.12,
        xtick={0,1,2,...,50},
        x tick label as interval,
        xticklabels={%
      {$10^2$},%
      {$5~10^2$},%
      {$10^3$},%
      {$5~10^3$},%
      {$10^4$},%
      {$5~10^4$},%
        },
  	ticklabel style = {font=\large},
  	ytick={1.00, 1.05,1.1, 1.2},
   	]

   \draw[dashed, line width=2pt, black] 
      (axis cs:0,1.02) -- 
      (axis cs:50,1.02) 
      node[pos=1, anchor=west, font=\large] {\textcolor{red}{Reference}};

   \draw[dashed, line width=2pt, black] 
      (axis cs:0,1.05) -- 
      (axis cs:50,1.05) 
      node[pos=1, anchor=west, font=\large] {\textcolor{red}{Reference}};

    \node[anchor=south west, font=\large] at (axis cs:4,1.02) {\textbf{\LARGE $\epsilon = 2\%$}};

    \node[anchor=south west, font=\large] at (axis cs:4,1.05) {\textbf{\LARGE $\epsilon = 5\%$}};

\addplot[line width = 2.00pt, mycolor4]
table[row sep=\\,y index=0] {
data\\
1.000000\\
1.000000\\
1.000000\\
1.115080\\
1.854581\\
};
\addlegendentry{\Large $\mathcal{E}_{MoR}$}

\addplot[line width = 2.00pt, mycolor5]
table[row sep=\\,y index=0] {
data\\
1.000000\\
1.000000\\
1.000000\\
1.076072\\
1.521895\\
};
\addlegendentry{\Large $\mathcal{E}_{RoS}$}

\addplot[line width = 2.00pt, mycolor4]
table[row sep=\\,y index=0] {
data\\
1.000000\\
1.000000\\
1.000000\\
1.052688\\
1.286692\\
};

\addplot[line width = 2.00pt, mycolor5]
table[row sep=\\,y index=0] {
data\\
1.000000\\
1.000000\\
1.000000\\
1.036283\\
1.212121\\
};
\addplot[line width = 2.00pt, mycolor4]
table[row sep=\\,y index=0] {
data\\
1.000000\\
1.000000\\
1.000000\\
1.039643\\
1.245383\\
};

\addplot[line width = 2.00pt, mycolor5]
table[row sep=\\,y index=0] {
data\\
1.000000\\
1.000000\\
1.001409\\
1.026920\\
1.171352\\
};

\addplot[line width = 2.00pt, mycolor4]
table[row sep=\\,y index=0] {
data\\
1.000000\\
1.000000\\
1.000000\\
1.020709\\
1.111752\\
};

\addplot[line width = 2.00pt, mycolor5]
table[row sep=\\,y index=0] {
data\\
1.000000\\
1.000000\\
1.001614\\
1.015538\\
1.086486\\
};
\addplot[line width = 2.00pt, mycolor4]
table[row sep=\\,y index=0] {
data\\
1.000000\\
1.000000\\
1.000040\\
1.016045\\
1.102951\\
};

\addplot[line width = 2.00pt, mycolor5]
table[row sep=\\,y index=0] {
data\\
1.000000\\
1.000000\\
1.001867\\
1.013407\\
1.079733\\
};

\addplot[line width = 2.00pt, mycolor4]
table[row sep=\\,y index=0] {
data\\
1.000000\\
1.000000\\
1.000000\\
1.012029\\
1.065349\\
};

\addplot[line width = 2.00pt, mycolor5]
table[row sep=\\,y index=0] {
data\\
1.000000\\
1.000000\\
1.002178\\
1.011390\\
1.052391\\
};

  	\end{axis}
  \end{tikzpicture}
 }
\end{minipage}

&
\hspace{-13pt}
\begin{minipage}{0.335\linewidth}
    \resizebox{1.00\textwidth}{!}{
    \begin{tikzpicture}
  	\begin{axis}[
    scale =0.70,
  	legend pos = north east,
   	grid = major,
	xlabel = \LARGE Sample Size $|S|$,
    title = \LARGE (b) $\mathcal{E}_{MoR}$,
  	ymin = 1,
  	ymax = 1.12,
  	xmin = 100,
 	xmax = 500000,
    xmode = log,
  	ticklabel style = {font=\large},
  	ytick={1.00, 1.05, 1.10, 1.15},
   	]

  	\addplot[solid, smooth, line width = 2.00pt, mycolor6] table[x ={sizes}, y = eMoR]{sub_parts/FigData/mean_error_rho_2.dat};
    \addlegendentry{$\rho =0.02$}
  	

    \addplot[solid, smooth, line width = 2.00pt,mycolor5] table[x ={sizes}, y = eMoR]{sub_parts/FigData/mean_error_rho_5.dat};
    \addlegendentry{$\rho =0.05$}


    \addplot[solid, smooth, line width = 2.00pt, mycolor4] table[x ={sizes}, y = eMoR]{sub_parts/FigData/mean_error_rho_10.dat};
    \addlegendentry{$\rho=0.10$}

   \node[] at (axis cs: 100000,1.025) {\textbf{\LARGE $\epsilon = 2\%$}};
   \node[] at (axis cs: 100000,1.055) {\textbf{\LARGE $\epsilon = 5\%$}};
   \addplot[dashed, line width = 1.00pt] (100,1.02) -- (1000000,1.02);
   \addplot[dashed, line width = 1.00pt] (100,1.05) -- (1000000,1.05);
  	\end{axis}
  \end{tikzpicture}
 }
\end{minipage}
&
\hspace{-13pt}
\begin{minipage}{0.335\linewidth}
    \resizebox{1.00\textwidth}{!}{
    \begin{tikzpicture}
  	\begin{axis}[
    scale = 0.70,
  	legend pos = north east,
   	grid = major,
	xlabel = \LARGE Sample Size $|S|$,
    title = \LARGE (c) $\mathcal{E}_{RoS}$,
  	ymin = 1,
  	ymax = 1.12,
  	xmin = 100,
 	xmax = 500000,
    xmode = log,
  	ticklabel style = {font=\large},
  	ytick={1.00, 1.05, 1.10, 1.15},
   	]

      \addplot[solid, smooth, line width = 2.00pt, mycolor6] table[x ={sizes}, y = eRoS]{sub_parts/FigData/mean_error_rho_2.dat};
    \addlegendentry{$\rho =0.02$}


    \addplot[solid, smooth, line width = 2.00pt, mycolor5] table[x ={sizes}, y = eRoS]{sub_parts/FigData/mean_error_rho_5.dat};
    \addlegendentry{$\rho =0.05$}


    \addplot[solid, smooth, line width = 2.00pt, mycolor4] table[x ={sizes}, y = eRoS]{sub_parts/FigData/mean_error_rho_10.dat};
    \addlegendentry{$\rho=0.10$}
    

   \node[] at (axis cs: 100000,1.025) {\textbf{\LARGE $\epsilon = 2\%$}};
   \node[] at (axis cs: 100000,1.055) {\textbf{\LARGE $\epsilon = 5\%$}};
   \addplot[dashed, line width = 1.00pt] (100,1.02) -- (1000000,1.02);
   \addplot[dashed, line width = 1.00pt] (100,1.05) -- (1000000,1.05);
  	\end{axis}
  \end{tikzpicture}
 }
\end{minipage}
\end{tabular}
\end{center}
    \caption{Scale Free ($n = 10^6$). (a) Boxplots of $\error{\method}{}{}$ versus $|S|$ for MoR and RoS. (b) Average $\error{\MoR}{}{}$ versus $|S|$ for different $\ratio{}$. (c) Average $\error{\RoS}{}{}$ versus $|S|$ for different $\ratio{}$.}
    \label{fig_SF_box}
    \Description{}
\end{figure}

\begin{figure*}[t]
\begin{center}
    \definecolor{mycolor1}{rgb}{0.00000,0.44700,0.74100}
\definecolor{mycolor2}{rgb}{0.85000,0.32500,0.09800}
\definecolor{mycolor3}{rgb}{0.92900,0.69400,0.12500}
\definecolor{mycolor4}{rgb}{0.4666,0.67451,0.18824}
\definecolor{mycolor5}{rgb}{0.49412,0.18431,0.30588}
\definecolor{mycolor6}{rgb}{0.30196,0.69020,0.93333}

\begin{tabular}{c c c}
\hspace{-15pt}
\begin{minipage}{0.22\linewidth}
\resizebox{1.00\textwidth}{!}{
 \begin{tikzpicture}
 	\begin{axis}[
    ylabel shift = -5pt,
    scale = 0.70,
 	legend pos = north east,
  	grid = major,
 	ylabel = \LARGE \textrm{$\Pr\left[\mathcal{E}_{\MoR} > 1 + \epsilon\right]$},
        title =  \LARGE \textrm{(a) $\epsilon = 0.05$},
 	ymin = 0,
 	ymax = 1.2,
 	xmin = 100,
	xmax = 500000,
        xmode = log,
 	ticklabel style = {font=\Large},
  	]
    \addplot[solid, smooth, line width = 2.00pt, mycolor6] table[x ={S}, y = MoR]{sub_parts/FigData/Pe_rho_2_epsilon_50.dat};\label{line_SF_20}

    \addplot[solid, smooth, line width = 2.00pt, mycolor5] table[x ={S}, y = MoR]{sub_parts/FigData/Pe_rho_5_epsilon_50.dat};\label{line_SF_30}
        
    \addplot[solid, smooth, line width = 2.00pt, mycolor4] table[x ={S}, y = MoR]{sub_parts/FigData/Pe_rho_10_epsilon_50.dat};\label{line_SF_50}

   \addplot[dashed, dash pattern=on 8pt off 3pt, line width = 2.00pt, mycolor6] table[x ={S}, y = MoRB]{sub_parts/FigData/Pe_rho_2_epsilon_50.dat};

   \addplot[dashed, dash pattern=on 8pt off 3pt, line width = 2.00pt, mycolor5] table[x ={S}, y = MoRB]{sub_parts/FigData/Pe_rho_5_epsilon_50.dat};
   
   \addplot[dashed, dash pattern=on 8pt off 3pt, line width = 2.00pt, mycolor4] table[x ={S}, y = MoRB]{sub_parts/FigData/Pe_rho_10_epsilon_50.dat};

   \addplot[densely dashdotted, line width = 2.00pt, mycolor6] (329768,0.0) -- (329768,1.2);
   \addplot[densely dashdotted, line width = 2.00pt, mycolor5] (131907,0.0) -- (131907,1.2);
   \addplot[densely dashdotted, line width = 2.00pt, mycolor4] (65953,0.0) -- (65953,1.2);

\draw[rotate=0, line width = 1.50pt] (200, 0.35) ellipse (3pt and 10pt);
\node[] at (axis cs: 250,0.50) {\Large \textbf{Sim.}};

   \draw[rotate=0, line width = 1.50pt] (10000, 0.98) ellipse (15pt and 3pt);
   \node[] at (axis cs: 10000,1.09) {\Large  \textbf{Thm \ref{thm:MoR-bound-main}}}; 

   \draw[rotate=0, line width = 1.50pt] (150000, 0.82) ellipse (15pt and 3pt);
   \node[rotate=90] at (axis cs: 45000,0.75) {\Large  \textbf{Thm \ref{thm:sample-size}}};   
 	\end{axis}
    
 \end{tikzpicture}
}
\end{minipage}
     &
\hspace{-10pt}
\begin{minipage}{0.23\linewidth}
\resizebox{1.00\textwidth}{!}{
 \begin{tikzpicture}
 	\begin{axis}[
        ylabel shift = -5pt,
        scale = 0.70,
 	legend pos = north east,
  	grid = major,
	ylabel = \Large \textrm{Sample Size $|S|$},
        title =  \LARGE \textrm{(b)},
 	ymin = 125,
 	ymax = 100000,
 	xmin = 0.065,
	xmax = 0.18,
        ymode = log,
 	ticklabel style = {font=\Large},
    x tick label style={
        /pgf/number format/.cd,
            fixed,
            fixed zerofill,
            precision=2,
        /tikz/.cd
    }]
    \addplot[solid, smooth, line width = 2.00pt,mycolor6] table[x ={epsilon}, y = size]{sub_parts/FigData/size_vs_epsilonMoR_rho_2.dat};
    \addplot[solid, smooth, line width = 2.00pt,mycolor5] table[x ={epsilon}, y = size]{sub_parts/FigData/size_vs_epsilonMoR_rho_5.dat};
    \addplot[solid, smooth, line width = 2.00pt,mycolor4] table[x ={epsilon}, y = size]{sub_parts/FigData/size_vs_epsilonMoR_rho_10.dat};

    \addplot[dashed, dash pattern=on 8pt off 3pt, line width = 2.00pt, mycolor6, smooth] table[x ={epsilon}, y = size]{sub_parts/FigData/size_vs_epsilonMoRB_rho_2.dat};
    \addplot[dashed, dash pattern=on 8pt off 3pt, line width = 2.00pt, mycolor5,smooth] table[x ={epsilon}, y = size]{sub_parts/FigData/size_vs_epsilonMoRB_rho_5.dat};
    \addplot[dashed, dash pattern=on 8pt off 3pt, line width = 2.00pt, mycolor4,smooth] table[x ={epsilon}, y = size]{sub_parts/FigData/size_vs_epsilonMoRB_rho_10.dat};

   \draw[rotate=0, line width = 1.50pt] (0.15, 10000) ellipse (3pt and 12pt);
   \node[] at (axis cs: 0.15, 40000) {\Large \textbf{Thm \ref{thm:MoR-bound-main}}};

   \draw[rotate=0, line width = 1.50pt] (0.12, 800) ellipse (3pt and 13pt);
   \node[] at (axis cs: 0.11,300) {\Large \textbf{Sim.}};
 	\end{axis}

 \end{tikzpicture}
}
\end{minipage}
     &
\hspace{-10pt}
\begin{minipage}{0.22\linewidth}
\resizebox{1.00\textwidth}{!}{
 \begin{tikzpicture}
 	\begin{axis}[
        legend pos = south west,
        scale = 0.70,
  	grid = major,
        title =  \LARGE \textrm{(c) $\epsilon = 0.05$},
 	ymin = 200,
 	ymax = 100000,
 	xmin = 0.05,
	xmax = 0.20,
        ymode = log,
 	ticklabel style = {font=\Large},
  	]
  	\addplot[solid, smooth, line width = 2.00pt,mycolor5] table[x ={rho}, y = size]{sub_parts/FigData/size_vs_rhoMoR.dat};
    \addlegendentry{\large Sim.}
    \addplot[dashed, dash pattern=on 8pt off 3pt,smooth,line width = 2.00pt,mycolor5] table[x ={rho}, y = size]{sub_parts/FigData/size_vs_rhoMoRB.dat};
    \addlegendentry{\large Thm \ref{thm:MoR-bound-main}}
   
 	\end{axis}
 \end{tikzpicture}
}
\end{minipage}
     \\
\hspace{-15pt}
\begin{minipage}{0.22\linewidth}
\resizebox{1.00\textwidth}{!}{
 \begin{tikzpicture}
 	\begin{axis}[
        ylabel shift = -5pt,
        scale = 0.70,
 	legend pos = north east,
  	grid = major,
 	ylabel = \LARGE \textrm{$\Pr\left[\mathcal{E}_{\RoS} > 1 + \epsilon\right]$},
	xlabel = \LARGE \textrm{Sample Size $|S|$},
        title =  \LARGE \textrm{(d) $\epsilon = 0.05$},
 	ymin = 0,
 	ymax = 1.2,
 	xmin = 100,
	xmax = 500000,
        xmode = log,
 	ticklabel style = {font=\Large},
  	]

    \addplot[solid, smooth, line width = 2.00pt, mycolor6] table[x ={S}, y = RoS]{sub_parts/FigData/Pe_rho_2_epsilon_50.dat};
    \addplot[solid, smooth, line width = 2.00pt, mycolor4] table[x ={S}, y = RoS]{sub_parts/FigData/Pe_rho_10_epsilon_50.dat};
    \addplot[dashed, dash pattern=on 8pt off 3pt, line width = 2.00pt, mycolor6] table[x ={S}, y = RoSB]{sub_parts/FigData/Pe_rho_2_epsilon_50.dat};

    \addplot[dashed, dash pattern=on 8pt off 3pt, line width = 2.00pt, mycolor5] table[x ={S}, y = RoSB]{sub_parts/FigData/Pe_rho_5_epsilon_50.dat};
   
    \addplot[dashed, dash pattern=on 8pt off 3pt, line width = 2.00pt, mycolor4] table[x ={S}, y = RoSB]{sub_parts/FigData/Pe_rho_10_epsilon_50.dat};
    
    \addplot[densely dashdotted, line width = 2.00pt, mycolor6] (329768,0.0) -- (329768,1.2);
    \addplot[densely dashdotted, line width = 2.00pt, mycolor5] (131907,0.0) -- (131907,1.2);
    \addplot[densely dashdotted, line width = 2.00pt, mycolor4] (65953,0.0) -- (65953,1.2);

\draw[rotate=0, line width = 1.50pt] (200, 0.25) ellipse (3pt and 12pt);
\node[] at (axis cs: 250,0.45) {\Large \textbf{Sim.}};

\draw[rotate=0, line width = 1.50pt] (4000, 1.00) ellipse (15pt and 3pt);
\node[] at (axis cs: 4000,1.10) {\Large \textbf{Thm \ref{thm:RoS-bound}}};

   \draw[rotate=0, line width = 1.50pt] (150000, 0.82) ellipse (15pt and 3pt);
   \node[rotate=90] at (axis cs: 45000,0.75) {\Large \textbf{Thm \ref{thm:sample-size}}};
   
 	\end{axis}    
 \end{tikzpicture}
}
\end{minipage}
&
\hspace{-10pt}
\begin{minipage}{0.23\linewidth}
\resizebox{1.00\textwidth}{!}{
 \begin{tikzpicture}
 	\begin{axis}[
        ylabel shift = -5pt,
        scale = 0.70,
 	legend pos = north east,
  	grid = major,
 	xlabel = \huge $\epsilon$,
	ylabel = \Large \textrm{Sample Size $|S|$},
        title =  \LARGE \textrm{(e)},
 	ymin = 150,
 	ymax = 100000,
 	xmin = 0.065,
	xmax = 0.18,
        ymode = log,
 	ticklabel style = {font=\Large},
        x tick label style={
        /pgf/number format/.cd,
            fixed,
            fixed zerofill,
            precision=2,
        /tikz/.cd
    }
  	]
    \addplot[solid, smooth, line width = 2.00pt, mycolor6] table[x ={epsilon}, y = size]{sub_parts/FigData/size_vs_epsilonRoS_rho_2.dat};
    \addplot[solid, smooth, line width = 2.00pt, mycolor5] table[x ={epsilon}, y = size]{sub_parts/FigData/size_vs_epsilonRoS_rho_5.dat};
    \addplot[solid, smooth, line width = 2.00pt, mycolor4] table[x ={epsilon}, y = size]{sub_parts/FigData/size_vs_epsilonRoS_rho_10.dat};

    \addplot[dashed, dash pattern=on 8pt off 3pt, line width = 2.00pt, mycolor6, smooth] table[x ={epsilon}, y = size]{sub_parts/FigData/size_vs_epsilonRoSB_rho_2.dat};
    \addplot[dashed, dash pattern=on 8pt off 3pt, line width = 2.00pt, mycolor5, smooth] table[x ={epsilon}, y = size]{sub_parts/FigData/size_vs_epsilonRoSB_rho_5.dat};
    \addplot[dashed, dash pattern=on 8pt off 3pt, line width = 2.00pt, mycolor4, smooth] table[x ={epsilon}, y = size]{sub_parts/FigData/size_vs_epsilonRoSB_rho_10.dat};

   \draw[rotate=0, line width = 1.50pt] (0.15, 3500) ellipse (2pt and 12pt);
   \node[] at (axis cs: 0.15,12000) {\Large \textbf{Thm \ref{thm:RoS-bound}}};

   \draw[rotate=0, line width = 1.50pt] (0.08, 800) ellipse (3pt and 14pt);
   \node[] at (axis cs: 0.092,2200) {\Large \textbf{Sim.}};
 	\end{axis}

 \end{tikzpicture}
}
\end{minipage}
&
\hspace{-10pt}
\begin{minipage}{0.22\linewidth}
\resizebox{1.00\textwidth}{!}{
 \begin{tikzpicture}
 	\begin{axis}[
        legend pos = north east,
        scale = 0.70,
        grid = major,
 	xlabel = \huge $\rho$,
        title =  \LARGE \textrm{(f) $\epsilon = 0.05$},
 	ymin = 200,
 	ymax = 100000,
 	xmin = 0.05,
	xmax = 0.20,
        ymode = log,
 	ticklabel style = {font=\Large},
  	]
    \addplot[solid, smooth, line width = 2.00pt, mycolor5] table[x ={rho}, y = size]{sub_parts/FigData/size_vs_rhoRoS.dat};
    \addlegendentry{\large Sim.}
    \addplot[dashed, dash pattern=on 8pt off 3pt, line width = 2.00pt,mycolor5] table[x ={rho}, y = size]{sub_parts/FigData/size_vs_rhoRoSB.dat};
    \addlegendentry{\large Thm \ref{thm:RoS-bound}}

 	\end{axis}
    
 \end{tikzpicture}
}
\end{minipage}
\end{tabular}
\end{center}
    \caption{Scale Free ($n = 10^6$). Simulation curves and bounds for (\ref{line_SF_20}) $\ratio{} = 0.02$, (\ref{line_SF_30}) $\ratio{} = 0.05$, and (\ref{line_SF_50}) $\ratio{} = 0.10$. (a) $\Pr[\error{\MoR}{}{} > 1 + \epsilon]$ versus $|S|$ for $\epsilon = 0.05$. (b) Sample size versus $\epsilon$. (c) Sample size versus $\ratio{}$ for $\epsilon = 0.05$. (d) $\Pr[\error{\RoS}{}{} > 1 + \epsilon]$ versus $|S|$ for $\epsilon = 0.05$. (e) Sample size versus $\epsilon$. (f) Sample size versus $\ratio{}$ for $\epsilon = 0.05$.}
    \label{fig_SF_bounds}
    \Description{}
\end{figure*}

\subsection{Scale-Free}

In this section, we evaluate the accuracy of the 
estimators using scale-free graphs with size $n=10^6$.

\noindent
\textbf{Error versus sample size.} Fig.~\ref{fig_SF_box}(a) shows the boxplots of $\error{\MoR}{}{}$ and $\error{\RoS}{}{}$ versus $|S|$ for $\ratio{} = 0.05~(5\%)$. Errors decrease with an increase in $|S|$. Additionally, the $\RoS$ estimator yields lower errors than the $\MoR$ estimator across various sample sizes. Fig.~\ref{fig_SF_box}(b) and (c) illustrate the average $\error{\MoR}{}{}$ and $\error{\RoS}{}{}$, respectively, versus $|S|$ for $\ratio{} = 0.02$, $0.05$, and $0.10$. 
The results show that higher \ratio{} leads to lower estimation errors, indicating improved estimator performance. Moreover, the $\estimate{\RoS}{} $ estimator demonstrates better accuracy than the $\estimate{\MoR}{} $ estimator as the sample size increases. These findings suggest that larger sample sizes and higher hidden population \prevalence lead to more accurate estimations, with the $\RoS$ estimator providing superior performance.


\noindent
\textbf{Analysis of $\error{\MoR}{}{}$.} In this case, we analyze the behavior of $\error{\MoR}{}{}$ in the Scale-Free model under various conditions. Fig.~\ref{fig_SF_bounds}(a) shows the probability $\Pr[\error{\MoR}{}{} > 1 + \epsilon]$ versus $|S|$ for $\epsilon = 0.05$ and different actual $\ratio{} = 0.02$, $0.05$, and $0.10$. This figure includes the theoretical bounds from Thms \ref{thm:MoR-bound-main} and \ref{thm:sample-size}, with Thm.~\ref{thm:sample-size} bounds obtained using $\alpha = 1/2$. The bounds from Thm.~\ref{thm:sample-size} are close to the simulation results. Fig.~\ref{fig_SF_bounds}(b) displays the sample size required to achieve $\Pr[\error{\MoR}{}{} > 1 + \epsilon] = 0.05$ versus $\epsilon$ for different $\ratio{}$. The sample sizes required from Thm.~\ref{thm:MoR-bound-main} are approximately one order of magnitude higher than the simulations. Fig.~\ref{fig_SF_bounds}(c) shows the required sample size versus the actual $\ratio{}$ for $\Pr[\error{\MoR}{}{} > 1 + \epsilon] = 0.05$. 

\noindent
\textbf{Analysis of $\error{\RoS}{}{}$.} We evaluate the behavior of $\error{\RoS}{}{}$ using the Scale-Free network model, employing the bounds derived in 
Thm.~\ref{thm:sample-size} and \ref{thm:RoS-bound}. Fig.~\ref{fig_SF_bounds}(d) shows the probability $\Pr[\error{\RoS}{}{} > 1 + \epsilon]$ versus sample size $|S|$ for $\epsilon = 0.05$ and $\ratio{} = 0.02$, $0.05$, and $0.10$. The bounds from Thm.~\ref{thm:sample-size} and \ref{thm:RoS-bound} are compared to the simulation results. Fig.~\ref{fig_SF_bounds}(e) illustrates the required sample size to achieve $\Pr[\error{\RoS}{}{} > 1 + \epsilon] = 0.05$ versus $\epsilon$ for $\ratio{} = 0.05$. The results indicate that the sample sizes predicted by Thm.~\ref{thm:RoS-bound} are slightly higher than those obtained from simulations. Fig.~\ref{fig_SF_bounds}(f) presents the required sample size versus \ratio{} for $\Pr[\error{\RoS}{}{} > 1 + \epsilon] = 0.05$ and $\epsilon = 0.05$.

\begin{table}
\caption{Summary of characteristics of friendship networks from Deezer dataset \cite{rozemberczki2019gemsec}.}
\scriptsize
    \centering
    \begin{sc}
    \resizebox{0.5\linewidth}{!}{
    \begin{tabular}{| c | c |c c c |}
    \hline
    
         \multicolumn{2}{|c|}{} & \multicolumn{3}{c|}{Network}  \\
         \cline{3-5}
         \multicolumn{2}{|c|}{} & Croatia & Hungary & Romania \\ 
         \hline
        \multicolumn{2}{|c|}{Nodes} & $54,573$ & $41,538$ & $41,773$ \\
         \multicolumn{2}{|c|}{Edges} & $498,202$ & 
         $222,887$ & $125,826$ \\
         \multicolumn{2}{|c|}{Avg Node Deg} & $18.26$ & $9.38$ & $6.02$ \\
         \hline
    \end{tabular}
    }
    \end{sc}
    \label{tab:deezer-data}
\end{table}

\begin{figure}[t!]
\begin{center}
    \definecolor{mycolor1}{rgb}{0.00000,0.44700,0.74100}
\definecolor{mycolor2}{rgb}{0.85000,0.32500,0.09800}
\definecolor{mycolor3}{rgb}{0.92900,0.69400,0.12500}
\definecolor{mycolor4}{rgb}{0.46667,0.67451,0.18824}
\definecolor{mycolor5}{rgb}{0.49412,0.18431,0.30588}
\definecolor{mycolor6}{rgb}{0.30196,0.69020,0.93333}

\begin{tabular}{c c c}
    \hspace{-10pt}
    \begin{minipage}{.3\linewidth}
    \resizebox{1.00\textwidth}{!}{
    \begin{tikzpicture}
        \begin{axis}[
            scale only axis,
            width=\textwidth,
            ylabel style = {yshift=-0.1cm},
            xlabel = \small Degree,
            xlabel style = {yshift=0.1cm},
            title = \small Croatia,
            ymin=0,
            ymax=0.1,
            xmin=0,
            xmax=50,
            ticklabel style = {font=\scriptsize},
            ytick ={0.1},
        ]
        \addplot[ybar,bar width=1.00,fill=mycolor1!80,opacity=0.5] table[x ={x},y = y]{sub_parts/FigData/data_hist_HR.dat};
\end{axis}
\end{tikzpicture}
    }
    \end{minipage}
    &
\hspace{-12pt}
    \begin{minipage}{.3\linewidth}
    \resizebox{1.00\textwidth}{!}{
    \begin{tikzpicture}
        \begin{axis}[
            scale only axis,
            width=\textwidth,
            ylabel style = {yshift=-0.1cm},
            xlabel = \small Degree,
            xlabel style = {yshift=0.1cm},
            title = \small Hungary,
            ymin=0,
            ymax=0.1,
            xmin=0,
            xmax=50,
            ticklabel style = {font=\scriptsize},
            ytick ={0.1},
        ]
        \addplot[ybar,bar width=1.00,fill=mycolor1!80,opacity=0.5] table[x ={x},y = y]{sub_parts/FigData/data_hist_HU.dat};
\end{axis}
\end{tikzpicture}
    }
    \end{minipage}
&
    \hspace{-12pt}
    \begin{minipage}{.3\linewidth}
    \resizebox{1.00\textwidth}{!}{
    \begin{tikzpicture}
        \begin{axis}[
            scale only axis,
            width=\textwidth,
            ylabel style = {yshift=-0.1cm},
            xlabel = \small Degree,
            xlabel style = {yshift=0.1cm},
            title = \small Romania,
            ymin=0,
            ymax=0.2,
            xmin=0,
            xmax=50,
            ticklabel style = {font=\scriptsize},
            ytick ={0.1,0.2},
        ]
        \addplot[ybar,bar width=1.00,fill=mycolor1!80,opacity=0.5] table[x ={x},y = y]{sub_parts/FigData/data_hist_RO.dat};
\end{axis}
\end{tikzpicture}
    }
    \end{minipage}
\end{tabular}\vspace{-10pt}
\end{center}
    \caption{Deezer Music Dataset. Histograms of the node degree for the networks obtained from Croatia, Hungary, and Romania.}
    \Description{}
    \label{fig:hist1}
\end{figure}

\begin{figure}
\begin{center}
    \input{sub_parts/FigTex/Bound_MoR_StanfordGraphs}
\end{center}
    \caption{Gemsec Deezer dataset. Hungary. ($n = 47,538$). $\Pr\left[\mathcal{E}_{\mathcal{M}} > 1 + \epsilon\right]$ versus $|S|$ obtained using (\ref{crv:simMoR}) MoR, (\ref{crv:simRoS}) RoS,(\ref{crv:MoRbndmain}) Thm~\ref{thm:MoR-bound-main}, (\ref{crv:bndRoSProb}) Thm.~\ref{thm:RoS-bound-main}, and (\ref{crv:smpsz}) Thm.~\ref{thm:sample-size} for $\epsilon = 0.10$ and music genres: (a) Soundtracks, (b) Comedy, (c) Singer \& Songwriter, (d) Film-Games, (e) Alternative, and (f) Dance.}
    \label{fig_realdata_boundsMoR}
    \Description{}
\end{figure}


 \begin{figure*}[ht]
 \begin{center}
     \input{sub_parts/FigTex/Bound_MoR_HR}
 \end{center}
     \caption{Gemsec Deezer dataset. Croatia. ($n = 54,573$). $\Pr\left[\mathcal{E}_{\mathcal{M}} > 1 + \epsilon\right]$ versus $|S|$ obtained using (\ref{crv:simMoR}) MoR, (\ref{crv:simRoS}) RoS,(\ref{crv:MoRbndmain}) Thm~\ref{thm:MoR-bound-main}, (\ref{crv:bndRoSProb}) Thm.~\ref{thm:RoS-bound-main}, and (\ref{crv:smpsz}) Thm.~\ref{thm:sample-size} for $\epsilon = 0.10$ and music genres: (a) Trance, (b) Kids, (c) Reggae, (d) Latin Music, (e) Singer \& Songwriter, and (f) Alternative.}
     \Description{}
     \label{fig_realdata_boundsMoR_HR}
 \end{figure*}

 \begin{figure*}[ht]
 \begin{center}
     \input{sub_parts/FigTex/Bound_MoR_RO}
 \end{center}
     \caption{Gemsec Deezer dataset. Romania. ($n = 41,773$). $\Pr\left[\mathcal{E}_{\mathcal{M}} > 1 + \epsilon\right]$ versus $|S|$ obtained using (\ref{crv:simMoR}) MoR, (\ref{crv:simRoS}) RoS,(\ref{crv:MoRbndmain}) Thm~\ref{thm:MoR-bound-main}, (\ref{crv:bndRoSProb}) Thm.~\ref{thm:RoS-bound-main}, and (\ref{crv:smpsz}) Thm.~\ref{thm:sample-size} for $\epsilon = 0.10$ and music genres: (a) Trance, (b) Hard Rock, (c) Indie Pop-Folk, (d) R \& B, (e) Electro, and (f) Dance.}
     \Description{}
     \label{fig_realdata_boundsMoR_RO}
 \end{figure*}

 \begin{figure*}[ht]
     \centering
     \includegraphics[width=\linewidth]{sub_parts/EpsFigs/HU_hist_Rs.png}
     \caption{Gamesec Deezer dataset. Hungary. ($n=47,538$). Histograms of $R_s$ for different sample sizes. Each histogram is obtained from $10,000$ realizations of the sampling process.}
     \Description{}
     \label{fig:histRsHU}
 \end{figure*}

 \begin{figure*}[ht]
     \centering
     \includegraphics[width=\linewidth]{sub_parts/EpsFigs/HR_hist_Rs.png}
     \caption{Gamesec Deezer dataset. Croatia. ($n=54,573$). Histograms of $R_s$ for different sample sizes. Each histogram is obtained from $10,000$ realizations of the sampling process.}
     \Description{}
     \label{fig:histRsHR}
 \end{figure*}

 \begin{figure*}[ht]
     \centering
     \includegraphics[width=\linewidth]{sub_parts/EpsFigs/RO_hist_Rs.png}
     \caption{Gamesec Deezer dataset. Romania. ($n=41,773$). Histograms of $R_s$ for different sample sizes. Each histogram is obtained from $10,000$ realizations of the sampling process.}
     \Description{}
     \label{fig:histRsRO}
 \end{figure*}

\section{Results with Real Networks}
\label{sec:simulation}

\noindent
\textbf{Dataset.} This section evaluates the performance of the hidden population rate estimators by using (undirected) friendship networks extracted from Deezer, a popular music streaming platform, in November 2017. Specifically, these datasets contain friendship networks of the Deezer users from three European countries: Hungary, Romania, and Croatia. In these datasets, each node represents a Deezer user, and each edge denotes a mutual friendship between users. Moreover, based on their liked song lists, each node is labeled with a set of music genres (from a list of 84 music genres) they have shown interest in. These datasets were collected by Rozemberczki et al. as a part of their study on graph embedding techniques \cite{rozemberczki2019gemsec}. Additionally, these datasets are publicly available through the Stanford Large Network Dataset Collection: \url{https://snap.stanford.edu/data/gemsec-Deezer.html}. 



Table \ref{tab:deezer-data} summarizes the characteristics of the friendship networks extracted from the Gemsec Deezer dataset. These networks comprise roughly between $41,000$ and $55,000$ nodes, and between $125,000$ and $500,000$ edges. Notice that the networks exhibit different averages of node degree. Fig.~\ref{fig:hist1} illustrates the node degree distributions for the friendship networks obtained from the three countries. 

\noindent
\textbf{Analysis of $\error{\MoR}{}{}$.} Our study identified user groups interested in particular music genres. For example, we have chosen user sets with preferences for soundtracks, comedy, singer \& songwriter, film games, alternative, and dance as hidden populations in Hungary. We selected the music genres for each country to encompass a range of hidden population rates from 1\% to 50\%. Our approach involves utilizing varying sample sizes for each hidden population rate, and for each sample size, we generate 10,000 realizations of the sampling process. We calculate the hidden population rate at each trial using the $\MoR$ estimator and the corresponding $\mathcal{E}_{\MoR}$. Fig.~\ref{fig_realdata_boundsMoR} shows the probabilities $\Pr\left[\mathcal{E}_{\MoR} > 1 + \epsilon\right]$ as a function of the sample size $|S|$ obtained through simulations, Thm.~\ref{thm:MoR-bound-main}, and Thm.~\ref{thm:sample-size} for $\epsilon = 10\%$ and the different music genres selected from the Hungary dataset. Notice that the sample size bound in Thm.~\ref{thm:sample-size} is calculated using $\alpha=1/2$. 
Fig.~\ref{fig_realdata_boundsMoR_HR} and \ref{fig_realdata_boundsMoR_RO} present $\Pr\left[\mathcal{E}_{\MoR} > 1 + \epsilon\right]$ versus $|S|$ for $\epsilon = 10\%$ and the music genres selected from the Croatia and Romania datasets, respectively. 
As can be seen in these figures, bounds from Thm.~\ref{thm:MoR-bound-main} and Thm.~\ref{thm:sample-size} are conservative concerning probabilities obtained from simulations.

\begin{table*}[ht]
\caption{Summary of Analytical Error Bounds for NSUM Hidden Population Rate Estimators. Sample $S \subseteq V$ is random uniform with $m=|S|$, $\Reach{S}=\sum_{v \in S} \reach{v}$, and any $\beta=1+\epsilon > 1$, any $\alpha>0$, and any $\delta \in (0,1)$.}
\tiny
    \centering
    \begin{sc}
    \begin{tabular}{|c|c|c|c|}
    \hline
    \hline
     Case  & Description & Bound & Theorem \\
    \hline
    \hline
    Adversarial & Lower bound, & \multirow{2}{*}{$\error{\method}{}{} \geq \sqrt{(n-1)/2}$ } & \multirow{2}{*}{\ref{t-lower-any}} \\
    Instances & Deterministic $\method$ & & \\
    \hline
    \hline
    \multirow{8}{*}{\begin{tabular}{c}
     Random \\
     Networks
    \end{tabular}} & Upper Bound \MoR & $\Pr\brac{\error{\MoR}{}{} > \beta } \leq \paren{\frac{e^{\beta-1}}{\beta^{\beta}}}^{m \ratio{}} + \paren{\frac{e^{\frac{1}{\beta}-1}}{\beta^{-1/\beta}}}^{m \ratio{}}$ & \ref{thm:MoR-bound-main} \\
    \cline{2-4} & Upper Bound \RoS & $\Pr\brac{\error{\RoS}{}{} > \beta } \leq \sum_{\Reach{}} \paren{ \paren{\frac{e^{\beta-1}}{\beta^{\beta}}}^{\Reach{} \ratio{}} + \paren{\frac{e^{\frac{1}{\beta}-1}}{\beta^{-1/\beta}}}^{\Reach{} \ratio{}} } \Pr\Brack{\Reach{S}=\Reach{}}$ & \ref{thm:RoS-bound-main} \\
    \cline{2-4}
    & \shortstack{Sample size \\ $\method \in \{\MoR, \RoS\}$ 
    } 
    & $m = |S| \geq \frac{\ln{2} + \alpha \ln{n}}{\ratio{} (1 -\frac{1}{\beta} (\ln{\beta} + 1))} \implies \Pr\brac{\error{\method}{}{} > \beta } \leq 1/n^\alpha$
    &  \ref{thm:sample-size} \\
    \hline
    \hline
    \multirow{5}{*}{\begin{tabular}{c}
     Erdős–Rényi \\
     Networks
    \end{tabular}} & Upper Bound \MoR & See above & \ref{thm:MoR-erdos-renyi} \\
    \cline{2-4} & \shortstack{Upper Bound \RoS \\ $\mu = m p(n-1)$} & 
    $\Pr\brac{\error{\RoS}{}{} > \beta } \leq \paren{\frac{e^{-\delta}}{(1-\delta)^{(1-\delta)}}}^{\mu} + 
  \paren{\frac{e^{\beta-1}}{\beta^{\beta}}}^{(1-\delta) \mu \ratio{}} + \paren{\frac{e^{\frac{1}{\beta}-1}}{\beta^{-1/\beta}}}^{(1-\delta) \mu \ratio{}}$
    & \ref{thm:RoS-erdos-renyi} \\
    \hline
    \hline
    \multirow{5}{*}{\begin{tabular}{c}
     Scale-Free \\
     Networks
    \end{tabular}}& Upper Bound \MoR & See above & \ref{thm:MoR-bound-main} \\
    \cline{2-4}
     & \shortstack{Upper Bound \RoS, 
     $\gamma>2$\\ $\mu \approx m \frac{1-\gamma}{2-\gamma} \frac{1-(n-1)^{2-\gamma}}{1-(n-1)^{1-\gamma}}$} & 
    $\Pr\brac{\error{\RoS}{}{} > \beta } \leq \paren{\frac{e^{-\delta}}{(1-\delta)^{(1-\delta)}}}^{\mu} + 
  \paren{\frac{e^{\beta-1}}{\beta^{\beta}}}^{(1-\delta) \mu \ratio{}} + \paren{\frac{e^{\frac{1}{\beta}-1}}{\beta^{-1/\beta}}}^{(1-\delta) \mu \ratio{}}$
    & \ref{thm:RoS-bound} \\
    \hline
    \hline
    \end{tabular}
    \end{sc}
    \label{tab:summary}
\end{table*}

\noindent
\textbf{Analysis of $\error{\RoS}{}{}$.} We have utilized the $\RoS$ estimator to calculate hidden population rates and determined the corresponding $\mathcal{E}_{\MoR}$. In Fig.~\ref{fig_realdata_boundsMoR} , we have also depicted the probabilities $\Pr\left[\mathcal{E}_{\RoS} > 1 + \epsilon\right]$ against $|S|$ obtained from simulations and Thm.~\ref{thm:RoS-bound-main} for $\epsilon=0.10$ and the music genres selected from the Hungary dataset. To derive the bound as defined in \Cref{thm:RoS-bound-main}, we have used the probability mass distribution of the variable $R_S$ based on 10,000 realizations of the sampling process for each sample size. 
Figs \ref{fig:histRsHU}, \ref{fig:histRsHR}, and \ref{fig:histRsRO} display the histograms of $R_S$ for different countries and sample sizes. 
~\ref{fig_realdata_boundsMoR_HR} and \ref{fig_realdata_boundsMoR_RO} also illustrate $\Pr\left[\mathcal{E}_{\RoS} > 1 + \epsilon\right]$ versus $|S|$ for $\epsilon = 10\%$ and genres selected from the Croatia and Romania datasets, respectively. 
In this case, the bounds derived from Thm.~\ref{thm:RoS-bound-main} are slightly higher than those obtained from simulations.

\section{Conclusions and Future Work}

In this paper, we derived analytical error bounds for NSUM hidden population rate estimators. The summary of the derived analytical bounds is presented in Table \ref{tab:summary}. Notice that these analytical bounds were determined assuming that sampled nodes accurately report their in-degree and the number of neighboring nodes that belong to the hidden population. Specifically, in adversarial scenarios, the error can remain high even when ARD of all nodes is available.  For such cases, we establish a lower bound. In addition, we have derived analytical error upper bounds for two NSUM hidden population rate estimators (the mean of rates, \MoR, and the rate of sums, \RoS) in random networks and provided the analytical error bounds for Erdős–Rényi and Scale-Free networks. 

On the other hand, extensive numerical simulations have been conducted on synthetic and real networks to evaluate the behavior of the error bounds as the sample size increases. Simulations have allowed us to estimate the minimum sample size required to achieve a given error probability for different error thresholds and hidden population rates. Particularly, the results of the numerical experiments carried out support and corroborate the theoretical analysis on the error bounds.

To our knowledge, this paper is the first to present theoretical analytical bounds on the performance of NSUM methods, opening many avenues for future research. For instance, it will be of interest to study which additional information can be collected from the samples in order to overcome the worst-case results. Regarding random networks, topologies beyond Erdős–Rényi and Scale-Free can be explored. For instance, the Stochastic Block Model or Hyperbolic topologies seem of special interest. It is also worth analyzing the worst error that may arise when the topology is random but the dissemination of the hidden sub-population is controlled by an adversary after the network has been defined. Finally, efficient methods to obtain information about the social network from ARD beyond the size of the hidden sub-population, e.g., the degree distribution or the topology, would be of great interest.

\section{Acknowledgments}

This paper has been funded by project PID2022-140560OB-I00 (DRONAC)
funded by MICIU/AEI /10.13039/501100011033 and ERDF, EU. This research is part of the I+D+i projects PID2022-137243OB-I00 funded by MCIN/AEI/10.13039/501100011033 and European Union NextGenerationEU/PRTR and the project CuidaNSUM of the Instituto de las Mujeres. This initiative has also been partially carried out within the framework of the Recovery, Transformation and Resilience Plan funds, financed by the European Union (Next Generation) through the grant ANTICIPA (INCIBE) and the ENIA 2022 Chairs for the creation of university-industry chairs in AI-AImpulsa: UC3M-Universia. The work of Sergio Díaz-Aranda has been funded by \textit{Comunidad de Madrid} predoctoral grant PIPF-2022/COM-24467.

\bibliography{main_nsum-performance-bound}
\bibliographystyle{ACM-Reference-Format}
\balance
\newpage

\appendix
\section*{Appendices}

\allowdisplaybreaks
\section{Concentration Bounds for Negatively Correlated Random Variables}

Let us define the negative correlation of random variables (rvs).

\begin{definition}
\label{def:negcorrel}
The rvs $Z_1,Z_2,\ldots,Z_n$ are \textit{negatively correlated} if for any subset $B \subseteq \{1,2,\ldots,n\}$, it holds that $$\E\Brack{\prod_{i \in B} Z_i} \leq \prod_{i \in B} \E\Brack{Z_i}.$$
\end{definition}

\begin{definition}
    \label{def:negcyl}
    The rvs $Z_1,Z_2,\ldots,Z_n$ are said to exhibit \textit{negative cylinder dependence} \cite{garbe2018concentration} if both sets of rvs $Z_1,Z_2,\ldots,Z_n$ and $1-Z_1,1-Z_2,\ldots,1-Z_n$ are negatively correlated.
\end{definition}

We use the multiplicative error Chernoff bound for bounded rvs $0 \leq Z_i \leq 1$ with negative cylinder dependence presented in the following theorem. The proof of the theorem is a modification of the result for $0$-$1$ rvs from \cite{Panconesi1997} and is presented below for completeness.

\begin{theorem}\label{eq:chernoff-twosided}
Let $Z_1,Z_2,\ldots,Z_n$ be rvs with negative cylinder dependence, where $0\le Z_i \le 1$. Let $Z = \sum_{i=1}^n Z_i$ and let $\mu = \E[Z]$, then for $\beta > 1$, it holds that
$$\Pr\Brack{\neg\paren{\frac{\mu}{\beta}\leq Z \leq \beta\mu}} \leq \Paren{\frac{e^{\beta-1}}{\beta^\beta}}^{\mu} + \Paren{\frac{e^{\frac{1}{\beta}-1}}{\beta^{-1/\beta}}}^{\mu} = F(\beta, \mu),$$
where $\Pr[\neg A]$ is the probability of $A$ complement.
\label{thm:multiplicative-error-conc-bound}
\end{theorem}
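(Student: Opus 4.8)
The plan is to establish the two-sided Chernoff bound by splitting into the upper-tail event $\{Z > \beta\mu\}$ and the lower-tail event $\{Z < \mu/\beta\}$, bounding each separately via the standard exponential-moment (Bernstein--Chernoff) method, and then combining them with a union bound to get $F(\beta,\mu)$. The key technical input that makes this work for \emph{dependent} variables is the negative cylinder dependence hypothesis: it gives us exactly the two inequalities $\E[\prod_i e^{tZ_i}] \le \prod_i \E[e^{tZ_i}]$ for $t>0$ (using that $x\mapsto e^{tx}$ is increasing, so this follows from negative correlation of the $Z_i$ applied after a suitable monotone approximation) and $\E[\prod_i e^{-tZ_i}] \le \prod_i \E[e^{-tZ_i}]$ for $t>0$ (using negative correlation of the $1-Z_i$). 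So the proof will essentially be the classical proof with one extra line justifying that the moment-generating function factorizes as an inequality in the right direction.

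\textbf{Upper tail.} First I would fix $t>0$ and write, by Markov's inequality applied to $e^{tZ}$,
\[
\Pr[Z > \beta\mu] \le \frac{\E[e^{tZ}]}{e^{t\beta\mu}} \le \frac{\prod_{i=1}^n \E[e^{tZ_i}]}{e^{t\beta\mu}},
\]
where the second inequality is the negative-cylinder-dependence consequence for the increasing function $e^{tx}$. Then I would use the convexity bound $e^{tZ_i} \le 1 + (e^t-1)Z_i$ valid for $Z_i\in[0,1]$, so $\E[e^{tZ_i}] \le 1+(e^t-1)\E[Z_i] \le \exp\!\big((e^t-1)\E[Z_i]\big)$, giving $\prod_i \E[e^{tZ_i}] \le e^{(e^t-1)\mu}$. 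Hence $\Pr[Z>\beta\mu] \le \exp\!\big((e^t-1)\mu - t\beta\mu\big)$; optimizing over $t$ gives $t = \ln\beta$ and the bound $\big(e^{\beta-1}/\beta^\beta\big)^\mu$.

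\textbf{Lower tail.} Symmetrically, for $t>0$,
\[
\Pr[Z < \mu/\beta] = \Pr[e^{-tZ} > e^{-t\mu/\beta}] \le e^{t\mu/\beta}\,\E[e^{-tZ}] \le e^{t\mu/\beta}\prod_{i=1}^n \E[e^{-tZ_i}],
\]
where now the factorization inequality uses that $1-Z_1,\dots,1-Z_n$ are negatively correlated (equivalently, apply the negative-correlation definition to the increasing functions $x\mapsto e^{-t(1-x)}$ of the $1-Z_i$, absorbing the constant $e^{-tn}$). Again $\E[e^{-tZ_i}] \le \exp\!\big((e^{-t}-1)\E[Z_i]\big)$ by convexity, so $\Pr[Z<\mu/\beta] \le \exp\!\big((e^{-t}-1)\mu + t\mu/\beta\big)$; optimizing gives $t=\ln\beta$ and the bound $\big(e^{1/\beta-1}/\beta^{-1/\beta}\big)^\mu$. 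Adding the two tail bounds via the union bound yields $\Pr[\neg(\mu/\beta \le Z \le \beta\mu)] \le F(\beta,\mu)$, as claimed.

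\textbf{The main obstacle} is purely the bookkeeping in passing from "negative cylinder dependence" (a statement about products of the $Z_i$ themselves, per Definition~\ref{def:negcorrel}) to the needed MGF factorization $\E[\prod_i f(Z_i)] \le \prod_i \E[f(Z_i)]$ for a nonlinear monotone $f$. The clean way is to note that for $Z_i\in[0,1]$ and $t>0$ we only ever use the \emph{linear} upper bound $e^{tZ_i}\le 1+(e^t-1)Z_i$, so it suffices to factorize $\E[\prod_i(1+(e^t-1)Z_i)]$; expanding this product yields a sum of terms $\prod_{i\in B}Z_i$ with nonnegative coefficients $(e^t-1)^{|B|}$, and applying $\E[\prod_{i\in B}Z_i]\le\prod_{i\in B}\E[Z_i]$ term by term gives $\le \prod_i(1+(e^t-1)\E[Z_i])$. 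The lower tail is handled identically after substituting $Z_i \mapsto 1-Z_i$ and using the negative correlation of $\{1-Z_i\}$. This is exactly the modification of the $0$--$1$ argument of \cite{Panconesi1997} referred to in the statement, and it is routine once set up this way.
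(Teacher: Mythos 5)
Your proof is correct, and it reaches the stated bound along the same overall skeleton as the paper (exponential-moment/Chernoff argument for each tail, negative correlation of the $Z_i$ for the upper tail and of the $1-Z_i$ for the lower tail, the choice $t=\ln\beta$, then a union bound), but the key technical step---dominating $\E\brac{e^{tZ}}$ despite the dependence---is handled by a different argument. The paper follows the Panconesi--Srinivasan route: it introduces independent $0$--$1$ surrogates $\hat Z_i$ with the same means, proves $\E\brac{Z^k}\le\E\brac{\hat Z^k}$ for every $k$ (using $Z_i^{\alpha_i}\le Z_i$ on $[0,1]$ together with negative correlation), and transfers this to the moment generating function by Taylor expansion, after which the classical independent-Bernoulli computation applies. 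You instead linearize first, $e^{tZ_i}\le 1+(e^t-1)Z_i$ on $[0,1]$, expand the product (all coefficients $(e^t-1)^{|B|}\ge 0$ for $t>0$), and apply negative correlation term by term to get $\E\brac{e^{tZ}}\le\prod_i\bigl(1+(e^t-1)\E\brac{Z_i}\bigr)\le e^{(e^t-1)\mu}$ directly; the lower tail follows from the same computation applied to $1-Z_i$, which collapses back to $\E\brac{e^{-tZ}}\le e^{(e^{-t}-1)\mu}$ and hence to the same exponent the paper obtains. Your route is somewhat more economical: no auxiliary independent variables, no moment-by-moment comparison, no Taylor-expansion step. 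One caution: your opening claim that negative cylinder dependence directly yields $\E\brac{\prod_i e^{tZ_i}}\le\prod_i\E\brac{e^{tZ_i}}$ because $e^{tx}$ is increasing is not justified by Definition~\ref{def:negcorrel} (that deduction would require the stronger property of negative association); but you flag this yourself, and your linearization patch never needs that factorization---only $\E\brac{\prod_{i\in B}Z_i}\le\prod_{i\in B}\E\brac{Z_i}$---so the final argument stands.
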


\begin{proof}
Using Markov's inequality, for $t \ge 0$, we have
\begin{align}\label{eq1:thmB1}
\Pr[Z\geq \beta \mu] = \Pr\Brack{e^{tZ} \geq e^{t\beta\mu}} \leq \frac{\E\Brack{e^{tZ}}}{e^{t\beta\mu}},
\end{align}
$$\mathrm{where~}\E\Brack{e^{tZ}} = \E\Brack{e^{t\sum_{i=1}^nZ_i}} = \E\Brack{\prod_{i=1}^{n}e^{tZ_i}}.$$
We prove that $\E\Brack{\prod_{i=1}^{n}e^{tZ_i}} \leq  \prod_{i=1}^{n}\E\Brack{e^{tZ_i}}$.
Let ${\hat{Z}_1,\ldots,\hat{Z}_n}$ be independent $0$-$1$ rvs such that $\E\Brack{\hat{Z}_i} = \E\Brack{Z_i}$ for each $i = \curly{1,2\ldots,n}$, and define $\hat{Z}:= \hat{Z}_1 + \cdots + \hat{Z}_n$. 
For $k > 0$, $\mathbf{\alpha} = \{\alpha_1,\alpha_2,\ldots,\alpha_n\}$ is a decomposition such that $\sum_{i=1}^n \alpha_i = k$, where $\alpha_i \geq 0$. We have $\E\Brack{Z^k} = \sum_{\boldsymbol{\alpha}}\E\Brack{\prod_{i=1}^{n} Z_i^{\alpha_i}}.$ We claim that for any $\boldsymbol{\alpha}$, we have $\E\Brack{\prod_{i=1}^{n} Z_i^{\alpha_i}} \leq \E\Brack{ \prod_{i=1}^{n} \hat{Z}_i^{\alpha_i}}.$  
Without loss of generality, for any $\boldsymbol{\alpha}$, we have $\alpha_i > 0$, for all $i \le l$, and $\alpha_i = 0$ for $i > l$, 
for some $l \le n$. Therefore, we have
\begin{multline}
    \E\Brack{\prod_{i=1}^{n} Z_i^{\alpha_i}} = \E\Brack{\prod_{i=1}^{l} Z_i^{\alpha_i}} 
    \le \E\Brack{\prod_{i=1}^{l} Z_i} \le \prod_{i=1}^{l} \E\Brack{Z_i}
     = \prod_{i=1}^{l} \E\Brack{\hat{Z}_i} = \E\Brack{ \prod_{i=1}^{l} \hat{Z}_i^{\alpha_i}}
    = \E\Brack{ \prod_{i=1}^{n} \hat{Z}_i^{\alpha_i}}. \nonumber
\end{multline}
We used $Z_i \in [0,1]$ for the second step above, and for the third step, we used the negative correlation property. For the fifth step, we used the fact that $\hat{Z}_i$ are $0$-$1$ independent rvs. We have 
\begin{eqnarray*}
    \E\Brack{Z^k} \leq \sum_{\boldsymbol{\alpha}}\E\Brack{\prod_{i=1}^{n} Z_i^{\alpha_i}} \leq \sum_{\boldsymbol{\alpha}}\E\Brack{\prod_{i=1}^{n} \hat{Z}_i^{\alpha_i}} = \E\Brack{\hat{Z}^k}.
\end{eqnarray*}
Thus, for any $k \geq 1$, we have $\E\Brack{Z^k} \le \E\Brack{\hat{Z}^k}$. Using this result in the Taylor expansion of $e^{t\hat{Z}}$ in $\E\brac{e^{t\hat{Z}}}$, we obtain
$\E\Brack{e^{tZ}}\leq \E\Brack{e^{t\hat{Z}}}$. We now bound $\E\Brack{e^{t\hat{Z}_i}}$. Since $\hat{Z}_i$ are 0-1 rvs, we have $\Pr(\hat{Z}_i = 1) =  \E\Brack{\hat{Z}_i}$ and $\Pr(\hat{Z}_i = 0) = 1- \E\Brack{\hat{Z}_i}$. We have
\begin{align*}
\E\Brack{e^{t\hat{Z}_i}} &= e^t\E\Brack{\hat{Z}_i} + \E\Brack{\hat{Z}_i}=\E\Brack{Z_i}(e^{t}-1) + 1 \leq e^{\E\Brack{Z_i}(e^{t}-1)}.
\end{align*}
We used $1+x \leq e^x$ for all $x\in \mathbb{R}$ in the last inequality.
Hence, 
\begin{align}\label{eq2:thmB1}
\E\brac{e^{tZ}} \leq \prod_{i=1}^{n}e^{\E\Brack{Z_i}(e^t-1)} = e^{\mu(e^t-1)}.    
\end{align}
Substituting \eqref{eq2:thmB1} in \eqref{eq1:thmB1} and choosing $t = \ln \beta$, we obtain
\begin{align}\label{eq3:thmB1}
\Pr\Brack{Z\geq \beta \mu} \leq \Paren{\frac{e^{\beta-1}}{\beta^\beta}}^{\mu}.    
\end{align}

Again, from Markov's inequality, we have $$\Pr\brac{Z\le \frac{\mu}{\beta}} \leq e^{\frac{t\mu}{\beta}}\E\brac{e^{-tZ}}.$$ 
Given that variables $1-Z_i$ are negatively correlated due to the negative cylinder dependence, we can apply the steps outlined in the previous case to obtain that
$$\E\Brack{e^{t\sum_{i=1}^n(1-Z_i)}} \leq \E\Brack{e^{t\sum_{i=1}^n(1-\hat{Z}_i)}},$$
and then
\begin{align*}
\E\Brack{e^{-tZ}} = e^{-tn} \E\Brack{e^{tn-tZ}} =e^{-tn} \E\Brack{e^{t\sum_{i=1}^n(1-Z_i)}}  \leq e^{-tn} \E\Brack{e^{t\sum_{i=1}^n(1-\hat{Z}_i)}} = e^{-tn} \E\Brack{e^{tn-t\hat{Z}}} = \E\Brack{e^{-t\hat{Z}}}.    
\end{align*}
 
Following similar steps as above, we obtain
\begin{align}\label{eq4:thmB1}
    \Pr\brac{Z\le \frac{\mu}{\beta}} \leq \Paren{\frac{e^{\frac{1}{\beta}-1}}{\beta^{-1/\beta}}}^{\mu}.
\end{align}
From \eqref{eq3:thmB1} and \eqref{eq4:thmB1} and the fact that $\Pr(\neg(A \cap B)) \leq \Pr(\neg A) + \Pr(\neg B)$.
\end{proof}

The following corollary uses $\beta = 1/(1-\delta)$ in \eqref{eq4:thmB1}, for  $\delta \in (0,1)$.
\begin{corollary}\label{eq:chernoff-onesided}
Let $Z_1,Z_2,\ldots,Z_n$ be rvs with negative cylinder dependence, where $0\le Z_i \le 1$. Let $Z = \sum_{i=1}^n Z_i$ and let $\mu = \E[Z]$, then for $\delta \in (0,1)$, it holds that
$$\Pr\Brack{Z \leq (1-\delta) \mu} \leq \Paren{\frac{e^{-\delta}}{(1-\delta)^{(1-\delta)}}}^{\mu}.$$
\label{thm:lower-conc-bound}
\end{corollary}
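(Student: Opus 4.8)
The plan is to derive this one-sided bound directly from the lower-tail estimate already established inside the proof of Theorem~\ref{thm:multiplicative-error-conc-bound}, namely inequality \eqref{eq4:thmB1}, by a change of variables. No new probabilistic argument is needed, since all the heavy lifting — the moment domination of negatively-cylinder-dependent variables by independent $0$-$1$ variables, and the Markov/moment-generating-function bound on the lower tail — has already been carried out there.

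First I would recall that \eqref{eq4:thmB1} asserts, for every $\beta > 1$,
$$\Pr\Brack{Z \le \frac{\mu}{\beta}} \le \Paren{\frac{e^{\frac{1}{\beta}-1}}{\beta^{-1/\beta}}}^{\mu}.$$
Since $\delta \in (0,1)$, the value $\beta := 1/(1-\delta)$ satisfies $\beta > 1$, so the hypothesis of \eqref{eq4:thmB1} is met and the bound may be applied with this particular $\beta$. Then I would substitute: $\mu/\beta = (1-\delta)\mu$, so the event $\{Z \le \mu/\beta\}$ becomes $\{Z \le (1-\delta)\mu\}$; and $1/\beta = 1-\delta$ together with $\beta^{-1/\beta} = \bigl(1/(1-\delta)\bigr)^{-(1-\delta)} = (1-\delta)^{1-\delta}$ turns the right-hand side into $\bigl(e^{-\delta}/(1-\delta)^{(1-\delta)}\bigr)^{\mu}$. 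Combining these two simplifications yields exactly the claimed inequality.

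There is essentially no obstacle: the only things to verify are the elementary algebraic identities above and the observation that $\delta\in(0,1)$ forces $\beta>1$, which is precisely what \eqref{eq4:thmB1} requires. Consequently the corollary follows in a couple of lines once Theorem~\ref{thm:multiplicative-error-conc-bound} is in hand.
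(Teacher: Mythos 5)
Your proposal is correct and matches the paper's own argument: the paper obtains the corollary precisely by setting $\beta = 1/(1-\delta)$ in the lower-tail bound \eqref{eq4:thmB1} from the proof of Theorem~\ref{thm:multiplicative-error-conc-bound}, and your algebraic substitutions ($\mu/\beta=(1-\delta)\mu$, $e^{1/\beta-1}=e^{-\delta}$, $\beta^{-1/\beta}=(1-\delta)^{1-\delta}$) are exactly right.
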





\allowdisplaybreaks

\renewcommand{\proofinappendix}{}

\section{Proofs of Lemma \ref{lem:exp-Y-v} and Theorem \ref{thm:MoR-bound-main}}

\begin{lemma}
\label{lem:exp-X-dep}
If $v \in H$ then $\E[X_{vj} \mid v \in H] = \frac{h-1}{n-1}$. If $v \notin H$ then $\E[X_{vj} \mid v \notin H] = \frac{h}{n-1}$.
\end{lemma}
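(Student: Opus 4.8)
The plan is to reduce the statement to a direct counting argument based on the exchangeability of the in-neighbors of $v$. First I would fix a vertex $v \in V$ and an index $j$, and condition on the value $\reach{v} = r$ for an arbitrary $r \ge j$ (note that $X_{vj}$ is only defined when $j \le \reach{v}$, so conditioning on $\reach{v} \ge j$ is implicit throughout). By the construction of the random network, given $\reach{v} = r$ the set of in-neighbors of $v$ is a uniformly random $r$-element subset of $V \setminus \{v\}$, chosen independently of the membership of any vertex in $H$; moreover the labels $1,\dots,r$ attached to these neighbors may be taken to be (or may be re-randomized to be) a uniformly random ordering of the chosen set. By symmetry among the $r$ positions, the $j$-th in-neighbor of $v$ is therefore a single vertex drawn uniformly at random from $V \setminus \{v\}$, and this distribution does not depend on $r$.

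Consequently, $X_{vj} = 1$ exactly when this uniformly random vertex of $V \setminus \{v\}$ lies in $H$. If $v \in H$, then $V \setminus \{v\}$ contains exactly $h-1$ vertices of $H$ among its $n-1$ elements, so $\Pr[X_{vj} = 1 \mid \reach{v} = r,\ v \in H] = (h-1)/(n-1)$. If $v \notin H$, then all $h$ vertices of $H$ lie in $V \setminus \{v\}$, so $\Pr[X_{vj} = 1 \mid \reach{v} = r,\ v \notin H] = h/(n-1)$. Since in each case the conditional probability is a constant independent of $r$, taking the expectation over the distribution of $\reach{v}$ (restricted to $r \ge j$) leaves the value unchanged, which yields $\E[X_{vj} \mid v \in H] = (h-1)/(n-1)$ and $\E[X_{vj} \mid v \notin H] = h/(n-1)$. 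As $X_{vj}$ is a $0$-$1$ variable, its conditional expectation equals the corresponding conditional probability, completing the argument.

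The only real subtlety — and hence the step I would spell out with the most care — is the claim that the $j$-th in-neighbor is marginally uniform over $V \setminus \{v\}$ despite the neighbors being sampled without replacement. This is where I would make the \emph{exchangeability} explicit: either invoke that a uniformly random $r$-subset of $V\setminus\{v\}$ equipped with a uniformly random internal ordering has each position marginally uniform, or equivalently observe that the joint law of the pair $(\text{$j$-th in-neighbor of }v,\ \mathbf{1}[\text{it lies in } H])$ is invariant under permutations of $V \setminus \{v\}$ that preserve $H$-membership, so only the counts $|H \cap (V\setminus\{v\})|$ and $|V\setminus\{v\}|$ matter. Everything else is elementary counting, and no concentration or independence machinery is needed here.
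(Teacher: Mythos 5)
Your proposal is correct and follows essentially the same route as the paper: both arguments reduce to the observation that, by symmetry among the positions, the $j$-th in-neighbor of $v$ is uniformly distributed over $V\setminus\{v\}$, so the conditional probability of landing in $H$ is $\frac{h-1}{n-1}$ or $\frac{h}{n-1}$ according to whether $v\in H$. The paper formalizes this uniformity with a permutation count $\frac{(h-1)(n-2)!}{(n-1)!}$, whereas you make the exchangeability and the conditioning on $\reach{v}=r$ explicit, but these are the same argument in substance.
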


\begin{proof}
Let us assume $v \in H$. 
By definition of the expectation of an indicator variable, $\E[X_{vj} \mid v \in H]$ is the probability that the $j$th in-neighbor selected by $v$ belongs to $H$. This is 
\begin{equation}
    \E[X_{vj} \mid v \in H] = \frac{(h-1)(n-2)!}{(n-1)!} = \frac{h-1}{n-1},
\end{equation}
where the numerator is the possible permutations of the $n-1$ vertices in $V\setminus \{v\}$ in which the $j$th vertex is in $H$, and the denominator is the number of possible permutations of the $n-1$ vertices in $V\setminus \{v\}$.
The same argument can be applied to the case $v \notin H$ to obtain $\E[X_{vj} \mid v \notin H] = \frac{h}{n-1}$.
\end{proof}



\lemexpXYv*

\begin{proof}
\begin{align*}
\E[X_{vj}] &= \Pr[v \in H] \cdot \E[X_{vj}  \mid v \in H] +
\Pr[v \notin H] \cdot \E[X_{vj}  \mid v \notin H] \\
&= \frac{h}{n} \frac{h-1}{n-1} + \frac{n-h}{n} \frac{h}{n-1} = \frac{h}{n} \left( \frac{h-1}{n-1} + \frac{n-h}{n-1} \right) = \frac{h}{n} = \ratio{(I)}
\end{align*}
\begin{align*}
\E[Y_v] &= \sum_r \sum_c \frac{c}{r} \Pr[\cases{v}=c,\reach{v}=r] \\
&= \sum_r \frac{1}{r} \Pr[\reach{v}=r] \cdot \sum_c c \Pr[\cases{v}=c \mid \reach{v}=r] \\
&= \sum_r \frac{1}{r} \Pr[\reach{v}=r] \cdot \E[\cases{v} \mid \reach{v}=r] \\
&= \sum_r \frac{1}{r} \Pr[\reach{v}=r] \cdot \sum_{j=1}^r \E[X_{vj}] \\
&= \sum_r \frac{1}{r} \Pr[\reach{v}=r] r \ratio{(I)} = \sum_r \Pr[\reach{v}=r] \ratio{(I)} = \ratio{(I)}.
\end{align*}
\end{proof}

\begin{lemma}\label{lem:Y-negativeCorr}
Let $S \subseteq V$ be a set with size $m$ chosen uniformly at random, then the rvs $Y_v$, for all $v \in S$, are negatively correlated, i.e., $\forall S_0 \subseteq S, \E[\prod_{v \in S_0} Y_v] \leq
\prod_{v \in S_0} \E[Y_v]$.
\end{lemma}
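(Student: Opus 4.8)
The structural fact driving the proof is that, conditioned on which vertices are sampled, the variables $Y_v$ are \emph{independent}: each $Y_v$ is a function only of the in‑edges that $v$ chooses, and in the random‑network model distinct vertices choose their in‑edges independently. So all dependence among $\curly{Y_v : v\in S}$ comes from the randomness of \emph{which} vertices land in $S$ — equivalently, from whether those vertices belong to $H$ — and the plan is to isolate this into the (well‑understood) negative dependence of sampling without replacement.

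Let $k = |S_0|$. Since $S$ is a uniformly random $m$‑subset of $V$ and its elements are exchangeable, it suffices to bound $\E\Brack{\prod_{v\in S_0}Y_v}$ when $S_0$ is a uniformly random $k$‑subset of $V$ (note that then $\E[Y_v] = \ratio{(I)}$ for each $v\in S_0$ by \Cref{lem:exp-Y-v}, which is the right‑hand side of the claim). First I would prove the reduction
$$
\E\Brack{\prod_{v\in S_0} Y_v} \;=\; \frac{1}{(n-1)^k}\,\E\Brack{\prod_{v\in S_0}\paren{h - \mathbf{1}\curly{v\in H}}}.
$$
Conditioning on the set $S_0$ and using independence of in‑edge choices across vertices gives $\E\Brack{\prod_{v\in S_0}Y_v \mid S_0} = \prod_{v\in S_0}\E[Y_v\mid v]$; and by \Cref{non-zero-degree-assumption} together with the counting behind \Cref{lem:exp-X-dep} — conditioned on $R_v=r\ge 1$, the $r$ in‑neighbors of $v$ form a uniform $r$‑subset of $V\setminus\{v\}$, which contains $h-\mathbf{1}\curly{v\in H}$ hidden vertices — we get $\E[Y_v\mid v] = \E\Brack{C_v/R_v \mid v} = \tfrac{h-\mathbf{1}\curly{v\in H}}{n-1}$, independent of $R_v$. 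Averaging over $S_0$ yields the displayed identity, so the lemma becomes $\E\bigl[\prod_{v\in S_0}(h-\mathbf{1}\curly{v\in H})\bigr] \le \bigl(h\tfrac{n-1}{n}\bigr)^k$.

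For this last inequality I would induct on $k$ using the negative dependence of draws without replacement. Draw the $k$ vertices of $S_0$ one at a time, call them $v_1,\dots,v_k$, and set $t' = \sum_{i<k}\mathbf{1}\curly{v_i\in H}$. Conditioned on $v_1,\dots,v_{k-1}$, the partial product $\prod_{i<k}(h-\mathbf{1}\curly{v_i\in H}) = (h-1)^{t'}h^{\,k-1-t'}$ is a nonincreasing function of $t'$, whereas $\E\Brack{h-\mathbf{1}\curly{v_k\in H}\mid t'} = h - \tfrac{h-t'}{n-k+1}$ is nondecreasing in $t'$; two oppositely monotone functions of the same random variable have nonpositive covariance (Chebyshev's sum inequality), so, using that a single draw without replacement is marginally uniform ($\E[\mathbf{1}\curly{v_k\in H}]=h/n$),
\begin{align*}
\E\Brack{\prod_{i\le k}(h-\mathbf{1}\curly{v_i\in H})}
&\le \E\Brack{\prod_{i<k}(h-\mathbf{1}\curly{v_i\in H})}\cdot \E\Brack{h-\mathbf{1}\curly{v_k\in H}} \\
&= \E\Brack{\prod_{i<k}(h-\mathbf{1}\curly{v_i\in H})}\cdot h\,\tfrac{n-1}{n}.
\end{align*}
The induction (base case $k=1$, where equality holds) then gives the bound; dividing by $(n-1)^k$ finishes the lemma. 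An equally short alternative is to note that the indicators $(\mathbf{1}\curly{v\in H})_{v\in S_0}$ are negatively associated and apply the product inequality $\E[\prod_i f_i]\le\prod_i\E[f_i]$ for monotone nonnegative $f_i$ to $f_i = h-\mathbf{1}\curly{v_i\in H}$.

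I expect the one point needing care to be the reduction in the middle paragraph — making explicit that conditioning on the \emph{sampled vertices but not on their incident edges} makes the $Y_v$ conditionally independent, and that $\E[Y_v\mid v]$ depends on $v$ only through its membership in $H$. Everything after that is the elementary negative‑dependence estimate above; the case $h=0$ is trivial since then $Y_v\equiv 0$.
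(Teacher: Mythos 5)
Your proof is correct, and its first half coincides with the paper's: both arguments reduce the problem, via conditional independence of the $Y_v$ given the identities (hence the $H$-memberships) of the sampled vertices, to bounding $\E\bigl[\bigl(\tfrac{h-1}{n-1}\bigr)^{Z}\bigl(\tfrac{h}{n-1}\bigr)^{m_0-Z}\bigr]$, where $Z=|S_0\cap H|$ is hypergeometric. Where you genuinely diverge is in how this expectation is controlled. The paper rewrites it as $\tfrac{h^{m_0}}{(n-1)^{m_0}}\,\E\bigl[e^{tZ}\bigr]$ with $t=\ln(1-h^{-1})$ and invokes Hoeffding's comparison between sampling without and with replacement for convex functions, so the hypergeometric moment generating function is dominated by the binomial one, which evaluates exactly to $\bigl(\tfrac{n-1}{n}\bigr)^{m_0}$. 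You instead prove $\E\bigl[\prod_{i\le k}(h-\mathbf{1}\{v_i\in H\})\bigr]\le\bigl(h\tfrac{n-1}{n}\bigr)^{k}$ directly by induction over sequential draws without replacement, observing that the partial product is nonincreasing and the conditional mean of the next factor is nondecreasing in the running count $t'$, so Chebyshev's covariance inequality for oppositely monotone functions of $t'$ peels off one factor at a time (with the marginal of a single draw being uniform). Both routes are sound and the bookkeeping in yours checks out, including the $h=0$ degeneracy and the fact that $\E[Y_v\mid v]$ does not depend on $R_v$ under Assumption~\ref{non-zero-degree-assumption}; your approach buys a self-contained, elementary argument that makes the negative-dependence mechanism explicit (and is essentially a hands-on proof of negative association of without-replacement indicators in the form needed), while the paper's MGF route is a two-line computation once Hoeffding's theorem is granted. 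You are also more explicit than the paper about the conditional-independence step (the paper states tersely that the $Y_v$ ``only depend on'' membership in $H$ before factorizing the conditional expectation), which is the point you correctly flag as needing care.
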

\begin{proof}
Let $S_0 \subseteq S$ be a set of nodes. The variables $Y_v$, for $v \in S_0$, only depend on whether $v$ belongs to $H$, and they are mutually independent. Then, given a particular configuration $A=(S_0 \cap H, S_0 \cap \Bar{H})$ of hidden-population node memberships, the expectation of the product of the variables $Y_v$, conditional on this configuration, is equal to
\begin{eqnarray*}
\E \left[ \prod_{v \in S_0} Y_v \mid A \right] &=&
        \prod_{v\in S_0 \cap H}E[Y_v|A] \prod_{v \in S_0 \cap \Bar{H}} E[Y_v|A] \\
         &=& 
        \prod_{v\in S_0 \cap H}E[Y_v|v\in H] \prod_{v \in S_0 \cap \Bar{H}} E[Y_v|v\not\in H] \\
        &=& \left(\frac{h-1}{n-1}\right)^{|S_0 \cap H |} \left(\frac{h}{n-1}\right)^{|S_0 \cap \Bar{H} |}. \nonumber
\end{eqnarray*}

Let $Z = |S_0 \cap H|$ and $m_0 = |S_0|$. Since we do not know if the nodes are in H, $Z$ is a hypergeometric random variable with probability mass function $\Pr[Z=z] = \frac{\binom{h}{z} \binom{n-h}{m_0-z}}{\binom{n}{m_0}}$  and $\E[Z] = \frac{m_0h}{n}$. Let $\lbrace{ A_i \rbrace}$ be the partition of the sample space composed of all the individual cases in which $Z=z$. Note that $\sum_i \Pr[A_i]= \Pr[Z=z]$. We can apply the Law of total expectation to $\prod_{v \in S_0} Y_v\mid Z=z$ over the sample space restricted $Z=z$. Then, 
\begin{align*}
    \E \left[ \prod_{v \in S_0} Y_v \mid Z=z \right] & = \sum_{i} \E \left[ \prod_{v \in S_0} Y_v \mid A_i, Z=z\right] \Pr[A_i\mid Z=z]\\
    & = \sum_{i} \E \left[ \prod_{v \in S_0} Y_v \mid A_i\right] \Pr[A_i\mid Z=z]\\
    & = \sum_i \left(\frac{h-1}{n-1}\right)^z \left(\frac{h}{n-1}\right)^{m_0-z} \Pr[A_i\mid Z=z]\\
    & = \sum_i \left(\frac{h-1}{n-1}\right)^z \left(\frac{h}{n-1}\right)^{m_0-z} \frac{\Pr[A_i]}{\Pr[Z=z]}\\
    & = \left(\frac{h-1}{n-1}\right)^z \left(\frac{h}{n-1}\right)^{m_0-z}.
\end{align*}
The unconditional expectation of the product is
\begin{align}\label{eq1:negcorr}
    \E\left[\prod_{v \in S_0} Y_v \right]
    &= \sum_{z=0}^{m_0} \E\left[\prod_{v \in S_0} Y_v \mid Z=z \right]\Pr[Z=z] \nonumber\\
    &= \sum_{z=0}^{m_0} \left(\frac{h-1}{n-1}\right)^z \left(\frac{h}{n-1}\right)^{m_0-z}\Pr[Z=z] \nonumber\\
    &= \frac{h^{m_0}}{(n-1)^{m_0}} \sum_{z=0}^{m_0} (1-h^{-1})^z \Pr[Z=z]  \nonumber\\
    &= \frac{h^{m_0}}{(n-1)^{m_0}} \E[e^{\ln (1-h^{-1})Z}].
\end{align}
$\E[e^{tZ}]$ is the moment generating function (MGF) of $Z$. We relate the MGF of $Z$ with the MGF of a binomial random variable $X$ with parameters $(m_0,\frac{h}{n})$. Note that $\E[X] = \E[Z]$. We state the claim.

    \begin{claim}
        MGF of $X$ is at least the MGF of $Z$, i.e., $\E[e^{tX}] \geq \E[e^{tZ}]$, for all $t \in \mathbb{R}$.
    \end{claim}
    \begin{proof}
        By Hoeffding \cite{hoeffding1994probability}, $\E[f(\sum Z_i)] \geq \E[f(\sum X_i)]$, where $Z_i$ are random samples without replacement from a finite population $C$, $X_i$ are random samples with replacement from $C$, and $f$ is a convex function. The claim follows from the fact that $X = \sum_{i=1}^m X_i$, with $X_i \sim Bernoulli(\frac{h}{n})$, $Z=\sum_{i=1}^m Z_i$, where $Z_i$ are variables without replacement indicating the membership in $H$ of the ith draw, and $x \mapsto exp(tx)$ is a convex function. 
    \end{proof}

The MGF of the binomial random variable $X$ is given by 
$$\E[e^{tX}] = \left(1-\frac{h}{n}+\frac{he^t}{n}\right)^m.$$ 
From the claim, we have
\begin{align}\label{eq2:negcorr}
    \E[e^{(1-h^{-1})Z}] &\leq \E[e^{\ln (1-h^{-1})X}] \nonumber = \left(1-\frac{h}{n}+\frac{he^{\ln (1-h^{-1})}}{n}\right)^m_0 \nonumber\\
    &= \left(1-\frac{h}{n}+\frac{h}{n}(1-h^{-1})\right)^m_0 \nonumber = \frac{(n-1)^{m_0}}{n^{m_0}}.
\end{align}
Using this in \eqref{eq1:negcorr}, we obtain
\begin{align*}
        \E\left[\prod_{v \in S_0} Y_v \right] \leq & \frac{h^{m_0}}{(n-1)^{m_0}} \frac{(n-1)^{m_0}}{n^{m_0}} = \left(\frac{h}{n}\right)^{m_0} = \prod_{v \in S_0} \E[Y_v].
\end{align*} 
\end{proof}

\begin{lemma}\label{lem:Y-negcyl}
  Let $S \subseteq V$ be a set with size $m$ chosen uniformly at random, then the rvs $Y_v$, for all $v \in S$ exhibit negative cylinder dependence, i.e., $\forall S_0 \subseteq S$
\begin{enumerate}
    \item[\text{(i)}] $\E[\prod_{v \in S_0} Y_v] \leq \prod_{v \in S_0} \E[Y_v]$ 
    \item[\text{(ii)}] $\E[\prod_{v \in S_0} (1-Y_v)] \leq 
\prod_{v \in S_0} \E[1-Y_v]$
\end{enumerate}
\end{lemma}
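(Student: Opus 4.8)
The plan is to note first that part (i) is exactly \Cref{lem:Y-negativeCorr}, already proved, so the only work lies in part (ii) --- the negative correlation of the complements $1-Y_v$ --- and I would obtain it by rerunning the proof of \Cref{lem:Y-negativeCorr} with $Y_v$ replaced by $1-Y_v$ throughout. Fix $S_0\subseteq S$ with $|S_0|=m_0$. Conditioning on the ``membership pattern'' of $S_0$ (which of its nodes lie in $H$) makes the variables $\{Y_v:v\in S_0\}$, and hence $\{1-Y_v:v\in S_0\}$, independent, since distinct vertices choose their in-neighbors independently and a vertex's conditional distribution depends only on its own membership. Using $\E[Y_v\mid v\in H]=\tfrac{h-1}{n-1}$ and $\E[Y_v\mid v\notin H]=\tfrac{h}{n-1}$ from \Cref{lem:exp-X-dep}, I get $\E[1-Y_v\mid v\in H]=\tfrac{n-h}{n-1}$ and $\E[1-Y_v\mid v\notin H]=\tfrac{n-1-h}{n-1}$, so that, writing $Z=|S_0\cap H|$ and arguing via the law of total expectation over the individual cases inside $\{Z=z\}$ exactly as in \Cref{lem:Y-negativeCorr},
$$\E\!\left[\prod_{v\in S_0}(1-Y_v)\;\middle|\;Z=z\right]=\left(\frac{n-h}{n-1}\right)^{z}\left(\frac{n-1-h}{n-1}\right)^{m_0-z}.$$

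Next I would average over $Z$, which is hypergeometric with $\E[Z]=m_0h/n$. Setting $t=\ln\frac{n-h}{n-1-h}\ge 0$, the average becomes $\frac{(n-1-h)^{m_0}}{(n-1)^{m_0}}\,\E[e^{tZ}]$, the moment generating function of $Z$. I would then apply the same sampling-without-versus-with-replacement comparison used inside \Cref{lem:Y-negativeCorr} (Hoeffding's inequality, which needs only convexity of $x\mapsto e^{tx}$ and is therefore insensitive to the sign of $t$) to bound $\E[e^{tZ}]\le\E[e^{tX}]$ with $X\sim\mathrm{Binomial}(m_0,h/n)$, compute $\E[e^{tX}]=\big(1-\tfrac hn+\tfrac hn e^{t}\big)^{m_0}$, substitute $e^{t}=\frac{n-h}{n-1-h}$, and simplify; the resulting factor cancels the prefactor and yields $\E[\prod_{v\in S_0}(1-Y_v)]\le\big(\tfrac{n-h}{n}\big)^{m_0}=\prod_{v\in S_0}\E[1-Y_v]$, the last equality being $\E[1-Y_v]=1-h/n$ from \Cref{lem:exp-Y-v}.

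I expect this lemma to be essentially a corollary of \Cref{lem:Y-negativeCorr}: the conditional-independence step and the hypergeometric moment-generating-function bound transfer verbatim, so the only new content is the short cancellation computation above. The one place that needs separate care is the degenerate regime $h\ge n-1$, where the factor $n-1-h$ vanishes and the substitution $e^t=\frac{n-h}{n-1-h}$ is undefined; there I would argue directly that any non-hidden $v\in S_0$ has all of its in-neighbors in $H$, so $Y_v=1$ and $1-Y_v=0$, whence $\prod_{v\in S_0}(1-Y_v)=0$ unless $S_0\subseteq H$, and in that remaining sub-case the expectation is checked by hand to be at most $(1-h/n)^{m_0}$. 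That small edge case is the only mild obstacle; otherwise the argument is routine once \Cref{lem:Y-negativeCorr} is in hand.
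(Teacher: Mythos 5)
Your proof is correct, and part (i) is indeed just \Cref{lem:Y-negativeCorr}; where you diverge from the paper is in how part (ii) is obtained. The paper disposes of (ii) with a two-line symmetry argument: the variables $1-Y_v$ are exactly the $Y'_v$ of the complemented instance $(G,\bar H)$ (same degree distribution, $R'_v=R_v$, $C'_v=R_v-C_v$), so \Cref{lem:Y-negativeCorr} applied to that instance, whose prevalence is $(n-h)/n$, gives $\E[\prod_{v\in S_0}(1-Y_v)]\le\prod_{v\in S_0}\E[1-Y_v]$ directly, with no new computation. You instead rerun the whole machinery of \Cref{lem:Y-negativeCorr} on $1-Y_v$: conditional independence given the membership pattern, the complementary conditional means $\tfrac{n-h}{n-1}$ and $\tfrac{n-1-h}{n-1}$, the hypergeometric-versus-binomial MGF comparison with $e^t=\tfrac{n-h}{n-1-h}$, and the cancellation down to $\left(\tfrac{n-h}{n}\right)^{m_0}$ — which I have checked and which is right, and your remark that Hoeffding's comparison only needs convexity of $x\mapsto e^{tx}$ (so the sign of $t$ is irrelevant) is the correct justification for reusing that step. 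What the paper's route buys is brevity and the avoidance of the degenerate substitution you must patch at $h=n-1$ (and trivially $h=n$); what your route buys is a fully explicit, self-contained derivation that does not require setting up the complemented instance. Your edge-case handling is essentially fine, though stated loosely: for $h=n-1$ the unconditional expectation is $\left(\tfrac{1}{n-1}\right)^{m_0}\Pr[Z=m_0]=\left(\tfrac{1}{n-1}\right)^{m_0}\tfrac{n-m_0}{n}$, and the needed inequality against $\left(\tfrac1n\right)^{m_0}$ follows from $\left(1-\tfrac1n\right)^{m_0}\ge 1-\tfrac{m_0}{n}$, so it is worth writing that one line rather than saying it is checked by hand.
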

\begin{proof}
    The first part is proven in Lemma \ref{lem:Y-negativeCorr}. For the negative correlation of the $1-Y_v$, we observe that the variables $1-Y_v$ correspond to the variables $Y_v'$ from a random network obtained by changing the hidden population to its complement. That is, an instance $(G',H')$, with degree distribution $P'_\text{deg}$, and variables $R'_v$ and $C'_v$, where $G'=G$, $H'=\Bar{H}$, $P'_\text{deg}= P_\text{deg}$, $R'_v = R_v$ and $C'_v = R_v-C_v$. Thus, applying Lemma \ref{lem:Y-negativeCorr} to the variables of this random network, we conclude the proof.
\end{proof}

\thmMoRboundmain*

\begin{proof}
Given $\E[Y_v] = \rho(I)$ and Lemma \ref{lem:Y-negcyl}, we apply \Cref{thm:multiplicative-error-conc-bound} to the rv 
$Y_v$, and $Y_S=\sum_{v \in S} Y_v$ with $\mu=\E[Y_S]=m \ratio{(I)}$.
\begin{align*}
    \Pr\Brack{\error{\MoR}{(I)}{} > \beta} &= \Pr\Brack{\max\left(\frac{\ratio{(I)}}{\estimate{\MoR}{(I)}},\frac{\estimate{\MoR}{(I)}}{\ratio{(I)}}\right) > \beta}\\
    &= \Pr\Brack{\neg\paren{\frac{\ratio{(I)}}{\beta}\leq \estimate{\MoR}{(I)} \leq \ratio{(I)}\beta}} \\
    &= \Pr\Brack{\neg\paren{\frac{m \ratio{(I)}}{\beta}\leq Y_S \leq m \ratio{(I)}\beta}} \\
    &\leq \left(\frac{e^{\beta-1}}{\beta^{\beta}}\right)^{m \ratio{(I)}} + \left(\frac{e^{\frac{1}{\beta}-1}}{\beta^{-1/\beta}}\right)^{m \ratio{(I)}} \\
    & = F(\beta, m \ratio{(I)}).
\end{align*}
\end{proof}

\section{Proof of Theorem
\ref{thm:RoS-bound-main}}

\begin{lemma}\label{lem:X-negativeCorr}
Let $S \subseteq V$ be a set with size $m$ chosen uniformly at random, then for all $v \in S$ and $j \in \{1,2,\ldots,R_v\}$, the rvs $X_{vj}$ are negatively correlated, i.e., for all $S_0 \subseteq S$ and $B_v\subseteq \{1,\cdots,R_v\}$ with $v\in S_0$, 
$\E\Brack{\prod_{v \in S_0} \prod_{j=\in B_v} X_{vj}} \leq \prod_{v \in S_0} \prod_{B_v} \E[X_{vj}]$.
\end{lemma}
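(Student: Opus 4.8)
The plan is to mirror the proof of Lemma~\ref{lem:Y-negativeCorr}, adding one extra ingredient to cope with the fact that different sampled vertices carry different numbers of the selected neighbor-slots. Fix the degree sequence (so the sets $B_v$ and the sizes $k_v:=|B_v|$ are determined), assume $\rho(I)<1$ (otherwise the statement is trivial), and write $T_v:=\prod_{j\in B_v}X_{vj}$, $m_0:=|S_0|$, $K:=\sum_{v\in S_0}k_v$; by Lemma~\ref{lem:exp-Y-v} the right-hand side of the claim equals $(h/n)^{K}$. The first, single-vertex, ingredient: conditioning on whether $v\in H$, the in-neighbors of $v$ occupying the positions of $B_v$ form a sample without replacement from $V\setminus\{v\}$, which contains $h-\mathbf{1}[v\in H]$ members of $H$; hence, writing $p^{\mathrm{in}}_k=\prod_{i=0}^{k-1}\frac{h-1-i}{n-1-i}$ and $p^{\mathrm{out}}_k=\prod_{i=0}^{k-1}\frac{h-i}{n-1-i}$, we get $\E[T_v\mid v\in H]=p^{\mathrm{in}}_{k_v}$, $\E[T_v\mid v\notin H]=p^{\mathrm{out}}_{k_v}$, and, using $p^{\mathrm{in}}_k=\tfrac{h-k}{h}\,p^{\mathrm{out}}_k$ together with $\Pr[v\in H]=h/n$, $\ \E[T_v]=\prod_{i=0}^{k_v-1}\frac{h-i}{n-i}\le(h/n)^{k_v}$. (If some $k_v>h$ then $T_v\equiv 0$ and the claim is immediate, so we may assume $k_v\le h$ throughout, which makes every $p^{\mathrm{out}}_{k_v}$ strictly positive.)

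Next, the multi-vertex step. Conditioning on which vertices of $S_0$ lie in $H$, the in-neighbor selections of distinct vertices are independent, so $\E\!\left[\prod_{v\in S_0}T_v\mid\text{memberships}\right]=\prod_{v\in S_0\cap H}p^{\mathrm{in}}_{k_v}\prod_{v\in S_0\setminus H}p^{\mathrm{out}}_{k_v}$. Averaging over the memberships — whose joint law is exchangeable and depends only on $|S_0\cap H|$ — and expanding $\prod_{v\in S_0}\E[T_v]=\prod_{v\in S_0}\big(\tfrac{h}{n}p^{\mathrm{in}}_{k_v}+(1-\tfrac{h}{n})p^{\mathrm{out}}_{k_v}\big)$, one gets $\E\!\left[\prod_{v\in S_0}T_v\right]-\prod_{v\in S_0}\E[T_v]=\E[\psi(Z)]-\E[\psi(\widetilde Z)]$, where $Z=|S_0\cap H|$ has the hypergeometric (sampling-without-replacement) distribution, $\widetilde Z\sim\mathrm{Binomial}(m_0,h/n)$ is the corresponding with-replacement count, and $\psi(j):=\binom{m_0}{j}^{-1}\sum_{|A|=j,\,A\subseteq S_0}\prod_{v\in A}p^{\mathrm{in}}_{k_v}\prod_{v\notin A}p^{\mathrm{out}}_{k_v}$. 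Since $p^{\mathrm{in}}_k/p^{\mathrm{out}}_k=1-k/h\in[0,1]$, $\psi$ equals the positive constant $\prod_{v\in S_0}p^{\mathrm{out}}_{k_v}$ times $g(j):=\binom{m_0}{j}^{-1}\sum_{|A|=j}\prod_{v\in A}(1-k_v/h)$, the average of the product of $j$ of the numbers $\{1-k_v/h\}_{v\in S_0}$ drawn uniformly without replacement. If $g$ is \emph{convex} on $\{0,1,\dots,m_0\}$, then Hoeffding's comparison of sampling with versus without replacement (\cite{hoeffding1994probability}, exactly the tool used in the proof of Lemma~\ref{lem:Y-negativeCorr}) yields $\E[\psi(Z)]\le\E[\psi(\widetilde Z)]$, hence $\E\!\left[\prod_{v\in S_0}T_v\right]\le\prod_{v\in S_0}\E[T_v]\le(h/n)^{K}$, which is the claim. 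To prove the convexity of $g$, I would introduce independent $\xi_v\sim\mathrm{Bernoulli}(1-k_v/h)$ and set $F:=\#\{v\in S_0:\xi_v=0\}$; then $\prod_{v\in A}(1-k_v/h)=\Pr_\xi[\xi_v=1\ \forall v\in A]$, so $g(j)=\E_\xi\!\big[\binom{m_0-F}{j}/\binom{m_0}{j}\big]=\E_\xi\!\big[\binom{m_0-j}{F}/\binom{m_0}{F}\big]$, and for each fixed $f$ the map $j\mapsto\binom{m_0-j}{f}$ has nonnegative discrete second difference $\binom{m_0-j-1}{f-2}$ and is therefore convex; $g$ is a convex combination of such maps.

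The main obstacle is precisely what this detour circumvents: in Lemma~\ref{lem:Y-negativeCorr} every sampled vertex contributes the \emph{same} factor, so the product (restricted to a membership class) collapses to a single power of $|S_0\cap H|$ and Hoeffding's inequality applies immediately, whereas here the heterogeneous weights $k_v=|B_v|$ destroy that collapse and one must first isolate and establish the convexity of the symmetric function $g$. (An alternative that avoids the convexity computation is to observe that the membership indicators of a without-replacement sample are negatively associated and that each $T_v$ is a monotone function of the single corresponding indicator, so the across-vertex inequality follows directly; I would avoid this only because negative association is not used elsewhere in the paper.) Finally, re-running the entire argument with $H$ replaced by $V\setminus H$ — equivalently replacing $\cases{v}$ by $\reach{v}-\cases{v}$, i.e.\ each $X_{vj}$ by $1-X_{vj}$, which again describes a random network in the paper's sense — shows that the $1-X_{vj}$ are negatively correlated as well, so $\{X_{vj}\}$ has the negative cylinder dependence needed to invoke Theorem~\ref{thm:multiplicative-error-conc-bound} in the proof of Theorem~\ref{thm:RoS-bound-main}.
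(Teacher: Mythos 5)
Your argument is correct, but it takes a genuinely different route from the paper's proof of Lemma~\ref{lem:X-negativeCorr}. The paper conditions on the full membership pattern of $S_0$ in $H$, uses conditional independence of the neighbor draws across vertices, and then bounds every conditional per-vertex factor by $\rho(I)^{R_v}$, so it never needs a with/without-replacement comparison. You instead compute the conditional factors exactly, observe that the \emph{unconditional} per-vertex expectation is $\prod_{i=0}^{k_v-1}\frac{h-i}{n-i}\le\rho(I)^{k_v}$, and then handle the cross-vertex dependence by reducing it to the hypergeometric count $Z=|S_0\cap H|$ and applying Hoeffding's with/without-replacement comparison, which is exactly the tool of Lemma~\ref{lem:Y-negativeCorr} but now requires your discrete-convexity argument for the symmetric average $g$ because the factors $1-k_v/h$ are heterogeneous. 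Your extra machinery buys something real: the paper's per-vertex shortcut is immediate for $v\in H$ but delicate for $v\notin H$, since the exact conditional expectation there is $\prod_{i=0}^{k_v-1}\frac{h-i}{n-1-i}$, which \emph{exceeds} $\rho(I)^{k_v}$ for small $k_v$ (e.g.\ $k_v=1$ gives $h/(n-1)>h/n$); by averaging over membership before comparing with $\rho(I)^{k_v}$, your proof does not rely on that termwise bound, and as a bonus it treats arbitrary index sets $B_v$, which the paper handles only ``for notational simplicity'' with $B_v=\{1,\dots,R_v\}$. Two small slips, neither affecting correctness: the discrete second difference of $j\mapsto\binom{m_0-j}{f}$ is $\binom{m_0-j-2}{f-2}$ rather than $\binom{m_0-j-1}{f-2}$, and when invoking Hoeffding you should note that a discretely convex $g$ extends piecewise-linearly to a convex function on $\mathbb{R}$ agreeing with $g$ at the integers, which suffices because $Z$ and $\tilde Z$ are integer-valued; your closing reduction for $1-X_{vj}$ via the complementary hidden population coincides with the paper's Lemma~\ref{lem:X-negcyl}.
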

\begin{proof}
Let $S_0 \subseteq S$ be a set of nodes, and $m_0$ the size of $S_0$. For notational simplicity, we consider only the case in which each index set takes the form $B_v =\{ 1,\cdots, R_v\}$. Observe that the variables for different nodes $X_{vj}$ may be dependent, but are independent if the membership in $H$ is known. The conditional expectation of the product $\prod_{v \in S_0} \prod_{j=1}^{\reach{v}} X_{vj}$ is equal to
\begin{equation}
    \left( \prod_{v \in S_0 \cap {H}} \E \left[ \prod_{j=1}^{\reach{v}} X_{vj} \mid v\in H \right] \right) \left( \prod_{v \in S_0 \cap \Bar{H}} \E \left[ \prod_{j=1}^{\reach{v}} X_{vj} \mid v\not\in H \right] \right) .
\end{equation}
We also have that
\begin{equation*}
    \E \left[ \prod_{j=1}^{\reach{v}} X_{vj} \mid v\in H \right] = \Pr \left[ \prod_{j=1}^{\reach{v}} X_{vj} = 1 \mid v\in H \right] =  \prod_{j=1}^{\reach{v}} \frac{h-j}{n-j} \leq \ratio{(I)}^{\Reach{v}}, \nonumber
\end{equation*}
and
\begin{equation*}
    \E \left[ \prod_{j=1}^{\reach{v}} X_{vj} \mid v\not\in H \right] = \Pr \left[ \prod_{j=1}^{\reach{v}} X_{vj} = 1 \mid v\not\in H \right] =  \prod_{j=1}^{\reach{v}} \frac{h-j-1}{n-j} \leq \ratio{(I)}^{\Reach{v}}. \nonumber
\end{equation*}
Let $\lbrace{B_i \rbrace}$ be a partition of the sample space composed of all the possible combinations of nodes in $H$. Then,
\begin{align*}
     \E\left[ \prod_{v \in S_0} \prod_{j=1}^{\reach{v}} X_{vj} \right] & = \sum_{i}  \E\left[ \prod_{v \in S_0} \prod_{j=1}^{\reach{v}} X_{vj} \mid B_i \right] \Pr[B_i]\\
     & = \sum_{i} \prod_{v \in S_0} \E\left[ \prod_{j=1}^{\reach{v}} X_{vj} \mid B_i \right] \Pr[B_i]\\
     & \leq  \sum_{i} \prod_{v \in S_0} \ratio{(I)}^{R_v} \Pr[B_i]\\
     & = \prod_{v \in S_0} \ratio{(I)}^{R_v}  \\
     & = \ratio{(I)}^{\sum_{v\in S_0}R_v} \\
     & = \prod_{v \in S_0} \prod_{j=1}^{\reach{v}} \E[X_{vj}].
\end{align*}
\end{proof}

\begin{lemma}\label{lem:X-negcyl}
Let $S \subseteq V$ be a set with size $m$ chosen uniformly at random, then for all $v \in S$ and $j \in \{1,2,\ldots,R_v\}$, the random variables $X_{vj}$ have negative cylinder dependence, i.e., for all $S_0 \subseteq S,$ and $B_v\subseteq \{1,\cdots,R_v\}$ with $v\in S_0$
\begin{enumerate}
    \item[\text{(i)}] $\E\Brack{\prod_{v \in S_0} \prod_{j\in B_v} X_{vj}}  \leq \prod_{v \in S_0} \prod_{j\in B_v} \E[X_{vj}]$ 
    \item[\text{(ii)}] $\E\Brack{\prod_{v \in S_0} \prod_{j\in B_v} (1-X_{vj})}  \leq \prod_{v \in S_0} \prod_{j \in B_v} \E[1-X_{vj}]$ 
\end{enumerate}

\end{lemma}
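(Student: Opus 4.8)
The plan is to reduce part (ii) to part (i) by passing to a \emph{complemented} random network, exactly as was done for the $\MoR$ variables in \Cref{lem:Y-negcyl}. Part (i) is already \Cref{lem:X-negativeCorr}, so the only work left is to establish the negative correlation of the variables $1-X_{vj}$, and together the two parts give negative cylinder dependence.

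First I would observe that the construction of the random network $G$ --- drawing each degree $R_v$ from $P_\text{deg}$ and then picking $R_v$ in-neighbors of $v$ uniformly at random from $V\setminus\{v\}$ --- makes no reference to the hidden population $H$. Hence one can keep the very same probability space and merely relabel which vertices are ``hidden'': define the instance $I'=(G',H')$ with $G'=G$, $H'=\bar H=V\setminus H$, $P'_\text{deg}=P_\text{deg}$, $R'_v=R_v$, and with each vertex's list of in-neighbors identical to the one it has in $I$. By construction $I'$ is again a random network of the form covered by our model (now with $h'=n-h$ hidden vertices), and for every $v\in S$ and every $j\in\{1,\dots,R_v\}$ the indicator $X'_{vj}$ that the $j$-th in-neighbor of $v$ lies in $H'$ satisfies $X'_{vj}=1-X_{vj}$ pointwise, since lying in $\bar H$ is precisely the complement of lying in $H$.

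Then I would simply invoke \Cref{lem:X-negativeCorr} applied to the instance $I'$: for every $S_0\subseteq S$ and every choice of index sets $B_v\subseteq\{1,\dots,R_v\}$,
\[
\E\!\left[\prod_{v\in S_0}\prod_{j\in B_v}(1-X_{vj})\right]
=\E\!\left[\prod_{v\in S_0}\prod_{j\in B_v}X'_{vj}\right]
\le\prod_{v\in S_0}\prod_{j\in B_v}\E[X'_{vj}]
=\prod_{v\in S_0}\prod_{j\in B_v}\E[1-X_{vj}],
\]
which is exactly statement (ii); statement (i) is \Cref{lem:X-negativeCorr} itself, so both halves of the definition of negative cylinder dependence hold.

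The point requiring a little care --- rather than a genuine obstacle --- is justifying that the complemented instance really does fall under the random-network model and that the coupling $X'_{vj}=1-X_{vj}$ is legitimate; both are immediate once one notes that the in-neighbor selection is carried out before, and independently of, any reference to $H$. The degenerate cases ($h\in\{0,n\}$, or a vertex of degree $0$) are either excluded by \Cref{non-zero-degree-assumption} or make the stated inequalities trivial, so they can be dismissed in a single line.
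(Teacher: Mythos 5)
Your proposal is correct and follows essentially the same route as the paper: part (i) is taken directly from \Cref{lem:X-negativeCorr}, and part (ii) is obtained by viewing $1-X_{vj}$ as the indicator variables $X'_{vj}$ of the complemented instance $(G,\bar H)$ with the same degrees and in-neighbor selections, then applying \Cref{lem:X-negativeCorr} to that instance. Your added remarks on why the coupling is legitimate (the in-neighbor selection is independent of $H$) only make explicit what the paper leaves implicit.
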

\begin{proof}
    (i) is a consequence of Lemma \ref{lem:X-negativeCorr}. To prove (ii), we see that the variables $X'_{vj}= 1-X_{vj}$ are the variables of a random network constructed from an instance $(G,\Bar{H})$, degree distribution $P_\text{deg}$, and variables $R'_v =Rv$ and $C'_v = R_v-C_v$. 
\end{proof}

\thmRoSboundmain*

\begin{proof}
By definition, $\estimate{\RoS}{(I)}=\Cases{S} / \Reach{S}$ and $\Cases{S}= \sum_{v \in S} \cases{v}= \sum_{v \in S} \sum_{j=1}^{\reach{v}} X_{vj}$, and hence $\E[\Cases{S}]=\Reach{S} \ratio{(I)}$. Given Lemma \ref{lem:X-negcyl}, we apply \Cref{thm:multiplicative-error-conc-bound} to the random variables $\Cases{S}$.
\begin{align*}
    \Pr\Brack{\error{\RoS}{(I,S)}{} > \beta}  &=  \sum_{\Reach{}} \Pr\Brack{\error{\RoS}{(I)}{} > \beta \mid \Reach{S} = \Reach{}} \Pr\Brack{\Reach{S}=\Reach{}} \\ \nonumber
    & = \sum_{\Reach{}} \Pr\Brack{\neg\paren{\frac{\ratio{(I)}}{\beta}\leq \estimate{\RoS}{(I)} \leq \ratio{(I)}\beta} \mid \Reach{S}=\Reach{}} \Pr\Brack{\Reach{S}=\Reach{}}  \\
    & = \sum_{\Reach{}} \Pr\Brack{\neg\paren{\frac{\Reach{} \ratio{(I)}}{\beta}\leq \Cases{S} \leq \Reach{} \ratio{(I)} \beta} \mid \Reach{S}=\Reach{}} \Pr\Brack{\Reach{S}=\Reach{}} \\
    & \leq \sum_{\Reach{}} \left( \left(\frac{e^{\beta-1}}{\beta^{\beta}}\right)^{\Reach{} \ratio{(I)}} + \left(\frac{e^{\frac{1}{\beta}-1}}{\beta^{-1/\beta}}\right)^{\Reach{} \ratio{(I)}} \right) \Pr\Brack{\Reach{S}=\Reach{}}\\
    & = \sum_{\Reach{}} 
    F(\beta, \Reach{} \ratio{(I)})
    \Pr\Brack{\Reach{S}=\Reach{}}.
\end{align*}
\end{proof}

\end{document}